\definecolor{darkgreen}{rgb}{0.2, 0.6, 0.05}
\newcounter{cremark}
\newtheorem{remark}[cremark]{Remark}
\newcounter{clemma}
\newtheorem{lemma}[clemma]{Lemma}
\newcommand{\Minv}{\boldsymbol{\Theta}}
\newcommand{\iter}{\mathsf{t}}
\pgfplotsset{compat=1.13}
\newlength\fwidth
\let\oldnl\nl% Store \nl in \oldness
\newcommand{\nonl}{\renewcommand{\nl}{\let\nl\oldnl}}% Remove line number for one line
\newcommand{\E}{\mathbb{E}}                  % expectation
\newcommand{\bo}[1]{\mathbf{#1}}              % boldface math 
\newcommand{\bom}[1]{\boldsymbol{#1}}    % boldface math (for greek letters)
\newcommand{\supp}{\mathsf{supp}}
\newcommand{\pdim}{N}
\newcommand{\ndim}{L}
\newcommand{\Gam}{\boldsymbol{\Gamma}}
\renewcommand{\Pi}{{\mathcal P}}
\newcommand{\hop}{\mathsf{H}}        % Hermitian transpose 
\newcommand{\beq}{\begin{equation}}
\newcommand{\eeq}{\end{equation}}
\newcommand{\bmat}{\begin{pmatrix}}
\newcommand{\emat}{\end{pmatrix}}
\newcommand{\R}{\mathbb{R}}
\newcommand{\X}{{\bf X}}
\newcommand{\SCM}{\hat{\boldsymbol{\Sigma}}}
\newcommand{\A}{\mathbf{A}}
\newcommand{\Y}{\mathbf{Y}}
\newcommand{\mE}{\mathbf{E}}
\renewcommand{\S}{\hat{\boldsymbol{\Sigma}}} % SAMPLE COVARIANCE MATRIX
\newcommand{\kdim}{K}
\newcommand{\sigmaw}{\sigma^2}
\newcommand{\C}{\mathbb{C}} %  field of complex numbers
\newcommand{\rsupp}{\mathsf{supp}} 
\newcommand{\var}{\mathrm{var}}
\renewcommand{\b}{\mathbf{b}}
\newcommand{\x}{\bo x}
\newcommand{\e}{\mathbf{e}}
\newcommand{\y}{\bo y}
\renewcommand{\a}{{\bo a}}
\newcommand{\M}{\bom \Sigma}
\DeclareMathOperator{\tr}{tr}
\DeclareMathOperator{\diag}{diag}
\DeclareMathOperator{\cov}{cov}
\newcommand{\gam}{\boldsymbol{\gamma}}
\begin{document}

\title{Sparse signal recovery and source localization via covariance learning}

\author{Esa~Ollila,~\IEEEmembership{Senior Member,~IEEE}
\thanks{Esa Ollila is with the Department of Information and Communications Engineering, Aalto University, FI-00076 Aalto, Finland (e-mail: esa.ollila@aalto.fi).}}

%\markboth{E. Ollila}
%{Ollila, }
\maketitle

\begin{abstract} In the Multiple Measurements Vector (MMV) model, measurement vectors are connected to unknown, jointly sparse signal vectors through a linear regression model employing a single known measurement matrix (or dictionary).  Typically, the number of  atoms (columns of the dictionary) is greater than the number measurements and the sparse signal recovery problem is generally ill-posed.  In this paper,  we treat the signals and measurement noise as independent Gaussian random vectors with unknown signal covariance matrix and noise variance, respectively, and  characterize the solution of the likelihood equation in terms of fixed point equation,  thereby enabling the recovery of the sparse signal support (sources with non-zero variances) via a  block coordinate descent (BCD) algorithm that leverage  the FP characterization of the likelihood equation.  Additionally,  a  greedy pursuit method, analogous to popular simultaneous orthogonal matching pursuit (OMP), is introduced.   Our numerical examples demonstrate  effectiveness of the proposed covariance learning (CL) algorithms both in classic sparse signal recovery as well as in direction-of-arrival (DOA) estimation problems where they perform favourably  compared to the state-of-the-art algorithms under a broad variety of settings.  
\end{abstract}

\begin{IEEEkeywords} Multiple measurements vector, sparse signal recovery, direction-of-arrival estimation, orthogonal matching pursuit, fixed-point algorithm, sparse Bayesian learning. 
\end{IEEEkeywords}
\IEEEpeerreviewmaketitle

\section{Introduction}
\label{sec:intro}

\IEEEPARstart{I}{n the}  \emph{multiple measurements vector} (MMV) \cite{duarte2011structured} model, the measurement vectors  $\y_i \in \mathbb{C}^N$ follow the generative model 
\beq \label{eq:MMV}
\y_{l} = \A \x_{l}  + \mathbf{e}_{\ell} , \quad l = 1,\ldots,  L,
\eeq 
where $\A=(\a_1  \ \cdots \ \a_{M}) \in \C^{\pdim \times M}$ is a fixed (known) overcomplete matrix ($M > N $) 
and   $\mathbf{e}_l  \in \C^{\pdim}$ is an unobserved zero mean white noise random vector,  i.e., $\cov(\e_l)=\sigmaw \mathbf{I}$. 
Matrix $\A$ is  called the \emph{dictionary} or \emph{measurement matrix}, and its column vectors $\mathbf{a}_i$ are referred to as \emph{atoms}. 
The unobserved random  signal vectors $\x_l = (x_{1l}, \ldots,x_{Ml})^\top$, $l=1,\ldots,L$ are assumed to be sparse, {\it i.e.,} most of their elements are zero.

Letting $\mathbf{Y} = ( \y_1 \, \cdots \,  \y_{L} ) \in \C^{\pdim \times L}$ to denote the matrix of measurement vectors, we can write the model  \eqref{eq:MMV} in matrix form as  
 \beq \label{eq:MMVmodel} 
\mathbf{Y}  = \A \mathbf{X} + \mathbf{E},
\eeq 
where matrices $\X  = (x_{ml}) \in \C^{M \times L}$ and  $\mE  = (e_{nl})  \in \C^{N \times L} $  contain the  signal  and error vectors as columns, respectively. 
The key assumption underlying the  MMV model  is that the signal matrix $\mathbf{X}$ is $\kdim$-rowsparse, {\it i.e.,} at most  $\kdim$ rows of $\X$ contain
non-zero entries. The rowsupport  of  $\X \in \C^{M \times L}$ is the index set of rows  containing non-zero  elements:
\[
\mathcal M = \rsupp(\X) 
= \{   i \in  [\! [   M ] \!]   :  x_{ij} \neq 0 \,  \mbox{ for some $j \in [\! [ L ] \!] $} \} 
\]
where $[\! [  M  ] \!]  = \{1,\ldots, M \}$. 
Since $\X$ is $\kdim$-rowsparse,  i.e.,  $| \rsupp(\X)| = \kdim$, 
joint estimation can lead  both to computational advantages and increased reconstruction accuracy \cite{tropp2006algorithms,tropp:2006,chen_huo:2006,eldar_rauhut:2010,duarte2011structured,blanchard2014greedy}. 
 In sparse signal recovery (SSR) problems, the object of interest is identifying the support $\mathcal M$, given only the data $\Y$, the dictionary $\A$, and the sparsity level $K$. 
 
In this paper, the source signal vectors $\mathbf{x}_l$  are modelled as  i.i.d. circular Gaussian random vectors with independent elements and zero mean.  Additionally, $\mathbf{x}_l$-s are assumed statistically independent of noise $\e_l$, $l=1,\ldots,L$. This implies that  
$ \y_l \sim \mathcal{C N}_{\pdim}(\mathbf{0}, \M)$, where the positive definite Hermitian (PDH) $\pdim \times \pdim$ covariance matrix $\M=\cov(\y_l)$ has the form 
\beq \label{eq:M}
\M = \A \Gam \A^\hop + \sigma^2 \mathbf{I}  =  \sum_{i=1}^M \gamma_i \a_i \a_i^\hop + \sigma^2 \mathbf{I} . 
\eeq
where the signal covariance matrix $\Gam = \cov(\x_l) = \diag(\gam)$  is a diagonal matrix and $\gam \in \mathbb{R}^M_{\geq 0}$ is a vector of signal powers with only $K$ non-zero elements. Hence, $\mathcal M = \rsupp(\X)=\supp(\gam)$,  and  \emph{covariance learning} (CL-)based support recovery algorithms can be constructed by  
minimizing the negative log-likelihood function (LLF) of the data $\Y$, defined by (after scaling by $1/\ndim$ and ignoring additive constants) 
\beq
\begin{aligned}  \label{eq:SML}
\ell( \boldsymbol{\gamma}, \sigmaw \mid \Y, \A) = 
 \tr(  (\A& \diag(\gam)\A^\hop + \sigmaw \mathbf{I})^{-1} \SCM )  \\ &+  \log |  \A \diag(\gam) \A^\hop + \sigmaw \mathbf{I} |  
\end{aligned}
\eeq
where $\SCM$ is  the sample covariance matrix (SCM),  
\[
\SCM = \frac{1}{\ndim} \sum_{l=1}^{\ndim} \y_{l} \y_l^\hop = L^{-1} \mathbf{Y} \mathbf{Y}^\hop, 
\]
and $\tr(\cdot)$ and $| \cdot |$ denote the matrix trace and determinant. 

Approaches for tackling the MMV problem can be categorized into modeling and non-modeling methods. The non-modelling category involves treating $\x_l$-s as  deterministic latent signal sequences. Then $\X$ and its rowsupport are determined by either using convex relaxation based optimization methods (e.g., $\ell_1$-minimization) or greedy pursuit algorithms, such as simultaneous orthogonal matching pursuit (SOMP) \cite{tropp2006algorithms} or  the simultaneous normalized iterative hard thresholding (SNIHT)  \cite{blanchard2014greedy}. 
In the modeling category, the variables $\mathbf{x}_l$ are considered random, and the widely assumed Gaussianity of both source signals and noise gives rise to stated maximum likelihood (ML) estimation problem. The objective is to determine the minimizer of \eqref{eq:SML} along with the identification of the support of non-zero source powers. 
A popular Bayesian approach,  sparse Bayesian learning (SBL) \cite{tipping2001sparse}, regards signal powers ($\gamma_i$) as random hyperparameters governed by a hierarchical prior distribution. An innovative  hybrid approach, known as M-SBL algorithm \cite{wipf2007empirical,wipf2007beamforming},  employs an empirical Bayes method to  construct an EM algorithm for solving \eqref{eq:SML}. This algorithm  proceeds as outlined below, assuming a known value for the noise power ($\sigmaw$): 
\begin{enumerate} 
\item Initialize:  Set $\Gam^{(0)}=\diag(\gam^{(0)})$ for some initial $\gam^{(0)} \in \mathbb{R}^M_{\geq 0}$, and  $\iter =0$. 
\item {\bf E-step}: 
\begin{align*}
\M^{(\iter)} &=  \A \Gam^{(\iter)}  \A^\hop + \sigmaw \mathbf{I} \\
\X^{(\iter )} &= \Gam^{(\iter)}   \A^\hop (\M^{(\iter)})^{-1} \Y , \\ 
\M_x^{(\iter)} &=  \Gam^{(\iter)}  - \Gam^{(\iter)}  \A^\hop (\M^{(\iter)})^{-1} \A \Gam^{(\iter)}   
\end{align*} 
\item {\bf M-step}: $ 
\Gam^{(\iter+1)} = \diag( L^{-1} \X^{(\iter)} \X^{(\iter)\hop} + \M_x^{(\iter)}) 
$
\item If not converged, then increment $\iter$ and repeat steps 1-3. 
\end{enumerate}
 M-SBL method  is computationally demanding, and is known to have slow convergence,  which often prohibits its uses even when number of atoms $M$  is  only moderately large. Although  some faster fixed point SBL update rules have been proposed ({\it e.g.}, \cite[eq. (19)]{wipf2007empirical})  these often provide much worse performance than the original slower rule.  The CL algorithms proposed in this paper avoid estimation of  $\X$, distinguishing them from methods such as M-SBL, SOMP,  or SNIHT  which all require  iterative updates  of the unknown source signal matrix.    Another disadvantage of M-SBL is the assumption that  $\sigmaw$ is known. Although M-step can modified for joint estimation of $\sigmaw$, as mentioned in  \cite{tipping2001sparse,wipf2004sparse}, they tend to provide poor estimation results, which is also attested in our numerical studies. 
Additionally,  the estimate of $\gam$ is not necessarily  very sparse, but  is post-processed by pruning (setting to zero all but $K$-largest elements of final iterate $\hat{\gam}$). 

The  CL-based iterative FP algorithms proposed in this work share similarity to iterative adaptive approach (IAA) algorithm \cite{yardibi2010source}. IAA solves a weighted LS cost function for source waveforms $\{x_{ml}\}_{l=1}^L$ and iteratively recalculates the source signal waveforms and their powers. The convergence of IAA was not shown but was supported by the observation that IAA iterates of source powers can be interpreted as approximate solutions to  Gaussian LLF  \cite[Appendix A]{yardibi2010source}. 
Another closely related technique is Sparse Asymptotic Minimum Variance (SAMV) method \cite{abeida2012iterative} that finds the source powers and the variance that minimize the asymptotic minimum variance (AMV) criterion.  Third related work is \cite{fengler2021non} where the source powers are solved via  cyclic coordinatewise optimization (CWO) of the LLF in \eqref{eq:lik}. Our CL-based scheme aims at directly solving  likelihood equation (l.e.) for signal powers. This is accomplished by characterizing the MLE as solution to a fixed point (FP) equation. This approach enables the formulation of a practical block-coordinate descent (BCD) algorithm that leverage the FP characterization.  Moreover, it paves the way for the design of  a greedy pursuit algorithm  that is analogous to SOMP in the classic SSR setting with deterministic source waveforms. 

The paper is structured as follows. In \autoref{sec:CL-IC}, we characterize the solution to the l.e. via  fixed point equation,  and develop a BCD algorithm that . A greedy pursuit method  analogous to SOMP is derived in \autoref{sec:OMP}.  The effectiveness of the proposed CL-methods in  SSR are validated in \autoref{sec:SSR} via simulation studies. Application to source localization is addressed in \autoref{sec:source}.  In both problems the proposed CL approaches perform favourably  compared to the state-of-the-art algorithms under a broad variety of settings.  Finally, \autoref{sec:concl} concludes.

{\it Notations:}  $\a_{\mathcal M}$ denotes the components  of $\a$ corresponding to support
set  $\mathcal M$ with $|\mathcal M | = K$.  For any $n \times m$ matrix $\A$  we denote  by   $\A_{\mathcal{M}}$  the $n \times K$ submatrix of $\A$  restricted to the columns of $\A$ indexed by set $\mathcal{M}$.    By $\A_{\mathcal M:} $ we  denote the $K \times m$ submatrix of $\A$ restricted to the rows of $\A$ indexed by set $\mathcal{M}$.  When the boolean parameter  $\mathsf{peak}$ equals  $\mathsf{true}$ (resp. $\mathsf{false}$), then the operator $H_K( \mathbf{a},  \mathsf{peak})$ selectively retains only the $K$ largest peaks (resp. elements)  of the vector $\mathbf{a}$ and set other elements to zero. Note that  $\supp \big(H_K(\a,\mathsf{true}) \big)$ will  then return the indices of the $K$ largest elements of vector $\a$.  We use $\A^+ = (\A^\hop \A)^{-1} \A^\hop$ to denote  the pseudo-inverse of matrix $\A$ with full column rank, and  $( \cdot)_+ = \max(\cdot,0)$ an operator that keeps the positive part  of its argument. We use $\diag(\cdot)$ in two distinct contexts:  $\diag(\mathbf{A})$  denotes an $N$-vector comprising the diagonal elements of an $N \times N$ matrix $\mathbf{A}$, whereas  $\diag(\mathbf{a})$ signifies an $N \times N$ diagonal matrix with its diagonal elements from the $N$-vector $\a$.

\section{Sparse iterative covariance learning}  \label{sec:CL-IC}

Our objective in this section is  to find a minimizer of  \eqref{eq:SML}  
under the constraint that $\gam$ is $K$-sparse. We consider two forms of sparsity: 
\begin{itemize} 
\item {\bf $K$-sparsity} signifies the search for $K$-largest elements   of $\gam$. 
\item {\bf $K$-peaksparsity}  signifies the search for $K$-largest peaks of $\gam$. 
\end{itemize} 
$K$-sparsity is the conventional form of sparsity occurring in SSR problems while peaksparsity appears in source localization problems, 
e.g., spectral line estimation (SLE) \cite{stoica_moses:1997}  of narrowband sources or SSR-based \emph{Direction-of-Arrival (DOA)} estimation \cite{malioutov2005sparse}  in sensor array processing. 
In the latter case, atoms $\a_i$ are steering vectors at candidate DOAs chosen from a predefined grid in the angle space.  Atoms that reveal high mutual coherence with the steering vector of the true DOA, $\a(\theta_k)$,  form a cluster with similar signal powers.  
The extent of coherence  depends on the density of the angle grid ($M$), the proximity of these clusters (i.e., closeness of true source DOAs),  as well as the sample size ($L$). 
In the case of SLE, off-grid frequencies resulting from sampling lead to the leakage of energy from component signals into adjacent frequency bins, resulting to sidelobes surrounding the frequency of the sources.

\subsection{Fixed-point iteration of $\gam$} \label{subsec:FPiter}

Using
\begin{align}
  &\frac{\partial \M^{-1}}{\partial \gamma_i} = -\M^{-1} \, \frac{\partial \M}{\partial \gamma_i}  \, \M^{-1} 
  = -\M^{-1} \a_i^{}\a_i^\hop \M^{-1}, \label{eq:deriv-inverse}%
\\
  & \frac{\partial\log |\M|}{\partial\gamma_i} 
 = \mathrm{tr} \left({\M^{-1} \frac{\partial \M}{\partial\gamma_i}} \right) 
  =   \a_i^\hop   \M^{-1} \a_i , \\ 
&\frac{\partial \M^{-1}}{\partial \sigma^2} = - \M^{-2}  \quad \mbox{and} \quad  \frac{\partial\log |\M|}{\partial\sigma^2} = \tr(\M^{-1})
\end{align}
the \emph{likelihood equation  (l.e.)} w.r.t. co-ordinates  $\gamma_i$  and $\sigma^2$ are obtained by setting the derivative of LLF $\ell$ in \eqref{eq:SML}  equal to zero  
\begin{align}
0 &=\frac{\partial \ell (\gam,\sigmaw) }{\partial\gamma_i}  = -  \a_i^\hop \M^{-1}  \SCM \M^{-1} \a_i + \a_i^\hop \M^{-1} \a_i    \label{eq:lik1} \\
& \qquad  \qquad \mbox{for }  i = 1,\ldots, M,  \notag \\
\mathcolor{black}{
 0 &= \frac{\partial \ell (\gam,\sigmaw) }{\partial\sigma^2}  = - \tr(\M^{-1}(\hat \M- \M) \M^{-1})} .  \label{eq:lik2}
\end{align} 
Define 
\begin{align} 
\M_{\backslash i} &= \sum_{j \neq i}  \gamma_j   \a_j \a_j^\hop + \sigmaw \mathbf{I}  = \M - \gamma_i \a_i \a_i^{\hop} \label{eq:Mbackslash} \\ 
\mathcolor{black}{\M_{\backslash \sigma^2} &=   \A \diag(\gam) \A^{\hop}  = \M - \sigma^2  \mathbf{I} }
\end{align}
as the covariance matrix of  $\y_l$-s when the contribution from the $i^{\text{th}}$ source signal and noise is removed, respectively. 

Next note that 
\begin{align} 
\a^{\sf H}_i \M^{-1} \b  
&=  \frac{ \a_i^{\hop} \M^{-1}_{\backslash i} \b}{1+\gamma_i \a^{\sf H}_i \M^{-1}_{\backslash i} \a_i}  \label{eq:lemma_apu}    \\
\a^{\sf H}_i \M^{-1}_{\backslash i}  \b  &=   \frac{ \a_i^{\hop} \M^{-1} \b}{1-\gamma_i \a^{\sf H}_i \M^{-1} \a_i} \label{eq:SMformula2}  
\end{align} 
which follows by applying the Sherman-Morrison formula 
\beq \label{eq:SMformula}
(\A + \mathbf{u} \mathbf{v}^\hop)^{-1}=\A^{-1} - \A^{-1} \mathbf{u} \mathbf{v}^\hop \A^{-1}/(1 + \mathbf{v}^\hop \A^{-1} \mathbf{u})
\eeq
to $\M_{\backslash i} + \gamma_i \a_i \a_i^{\hop}$ and $\M - \gamma_i \a_i \a_i^{\hop}$ in equations \eqref{eq:lemma_apu} and  \eqref{eq:SMformula2}, respectively.  Note that these equations  hold for any $N$-vector $\b$.
Next lemma derives the FP equation for $(\gam,\sigma^2)$ parameter. These equations can be traced back to \cite{yardibi2010source,abeida2012iterative}.  To make paper self-contained,  and for allowing better discussion of the algorithm that stems from it, we provide also the proof.  

\begin{lemma} \cite{yardibi2010source,abeida2012iterative} \label{lem} The solution to l.e. in \eqref{eq:lik1}-\eqref{eq:lik2}  can be expressed as a fixed-point equation: 
\beq \label{eq:lik_ver2}
\bmat \gam  \\ \sigma^2 \emat  = \bmat  H(\gam,\sigma^2)  \\  G(\gam,\sigma^2) \emat 
\eeq
where $i^{\text{th}}$  coordinate function of $H$ is 
 \begin{align} 
 H_i(\gam,\sigmaw) 
&=  \frac{\a_i^\hop \M^{-1} \SCM \M^{-1} \a_i }{(\a_i^\hop \M^{-1} \a_i)^2 } -  \frac{1}{ \a_i^{\hop} \M^{-1}_{\backslash i}  \a_i}  , \label{eq:gamma_i_sol1}
\end{align} 
and where
 \beq   \label{eq:gamma_i_sol2}
  \frac{1}{ \a_i^{\hop} \M^{-1}_{\backslash i}  \a_i}  = \frac{1}{\a_i^\hop \M^{-1} \a_i} -  \gamma_i .
\eeq 
The function $G$ is defined by
\beq \label{eq:Gfp}
G(\gam,\sigma^2) = \frac{\tr(\M^{-1}(\hat \M - \M_{\setminus \sigma^2}) \M^{-1}) }{\tr(\M^{-2})}.
 \eeq 
\end{lemma}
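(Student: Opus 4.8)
The plan is to treat the two blocks of the stationarity conditions separately, since \eqref{eq:lik1} and \eqref{eq:lik2} decouple into the $M$ scalar $\gamma_i$-equations and the single $\sigma^2$-equation. In each block I would take the relevant likelihood equation and rearrange it algebraically into the claimed fixed-point form, relying only on the Sherman--Morrison identities \eqref{eq:lemma_apu}--\eqref{eq:SMformula2} already established. No optimization argument is required: the assertion is simply that a stationary point of $\ell$ satisfies the displayed system, so the entire proof reduces to equivalent rewritings of \eqref{eq:lik1} and \eqref{eq:lik2}.

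For the $\gamma_i$ block I would first dispatch the auxiliary identity \eqref{eq:gamma_i_sol2}, which is nothing but \eqref{eq:SMformula2} specialized to $\b=\a_i$: it gives $\a_i^\hop \M^{-1}_{\backslash i}\a_i = \a_i^\hop\M^{-1}\a_i/(1-\gamma_i\,\a_i^\hop\M^{-1}\a_i)$, and taking reciprocals yields $1/(\a_i^\hop\M^{-1}_{\backslash i}\a_i)=1/(\a_i^\hop\M^{-1}\a_i)-\gamma_i$. With this in hand, I would start from \eqref{eq:lik1} written as the balance $\a_i^\hop\M^{-1}\SCM\M^{-1}\a_i=\a_i^\hop\M^{-1}\a_i$, divide both sides by $(\a_i^\hop\M^{-1}\a_i)^2$, and then replace the right-hand side $1/(\a_i^\hop\M^{-1}\a_i)$ by $1/(\a_i^\hop\M^{-1}_{\backslash i}\a_i)+\gamma_i$ using \eqref{eq:gamma_i_sol2}. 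Solving for $\gamma_i$ then isolates precisely the expression $H_i(\gam,\sigmaw)$ in \eqref{eq:gamma_i_sol1}. The decisive simplification is that normalizing \eqref{eq:lik1} by $(\a_i^\hop\M^{-1}\a_i)^2$ leaves a single $\M^{-1}$ power term which \eqref{eq:gamma_i_sol2} converts into a closed form in which $\gamma_i$ appears additively, so the fixed-point variable can be read off directly.

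For the $\sigma^2$ block I would substitute the decomposition $\M=\M_{\backslash\sigma^2}+\sigma^2\I$ into the stationarity condition \eqref{eq:lik2}. Writing $\hat\M-\M=(\hat\M-\M_{\backslash\sigma^2})-\sigma^2\I$ and expanding the trace by linearity splits \eqref{eq:lik2} into $\tr(\M^{-1}(\hat\M-\M_{\backslash\sigma^2})\M^{-1})-\sigma^2\tr(\M^{-2})=0$, where I have used $\tr(\M^{-1}\I\M^{-1})=\tr(\M^{-2})$. Since $\M\succ 0$ forces $\tr(\M^{-2})>0$, I may divide through and solve for $\sigma^2$, obtaining exactly $G(\gam,\sigma^2)$ in \eqref{eq:Gfp}.

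I do not expect a genuine obstacle in either computation, since all the nonlinear work is absorbed into the Sherman--Morrison formula \eqref{eq:SMformula}. The only point demanding care is the bookkeeping in the $\gamma_i$ case: one must resist converting the cross term $\a_i^\hop\M^{-1}\SCM\M^{-1}\a_i$ into $\M^{-1}_{\backslash i}$ form, which would introduce extraneous $(1+\gamma_i\,\a_i^\hop\M^{-1}_{\backslash i}\a_i)$ factors, and instead keep it in $\M^{-1}$ form so those factors never arise. I would also remark that the statement is an \emph{equivalent} rewriting of the stationarity system rather than a claim about the sign of the resulting $\gamma_i$; nonnegativity of the fixed point is a separate issue addressed by the algorithm, not by the lemma.
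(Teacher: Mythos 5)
Your proof is correct, and it deserves a brief comparison because the route to \eqref{eq:gamma_i_sol1} differs from the paper's. The $\sigma^2$ block and the auxiliary identity \eqref{eq:gamma_i_sol2} are handled exactly as in the paper: \eqref{eq:gamma_i_sol2} is \eqref{eq:SMformula2} with $\b=\a_i$ followed by taking reciprocals, and $G$ is obtained by splitting $\hat\M-\M = (\hat\M-\M_{\setminus\sigma^2})-\sigma^2\mathbf{I}$ inside the trace in \eqref{eq:lik2} and solving (the division is legitimate since $\tr(\M^{-2})>0$). For the $\gamma_i$ block, however, the paper does not work forward from \eqref{eq:lik1}; it instead takes as its starting point the cited identity \cite[Eq.~(27)]{yardibi2010source}, namely \eqref{eq:gamma_i_soll} expressed entirely in terms of $\M_{\backslash i}^{-1}$, splits it into the two terms of \eqref{eq:gamma_i_sol0}, and then applies \eqref{eq:SMformula2} (effectively twice, once on each side of $\SCM$) to convert the leading quadratic-form ratio into the $\M^{-1}$ form appearing in \eqref{eq:gamma_i_sol1}. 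You go the other way: starting from the stationarity balance $\a_i^\hop\M^{-1}\SCM\M^{-1}\a_i=\a_i^\hop\M^{-1}\a_i$ of \eqref{eq:lik1}, you divide by $(\a_i^\hop\M^{-1}\a_i)^2$, replace $1/(\a_i^\hop\M^{-1}\a_i)$ by $1/(\a_i^\hop\M^{-1}_{\backslash i}\a_i)+\gamma_i$ via \eqref{eq:gamma_i_sol2}, and read off $\gamma_i=H_i(\gam,\sigma^2)$. Your derivation is more self-contained (it never invokes the external Eq.~(27) and uses Sherman--Morrison only through \eqref{eq:gamma_i_sol2}) and makes the equivalence with the likelihood equation transparent in both directions; the paper's route, at the cost of leaning on the reference, makes explicit the link to the IAA derivation of \cite{yardibi2010source} (recovering, as the proof notes, \cite[Eq.~(31)]{yardibi2010source}), which supports the later discussion comparing the FP update with the IAA power update. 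Your closing remark that the lemma asserts an equivalent rewriting of the stationarity system, with nonnegativity deferred to the algorithm, is also consistent with how the paper uses the result.
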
 

\begin{proof}  
In \cite[Eq. (27)]{yardibi2010source}  it was shown that 
\begin{align}
H_i(\gam,\sigmaw)  &= \frac{ \a_i^{\sf H} \M^{-1}_{\backslash i}  ( \SCM  - \M_{\backslash i}) \M^{-1}_{\backslash i} \a_i}{  (\a_i^{\sf H} \M^{-1}_{\backslash i}  \a_i)^2} \label{eq:gamma_i_soll}\\ 
 &= \frac{ \a_i^{\sf H} \M^{-1}_{\backslash i}   \SCM \M^{-1}_{\backslash i} \a_i}{  (\a_i^{\sf H} \M^{-1}_{\backslash i}  \a_i)^2} - \frac{1}{ \a_i^{\sf H} \M^{-1}_{\backslash i}   \a_i}  \label{eq:gamma_i_sol0}
\end{align} 
 Applying \eqref{eq:SMformula2} to the first term in \eqref{eq:gamma_i_sol0} yields after some simple algebra the stated equation \eqref{eq:gamma_i_sol1}. 
Next notice that by taking inverses of both sides of \eqref{eq:SMformula2} yields the identity
 \beq \label{eq:SMformula3}
\frac{1}{\a^{\sf H}_i \M^{-1}_{\backslash i}  \b}  =   \frac{1}{\a_i^{\hop} \M^{-1} \b} - \gamma_i . 
 \eeq 
Then substituting $\b=\a_i$ in \eqref{eq:SMformula3} yields the identity \eqref{eq:gamma_i_sol2}.  Note that  combining \eqref{eq:gamma_i_sol1} and \eqref{eq:gamma_i_sol2} 
corresponds to \cite[Eq. (31)]{yardibi2010source}.  Noting that $\hat \M - \M = ( \hat \M - \M_{\setminus \sigma^2}) - \sigma^2 \mathbf{I}$ allows to write down the l.e. \eqref{eq:lik2} as 
\[
 0  =  -\tr(\M^{-1}(\hat \M - \M_{\setminus \sigma^2})\M^{-1}   +  \sigma^2 \tr(\M^{-2}) 
 \]
 and solving $\sigma^2$ from this equation yields \eqref{eq:Gfp}. Note that \eqref{eq:Gfp}  corresponds to  \cite[eq. (42)]{abeida2012iterative} but  just expressed in a slightly different form. 
\end{proof} 

The problem with l.e. \eqref{eq:lik_ver2} is that the updates may become negative. To guarantees the non-negativity of signal power updates, one may use update rule:
\begin{align}  
\gamma_i^{(\iter+1)} &= H_i(\gam^{(\iter)},\sigma^{2(\iter)} )  \notag \\ 
&=  \frac{\a_i^\hop \Minv^{(\iter)} \SCM \Minv^{(\iter)} \a_i }{(\a_i^\hop \Minv^{(\iter)} \a_i)^2 } - \left(  \frac{1}{\a_i^\hop \Minv^{(\iter)} \a_i} - \gamma_i^{(\iter)}  \right)_+ \label{eq:gamma_update}  \\
&= \gamma_i^{(\iter)}  +  \frac{\a_i^\hop \Minv^{(\iter)} \SCM \Minv^{(\iter)} \a_i }{(\a_i^\hop \Minv^{(\iter)} \a_i)^2 } - \max\left( \frac{1}{\a_i^\hop \Minv^{(\iter)} \a_i},   \gamma_i^{(\iter)} \right) \notag
\end{align}
for each $i \in [\![ M ]\!] $, where  
\[
\Minv^{(\iter)} =  (\A \Gam^{(\iter)} \A^\hop + \sigma^{2(\iter)}\mathbf{I})^{-1}, 
\]
with $\Gam^{(\iter)} = \diag(\gam^{(\iter)})$, denotes the update of the inverse covariance matrix $\Minv=\M^{-1}$ at $\iter^{th}$ iteration. Note that  
 $\a_i^{\sf H} \M^{-1}_{\backslash i}  \a_i \geq 0$ and $( \cdot)_+$ in \eqref{eq:gamma_update} guarantees non-negativity of the updates.  It is important to notice that 
 the update is different from $ \gamma_i^{(\iter+1)} =  [H_i(\gam^{(\iter)},\sigma^{2(\iter)} )]_+$ which is suggested in \cite[Result~1]{abeida2012iterative}. 
 This update appears to clip many of the source signal power  coordinates to zero already in the beginning of iterations, and in our experiments, performed worse than \eqref{eq:gamma_update} especially for $M > N$ and $M \gg K$ which is usually the case. 

One can express  \eqref{eq:gamma_update} in the form 
\beq \label{eq:gamma_update2} 
\gamma_i^{(\iter+1)} = \begin{cases}  \gamma_i^{(\iter)}  + d_i^{(\iter)}  & \mbox{, if $\gamma^{(\iter)}_i <  \frac{1}{\a_i^\hop \Minv^{(\iter)} \a_i}$  } \\ 
  \frac{\a_i^\hop \Minv^{(\iter)} \SCM \Minv^{(\iter)} \a_i }{(\a_i^\hop \Minv^{(\iter)} \a_i)^2 }  &\mbox{, if $\gamma^{(\iter)}_i  \geq  \frac{1}{\a_i^\hop \Minv^{(\iter)} \a_i}$} 
  \end{cases} 
\eeq 
where 
\beq \label{eq:di_iter}
d_i^{(\iter)} = \frac{\a_i^\hop \Minv^{(\iter)} \S \Minv^{(\iter)} \a_i }{(\a_i^\hop \Minv^{(\iter)} \a_i)^2 }  -   \frac{1}{\a_i^\hop \Minv^{(\iter)} \a_i} . 
\eeq 
Hence the FP update \eqref{eq:gamma_update2}  effectively makes a decision on either to use a coordinate descent update or a power update.  
Observe from l.e. \eqref{eq:lik1}  that 
\[
- \nabla_{\gamma_i}  \ell(\gam^{(\iter)}, \sigma^{2(\iter)})  = \a_i^\hop \Minv^{(\iter)} \S \Minv^{(\iter)} \a_i  - \a_i^\hop \Minv^{(\iter)} \a_i
\]
and thus $d_i^{(\iter)}$  in \eqref{eq:di_iter} can be written in the form
\[
d^{(\iter)}_i  =  \mu_{\iter}  \times -  \nabla_{\gamma_i} \ell(\gam^{(\iter)}, \sigma^{2(\iter)})
\]
 where $\mu_{\iter}  = (\a_i^\hop \Minv^{(\iter)} \a_i)^{-2}$ can be interpreted as adaptive step size. Thus, if  $\gamma^{(\iter)}_i < (\a_i^\hop \Minv^{(\iter)} \a_i)^{-1}$,  then \eqref{eq:gamma_update2} takes step  towards the negative gradient  with stepsize $\mu_{\iter}$. 

 We can find resemblances between the  FP update and at least three distinct methods,  each derived from different perspectives. 
 First, the  update 
\beq  \label{eq:update_SAMV2}
\gamma_i^{(\iter+1)} = \gamma_i^{(\iter)}\left[  \frac{\a_i^\hop \Minv^{(\iter)} \S \Minv^{(\iter)} \a_i }{(\a_i^\hop \Minv^{(\iter)} \a_i) } \right]^b
\eeq
for $b=1$ is used in  SAMV2  \cite{abeida2012iterative}  and derived by minimizing asymptotically minimum variance (AMV) criterion. Furthermore, the SBL-variant  in \cite{nannuru2019sparse}, termed SBL1  is based on $b=1/2$.  These rules  do not necessarily converge to a local minimum of LLF $\ell$ as they are based on an approximation of the  l.e.

Second, the coordinatewise optimization (CWO) of the LLF in 
\eqref{eq:SML} (for known $\sigma^2$)  
 is proposed  in \cite{haghighatshoar2018improved,fengler2021non}. To elucidate the concept, let us 
define a scalar function $\ell_i(d)=\ell(\gam+ d \mathbf{1}_i,\sigma^2 \mid \Y, \A)$, with  $\ell$ in \eqref{eq:SML}, and $\mathbf{1}_i$
denotes the $i^{\text{th}}$ canonical basis vector with a single $1$ at its $i^{\text{th}}$ coordinate and $0$s elsewhere. Then it can be shown that \cite{fengler2021non}:
\[
d_i = \arg \min_d   \ell_i (d) = \frac{\a_i^\hop \Minv \S \Minv \a_i }{(\a_i^\hop \Minv \a_i)^2 }  -   \frac{1}{\a_i^\hop \Minv \a_i} .
\]
Since signal powers $\gamma_i$ are non-negative scalars, the coordinate descent update rule is \cite[Eq. (30)]{yardibi2010source}, \cite{fengler2021non}:
\beq \label{eq:CWO}
\gamma_i \gets  \gamma_i +  \max(d_i, - \gamma_i),
\eeq
which bears similarity to the FP update in  \eqref{eq:gamma_update}. 

Third, we note that the signal power update in IAA algorithm, as presented in \cite[Table~II]{yardibi2010source}, can be expressed in the following form:
\beq \label{eq:gamma_IAA} 
\gamma_i^{(\iter+1)} =  \frac{\a_i^\hop \Minv^{(\iter)} \SCM \Minv^{(\iter)} \a_i }{(\a_i^\hop \Minv^{(\iter)} \a_i)^2 } \mbox{, $i = 1,\ldots, M$}. 
 \eeq 
 Despite originating from a different starting point, namely as the minimizer of a weighted least squares cost function,  this update demonstrates the most similarity to our FP rule.
 When comparing with the FP update rule  \eqref{eq:gamma_update2}, we anticipate that IAA signal power estimate are likely to converge to the same FP, especially when snapshot size is reasonably large.

\subsection{Update of the noise power $\sigmaw$}

A fixed point update rule for noise power  $\sigma^2$  based on \eqref{eq:lik_ver2} however does not lead to convergent FP rule as non-negativity of the update is hard to maintain. This was also noted in \cite{abeida2012iterative} which lead to development of modified rules (SAMV1-SAMV3).   We have observed that  the FP update for $\sigma^2$  (so of the form $\sigma^{2(\iter+1)} = {G}(\gam^{(\iter)},\sigma^{2(\iter)}$) based on Lemma~1 yields  a negative value for the FP iterate often after only 1 iteration whenever $M > N$. Thus alternative approach for updating $\sigma^2$ needs to be looked for. 

Before addressing the estimation of noise variance $\sigmaw$, we remind the reader about the following result  from   \cite{stoica1995concentrated}.

\begin{lemma}    \cite{stoica1995concentrated} \label{lem:bresler}
Assume $K< N$. Then the parameters  $\sigma^2>0$ and $\gam \in \R^K_{\geq 0}$ that minimize $\ell( \gam, \sigma^2 |  \mathbf{Y}, \A)$ in \eqref{eq:SML} are 
\begin{align} 
\hat \sigma^2 = \frac{1}{N- K }\mathrm{tr} \big(  (\mathbf{I} - \A \A^+)  \SCM  \big)  \label{eq:hatsigma2} 
\end{align} 
and 
\beq \label{eq:hat_gam}
\hat \gam =  \diag(\A^+ ( \S -  \hat \sigma^2 \mathbf{I})  \A^{+\hop})
\eeq 
where the latter represents an unconstrained MLE, meaning it aligns with MLE when the non-negativity constraint is met,  i.e., $\hat{\gam} \in  \R^K_{\geq 0}$. 
\end{lemma}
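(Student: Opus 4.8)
My plan is to profile out both parameters by moving to coordinates adapted to the column space of $\A$ and its orthogonal complement, on which $\M$ collapses to $\sigmaw\I$. Since $\A$ is here the $N\times K$ sub-dictionary on the support (full column rank, $K<N$), pick an $N\times(N-K)$ matrix $\U_2$ with orthonormal columns spanning $\mathcal R(\A)^\perp$, so that $\A^\hop\U_2=\mathbf 0$, $\U_2^\hop\U_2=\I$ and $\U_2\U_2^\hop=\I-\A\A^+$, and set $T=\big(\begin{smallmatrix}\A^+\\ \U_2^\hop\end{smallmatrix}\big)$, which is invertible. Using $\A^+\A=\I$, $\A^+\U_2=\mathbf 0$ and $\A^\hop\U_2=\mathbf 0$, a short computation gives the block-diagonal form
\[
T\M T^\hop=\bmat \diag(\gam)+\sigmaw\W & \mathbf 0\\ \mathbf 0 & \sigmaw\I_{N-K}\emat,\qquad \W:=(\A^\hop\A)^{-1}=\A^+\A^{+\hop}.
\]
Because $\M^{-1}=T^\hop(T\M T^\hop)^{-1}T$ and $|\M|=|T\M T^\hop|/|T T^\hop|$, the cost \eqref{eq:SML} equals, up to a constant independent of $(\gam,\sigmaw)$,
\[
\ell\;\doteq\;\tr(\mathbf D^{-1}\widetilde{\SCM}_{11})+\log|\mathbf D|\;+\;\sigmaw^{-1}\tr(\widetilde{\SCM}_{22})+(N-K)\log\sigmaw,
\]
where $\mathbf D=\diag(\gam)+\sigmaw\W$, and $\widetilde{\SCM}=T\SCM T^\hop$ has diagonal blocks $\widetilde{\SCM}_{11}=\A^+\SCM\A^{+\hop}$ and $\widetilde{\SCM}_{22}=\U_2^\hop\SCM\U_2$.

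I would then minimise the two groups of terms separately. Relaxing $\mathbf D$ to an arbitrary PDH matrix (equivalently, allowing a full source covariance $\mathbf D-\sigmaw\W$ in place of the diagonal $\diag(\gam)$), the elementary fact that $\tr(\mathbf D^{-1}\mathbf S)+\log|\mathbf D|$ is uniquely minimised over $\mathbf D\succ\mathbf 0$ at $\mathbf D=\mathbf S$ gives $\widehat{\mathbf D}=\widetilde{\SCM}_{11}$, which does not depend on $\sigmaw$. All remaining $\sigmaw$-dependence then sits in the noise-subspace term $\sigmaw^{-1}\tr(\widetilde{\SCM}_{22})+(N-K)\log\sigmaw$, whose stationary point is $\hat\sigmaw=\tr(\widetilde{\SCM}_{22})/(N-K)$; rewriting $\tr(\widetilde{\SCM}_{22})=\tr(\U_2\U_2^\hop\SCM)=\tr((\I-\A\A^+)\SCM)$ recovers \eqref{eq:hatsigma2}. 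Finally, the relaxed source-covariance block is $\widehat{\mathbf D}-\hat\sigmaw\W=\widetilde{\SCM}_{11}-\hat\sigmaw\W$, and reading off its diagonal (using $\W=\A^+\A^{+\hop}$) gives $\hat\gam=\diag\!\big(\A^+(\SCM-\hat\sigmaw\I)\A^{+\hop}\big)$, i.e.\ \eqref{eq:hat_gam}.

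The step needing the most care, and in my view the crux of the lemma, is the passage from this relaxed solution back to the structured diagonal model. Two things must be checked. First, feasibility/tightness: $\widehat{\mathbf D}=\widetilde{\SCM}_{11}$ is admissible only when $\widetilde{\SCM}_{11}-\hat\sigmaw\W\succeq\mathbf 0$, and the power parametrisation additionally requires its diagonal $\hat\gam$ to be nonnegative, which is exactly the proviso ``when the nonnegativity constraint is met''; the hypothesis $K<N$ is what guarantees $N-K>0$, a nonempty noise subspace, and a well-defined positive $\hat\sigmaw$. Second, and more delicate, because $\mathbf D=\diag(\gam)+\sigmaw\W$ couples the powers to $\sigmaw$ through the off-diagonal entries of $\W$, the diagonal of the free minimiser coincides with the constrained minimiser only via an argument that carries the source covariance as a full matrix and then reports only its diagonal, which is precisely the concentrated-likelihood reasoning of the cited reference. (When the restricted atoms are orthogonal, $\W$ is diagonal, $\mathbf D$ is forced diagonal, and the identification is immediate.) Establishing this last identification, rather than carrying out the two scalar minimisations, which are routine, is where the real work lies.
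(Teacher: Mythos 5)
Your block-diagonalisation via $T=\bigl(\begin{smallmatrix}\A^+\\ \U_2^\hop\end{smallmatrix}\bigr)$ and the two resulting minimisations are correct, and they reproduce essentially the concentrated-likelihood computation of the cited reference: note that the paper itself gives \emph{no} proof of this lemma but imports it from \cite{stoica1995concentrated}, and what that reference actually establishes is the statement for an \emph{unstructured} Hermitian source covariance --- i.e., precisely your relaxed problem in $\mathbf{D}$. Up to that point your argument is sound (granting $\A^+\SCM\A^{+\hop}\succ\mathbf 0$, which needs $L\geq K$): $\widehat{\mathbf D}=\A^+\SCM\A^{+\hop}$, then $\hat\sigma^2=\tr((\I-\A\A^+)\SCM)/(N-K)$, and the diagonal of $\widehat{\mathbf D}-\hat\sigma^2\W$ is \eqref{eq:hat_gam}.

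The genuine gap is the step you yourself flag and then defer: identifying the diagonal of the relaxed minimiser with the minimiser of the diagonally structured model. This step cannot be completed, because the identification is false in general. For fixed $\sigma^2$, with $\mathbf D=\diag(\gam)+\sigma^2\W$, the structured likelihood equations in your coordinates read $[\mathbf D^{-1}(\mathbf D-\A^+\SCM\A^{+\hop})\mathbf D^{-1}]_{ii}=0$ for each $i$. At your candidate, $\mathbf D$ matches $\A^+\SCM\A^{+\hop}$ on the diagonal but has off-diagonal entries $\hat\sigma^2 W_{ij}$ rather than $(\A^+\SCM\A^{+\hop})_{ij}$, so $\mathbf D-\A^+\SCM\A^{+\hop}$ has zero diagonal but nonzero off-diagonal, and the diagonal of $\mathbf D^{-1}(\mathbf D-\A^+\SCM\A^{+\hop})\mathbf D^{-1}$ is then generically nonzero (already checkable by hand for $K=2$). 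Hence \eqref{eq:hat_gam} is not even a stationary point of the diagonal model except in the cases you mention ($K=1$ or orthogonal restricted atoms, where $\W$ is diagonal), although it is exact at the population level $\SCM=\M$. So the lemma must be read the way the paper in fact uses it: \eqref{eq:hatsigma2}--\eqref{eq:hat_gam} are the unconstrained MLE of the full-covariance model with its diagonal reported --- consistent with Remark~\ref{rem:bresler} pointing to \cite{bresler1988maximum} (a full-covariance algorithm) and with the paper's own caveat in CL-OMP that the thresholded $\hat\gam$ ``is not exactly the MLE.'' Your proposal correctly proves the part that the citation covers, correctly locates the crux, but leaves unproven --- and would in fact have to retract --- the claim that the reported formulas minimise the diagonal-model likelihood itself.
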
 

\begin{remark} \label{rem:bresler} 
In scenarios where $\hat{\gam}$ in \eqref{eq:hat_gam} contains negative elements, one can calculate the constrained solution using \cite[Algorithm I]{bresler1988maximum}.  
A more straightforward method involves setting the negative elements of $\hat{\gam}$ to zero. While this approach does not precisely yield the MLE, it remains consistent in large samples.
\end{remark}

In our iterative algorithm, an update for $\sigmaw$ is  computed in two steps: 
\begin{enumerate} 
\item[1)] Identify the support as $\mathcal M^{(\iter)}  = \supp(H_K(\gam^{(\iter)}, \mathsf{peak}))$ 
\item[2)]  Update $\sigmaw$ as  minimizer of $\ell(\gam,\sigmaw \mid  \Y, \A_{\mathcal M^{(\iter)}} )$:    
\begin{align} \label{eq:update_noise}
\hat \sigma^{2(\iter)} &= \frac{1}{N- | \mathcal M^{(t)} | }\mathrm{tr} \big(  (\mathbf{I} - \A_{\mathcal{M}^{(\iter)}}\A_{\mathcal{M}^{(\iter)}}^+)  \SCM  \big) 
\end{align}
which follows from Lemma~\ref{lem:bresler}. 
\end{enumerate} 
Some other algorithms  use different update rules for noise power.  For instance, SAMV2 \cite{abeida2012iterative}  uses 
\[
\sigma^{2(\iter +1)} =   \frac{\tr \big( (\Minv^{(\iter)})^2  \S\big)}{\tr( (\Minv^{(\iter)})^2 )}.
\]
while SBL variants of  \cite{liu2012efficient} and   \cite{gerstoft2016multisnapshot,nannuru2019sparse} use the update \eqref{eq:update_noise}.

\subsection{BCD algorithm} 

The pseudo-code for block-coordinate descent (BCD) algorithm is provided  in \autoref{alg:SML}.   As name suggest the updates the parameter pair  $(\gam,\sigma^2)$ in blocks.  First,  in line~2 an update of  $\boldsymbol{\gamma}$ is calculated using  FP update \eqref{eq:gamma_update}. This is followed by update of the noise variance  $\sigmaw$ in line~4 using  \eqref{eq:update_noise} while taking into account either the  $K$-sparsity or $K$-peaksparsity of $\gam$ (line~3). Then, if the termination condition (line 5) is not met, one updates the inverse covariance matrix (line 6) and continue iterations until convergence.  
As the termination condition, we use  the relative error  
\beq  \label{eq:terminate_iters}
\| \gam^{\text{new}} - \gam^{\text{old}} \|_{\infty} / \| \gam^{\text{new}} \|_{\infty}<  \epsilon
\eeq  where  $\epsilon$ is a tolerance threshold, e.g., $\epsilon= 0.5e^{-4}$. The algorithm also takes as its input the Boolean variable \textsf{peak} which is set as \textsf{true} when  $K$-peaksparsity is assumed and \textsf{false} when $K$-sparsity is assumed.

\begin{algorithm}[!h]
 \caption{\textsf{CL-BCD}  algorithm}\label{alg:SML}
\DontPrintSemicolon
\SetKwInOut{Input}{Input} 
\SetKwInOut{Output}{Output}
\SetKwInOut{Init}{Initialize}
\SetNlSkip{1em}
\SetInd{0.5em}{0.5em}
\Input{$\mathbf{Y}$, $\A$, $K$, \textsf{peak}} 
\Init{$\S = L^{-1} \mathbf{Y} \mathbf{Y}^\hop$, $\gam = \mathbf{0}$, 
$\Minv = [\pdim/\tr(\S)] \mathbf{I}$} 

  \For{$\iter =1,\ldots,I_{max}$}{

\BlankLine

$\gam \gets (\gamma_i)_{M \times 1}$, $\gamma_i  \gets \frac{\a_i^\hop \Minv \SCM \Minv \a_i }{(\a_i^\hop \Minv \a_i)^2 } + \left( \gamma_i  -   \frac{1}{\a_i^\hop \Minv \a_i} \right)_+$

$ \mathcal{M}  \gets \supp \big( H_K(\gam,\mathsf{peak}) \big)$. 

$\sigmaw \gets  \frac{1}{N-K}\mathrm{tr} \left(  (\mathbf{I} - \A_{\mathcal{M}} \A_{\mathcal{M}}^+)  \S  \right) $ 

\If{ termination condition is met (see text)}{\nonl \textsf{break} } 

$\Minv \gets (\A \diag(\gam) \A^\hop + \sigmaw \mathbf{I})^{-1}$
  
   }
    \Output{$\mathcal{M}$, $ \boldsymbol{\gamma}$,  $\sigmaw$} 

\end{algorithm}

 One benefit of the  FP algorithm is that $L > \pdim$ is not required  but the algorithms can be applied even in a single snapshot ($L=1$) situation and an initial estimate of signal power  vector $\gam$ is not required. 
 Additionally, a small threshold can be defined so that when any $\gamma_i$ becomes sufficiently small (e.g., $10^{-12}$), the corresponding atom $\a_i$ is pruned from the model.  This can be useful when $M$ is excessively large.

\section{Greedy  pursuit covariance learning} \label{sec:OMP} 

We first recall the following result from \cite{faul2001analysis} (single measurement vector (SMV) case: $L=1$) and \cite{yardibi2010source} (MMV case, $L>1$). 

\begin{lemma}  \cite{faul2001analysis}, \cite[Appendix~B]{yardibi2010source} \label{lem:cond_lik} Consider the conditional likelihood of \eqref{eq:SML} where source powers  $\gamma_j$ for $j \neq i$ and the noise variance $\sigma^2$ are known. Then the conditional  negative log-likelihood function for the unknown $i^{\text{th}}$ source, defined as 
\begin{align} \label{eq:epsilon_i_v0}
\ell_i(\gamma &\mid \Y, \A, \{ \gamma_{j} \}_{j \neq i} , \sigma^2 )  \notag \\ 
&=  \tr( (\M_{\setminus i} + \gamma \a_i \a_i^\hop )^{-1}  \S) + \log | \M_{\setminus i} + \gamma \a_i \a_i^\hop | , 
\end{align}
has a unique optimal value 
\beq \label{eq:gamma_i_star}
\gamma_i  =  \max\left( \frac{\a_i^\hop \M_{\setminus i}^{-1}  (\S - \M_{\setminus i}) \M_{\setminus i}^{-1} \a_i }{(\a_i^\hop \M_{\setminus i}^{-1} \a_i)^2 } , 0\right), 
\eeq 
where  $\M_{\setminus i}$ is defined in \eqref{eq:Mbackslash}. 
\end{lemma}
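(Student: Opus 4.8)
The plan is to treat $\ell_i$ as a scalar function of the single nonnegative variable $\gamma$ and to reduce it to an explicit form whose minimizer can be read off directly. Since $\M_{\setminus i}$ is PDH, its inverse exists and is positive definite. I would first apply the Sherman-Morrison formula \eqref{eq:SMformula} to $(\M_{\setminus i} + \gamma \a_i \a_i^\hop)^{-1}$ and the matrix determinant lemma to obtain $\log|\M_{\setminus i} + \gamma \a_i \a_i^\hop| = \log|\M_{\setminus i}| + \log(1 + \gamma\, \a_i^\hop \M_{\setminus i}^{-1}\a_i)$. Substituting both into \eqref{eq:epsilon_i_v0} and discarding the $\gamma$-independent terms $\tr(\M_{\setminus i}^{-1}\S) + \log|\M_{\setminus i}|$ reduces the problem to minimizing
\[
g(\gamma) = -\frac{\gamma\, q}{1 + \gamma\, p} + \log(1 + \gamma\, p), \qquad \gamma \geq 0,
\]
where $p = \a_i^\hop \M_{\setminus i}^{-1}\a_i > 0$ and $q = \a_i^\hop \M_{\setminus i}^{-1}\S\M_{\setminus i}^{-1}\a_i \geq 0$ are constants independent of $\gamma$.

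Next I would differentiate. A short computation gives
\[
g'(\gamma) = \frac{p}{1+\gamma p} - \frac{q}{(1+\gamma p)^2} = \frac{p(1 + \gamma p) - q}{(1 + \gamma p)^2}.
\]
Because $p > 0$, the denominator is strictly positive for all $\gamma \geq 0$, while the numerator $p + p^2\gamma - q$ is affine and strictly increasing in $\gamma$. Hence $g'$ has exactly one zero, at which it changes sign from negative to positive; consequently the unconstrained minimizer is unique and equals
\[
\gamma^\star = \frac{q - p}{p^2} = \frac{\a_i^\hop \M_{\setminus i}^{-1}\S\M_{\setminus i}^{-1}\a_i}{(\a_i^\hop\M_{\setminus i}^{-1}\a_i)^2} - \frac{1}{\a_i^\hop \M_{\setminus i}^{-1}\a_i}.
\]
Using $q - p = \a_i^\hop \M_{\setminus i}^{-1}(\S - \M_{\setminus i})\M_{\setminus i}^{-1}\a_i$, this is precisely the unconstrained critical point already recorded in \eqref{eq:gamma_i_soll}.

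It remains to impose the constraint $\gamma \geq 0$. If $\gamma^\star \geq 0$, the single sign change of $g'$ shows it is also the constrained minimizer. If $\gamma^\star < 0$, then $g'(\gamma) > 0$ for every $\gamma \geq 0$, so $g$ is strictly increasing on $[0,\infty)$ and the constrained minimizer is the boundary point $\gamma = 0$. The two cases combine into $\gamma_i = \max(\gamma^\star, 0)$, which is exactly \eqref{eq:gamma_i_star}.

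The step that deserves the most care is the uniqueness claim. The objective $g$ is not globally convex---one can check that $g''$ changes sign---so uniqueness does not follow from convexity; the clean argument is instead the strict monotonicity of the affine numerator of $g'$, which forces a single sign change and hence a unique minimum over $[0,\infty)$. Everything else is routine once the Sherman-Morrison and determinant-lemma reductions are performed and the scalars $p$ and $q$ are correctly identified.
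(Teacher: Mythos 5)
Your proof is correct. Note that the paper itself offers no proof of Lemma~\ref{lem:cond_lik}: the result is imported by citation from \cite{faul2001analysis} and \cite[Appendix~B]{yardibi2010source}, so there is no in-paper argument to compare against step by step. Your derivation is the standard one underlying those references, and it is fully consistent with the machinery the paper does use: the Sherman--Morrison plus matrix-determinant-lemma reduction to the scalar profile $g(\gamma) = -\gamma q/(1+\gamma p) + \log(1+\gamma p)$, with $p = \a_i^\hop \M_{\setminus i}^{-1}\a_i$ and $q = \a_i^\hop \M_{\setminus i}^{-1}\S\M_{\setminus i}^{-1}\a_i$, is exactly the computation the paper performs later when it derives the sweep-stage error \eqref{eq:epsilon_i} for CL-OMP, and your unconstrained critical point reproduces \eqref{eq:gamma_i_soll}. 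Two aspects of your write-up deserve explicit praise: first, the uniqueness argument via the affine, strictly increasing numerator of $g'$ over a strictly positive denominator (single sign change of $g'$) is the right one, since as you observe $g$ is not globally convex and a convexity-based shortcut would be invalid; second, the boundary analysis showing that $\gamma^\star<0$ forces $g'>0$ on $[0,\infty)$, hence the constrained minimizer $\gamma=0$, correctly justifies the $\max(\cdot,0)$ in \eqref{eq:gamma_i_star}, which is precisely the reading of ``unique optimal value'' under the non-negativity constraint on signal powers. The only implicit assumptions worth flagging are $\a_i \neq \mathbf{0}$ (so that $p>0$) and $\gamma \ge 0$ keeping $\M_{\setminus i} + \gamma\a_i\a_i^\hop$ positive definite so that the Sherman--Morrison and determinant identities apply; both hold in the paper's setting.
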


Note that a co-ordinatewise algorithm that is constructed using iterative updates  \eqref{eq:gamma_i_star} for each $\gamma_i$ in turn is equivalent with previously cited CWO algorithm using updates \eqref{eq:CWO}. This equality follows by simply invoking matrix inversion lemma in \eqref{eq:gamma_i_star} \cite[Appendix~B]{yardibi2010source}.
Lemma~\ref{lem:cond_lik} will serve as the foundation for developing the CL-based Orthogonal Matching Pursuit (CL-OMP) algorithm, whose pseudocode is presented in \autoref{alg:SML3}. This algorithm follows a matching pursuit strategy similar to the one outlined in \cite[Table~3.1]{sparse_book:2010}.

{\bf Initialization} phase. Set $k=0$, and  $\gam^{(0)}= \mathbf{0}_{M \times 1}$, $ \sigma^{2(0)}= [\tr(\S)/\pdim ]\mathbf{I}$,  and  $\mathcal M^{(0)}= \mathsf{supp}(\gam^{(0)})=\emptyset$ as initial solutions of signal and noise powers and the signal support, respectively. Then $\M^{(0)} =  \A \diag(\gam^{(0)}) \A^\hop + \sigma^{2(0)} \mathbf{I} = \sigma^{2(0)} \mathbf{I} $ is the initial covariance matrix at the start of iterations.  

{\bf Main Iteration} phase consists of the following steps: 

 {\it 1) Sweep:} Compute the errors 
\begin{align} \label{eq:epsilon_i_v0}
\epsilon_i  &=  \min_{\gamma \geq 0} \ell_i(\gamma  \mid \Y, \A, \{\gamma^{(k)}_j \}_{j \neq i}, \sigma^{2(k)})  
\end{align}
for each $i \in [\![ M]\!] \setminus \mathcal M^{(k)} $ 
using its unique optimal value as given by Lemma~\ref{lem:cond_lik}: 
\beq \label{eq:gamma_i_star2}
\gamma_i  =  \max\Big( \frac{\a_i^\hop (\M^{(k)})^{-1}  (\S - \M^{(k)})  (\M^{(k)})^{-1} \a_i }{(\a_i^\hop (\M^{(k)})^{-1} \a_i)^2 } , 0\Big). 
\eeq 
 The sweep stage is  comprised of  lines 2 and 3 in  \autoref{alg:SML3}. 

 {\it  2) Update support:}  Find a minimizer, $i_k$ of $\epsilon_i$: $\forall i  \not \in \mathcal M^{(k)}$, $\epsilon_{i_k} \leq \epsilon_i$, 
and update the support $\mathcal M^{(k+1)} = \mathcal M^{(k)} \cup \{ i_k\}$.  
This step  corresponds to line 4 in   \autoref{alg:SML3}.   

 {\it 3) Update provisional solution}: 
compute
 \[
 (\hat{\mathbf{g}}, \hat{\sigma}^{2})= \arg \min_{\mathbf{g}, \sigma^2 }  \ell(\mathbf{g},\sigma^{2}  \mid \Y, \A_{\mathcal M^{(k+1)}}), 
 \]
 where $\ell$ is the LLF defined in \eqref{eq:SML}.  These solutions are obtained from Lemma~\ref{lem:bresler} and calculated in lines $5-7$  in   \autoref{alg:SML3}, respectively. 
The obtained signal power is constrained to be non-negative, which is not exactly the MLE. Alternative option is to compute the true MLE (the constrained solution) using \cite[Algorithm I]{bresler1988maximum} as noted in Remark~\ref{rem:bresler}. 

{\it 4) Update the covariance matrix}: Compute $\M^{(k+1)}=  \A_{\mathcal M} \diag(\hat{\mathbf{g}})  \A^\hop_{\mathcal M} + \hat \sigma^{2} \mathbf{I}$.  This stage is implemented by line 8  in \autoref{alg:SML3}.

 {\it  5) Stopping rule}: stop after $K$ iterations, and otherwise increment $k$ by 1 and repeat steps 1)-4). 
It is worth noting that alternative stopping criteria can be employed, such as halting the process when the variance $ \sigma^{2(k)}$ falls below a predefined threshold. This criterion is particularly valuable in applications where the noise level can be accurately estimated or is known a priori but number of sources $K$ is unknown.

Note that the sweep stage gives  (after some simple algebra) the following value for the error in \eqref{eq:epsilon_i_v0}:
\beq \label{eq:epsilon_i}
\epsilon_i = c+ \log (1 + \gamma_i \a_i^\hop \M^{-1} \a_i )  - \gamma_i  \frac{\a_i^\hop \M^{-1}  \S  \M^{-1} \a_i }{1+ \gamma_i \a_i^\hop \M^{-1} \a_i}
\eeq 
where  we have for simplicity of notation written $\M = \M^{(k)}$ and where $c$ 
denotes an irrelevant constant that is not dependent on $\gamma_i$. Thus w.lo.g. we set $c=0$. 
Then using that $\gamma_i$  is given by  \eqref{eq:gamma_i_star2} we can write
\begin{align*}
 1+ \gamma_i \a_i^\hop \M^{-1} \a_i &= 1+  \frac{ \a_i^\hop \M^{-1}  (\S - \M)  \M^{-1} \a_i }{ (\a_i^\hop \M^{-1} \a_i)^2 } \a_i^\hop \M^{-1} \a_i   \\
 &= 1 + \frac{\a_i^\hop \M^{-1}  \S \M^{-1} \a_i  - \a_i^\hop \M^{-1} \a_i}{\a_i^\hop \M^{-1} \a_i} \\
 &= \frac{\a_i^\hop \M^{-1}  \S \M^{-1} \a_i}{\a_i^\hop \M^{-1} \a_i}  
 \end{align*} 
 in the case that $\gamma_i>0$. 
Substituting this into the denominator of the last term in \eqref{eq:epsilon_i}, we obtain 
\beq \label{eq:epsilon_i_v2}
\epsilon_i = \log (1 + \gamma_i \a_i^\hop \M^{-1} \a_i )  -  \gamma_i \a_i^\hop \M^{-1} \a_i .
\eeq
If $\gamma_i = 0$, then  $\epsilon_i=0$. This explains line 3 in   \autoref{alg:SML3}.   

In the provisional solution update,  one finds the minimizer of  $\ell(\gam,\sigma^{2}  \mid \Y, \A_{\mathcal M})$ 
  where for notational simplicity we wrote $\mathcal{M} = \mathcal{M}^{(k+1)}$ for the current support with $|\mathcal M | = k$.  Thus the problem to be solved is 
\begin{align*}
\underset{\gam \in \mathbb{R}^k_{\geq 0},\sigma^2>0}{\mbox{minimize}}  &\tr(  (\A_{\mathcal M} \diag(\gam) \A^\hop_{\mathcal M}  + \sigma^2 \mathbf{I})^{-1} \SCM )  \\ 
&\qquad +  \log |  \A_{\mathcal M}  \diag(\gam) \A^\hop_{\mathcal M}  + \sigma^2 \mathbf{I} |  .
 \end{align*} 
 As already noted, taking the derivative of this equation w.r.t. $\gamma_i$ and setting it to zero, gives the l.e.: 
\[
0 =   \a_i^\hop \M^{-1}  (\S - \M) \M^{-1} \a_i  \quad \forall i \in \mathcal M. 
\]
This in turn implies that in the next iteration $\gamma_i$ in \eqref{eq:gamma_i_star2} 
will take value $\gamma_i \approx 0$ for $i \in \mathcal M$. Thus unlikely these columns will be chosen again for the support in the next iterations. This explains the name of this  greedy pursuit algorithm: the method is similar to  conventional OMP \cite{davis1997adaptive} or SOMP \cite{tropp2006algorithms}  where no atom is ever chosen twice in the sweep stage.

\begin{algorithm}[!h]
 \caption{\textsf{CL-OMP} algorithm}\label{alg:SML3}
\DontPrintSemicolon
\SetKwInOut{Input}{Input} 
\SetKwInOut{Output}{Output}
\SetKwInOut{Init}{Initialize}
\SetNlSkip{1em}
\SetInd{0.5em}{0.5em}
\Input{$\mathbf{Y}$, $\A$, $K$} 
\Init{$\S = L^{-1} \mathbf{Y} \mathbf{Y}^\hop$,  $\M=[\tr(\S)/p] \mathbf{I}$, $\mathcal M =  \emptyset$}

 \For{$k =1,\ldots,M$}{

$ \boldsymbol{\gamma} =  (\gamma_i)_{M \times 1}$, $\gamma_i \gets \max\Big( \frac{\a_i^\hop \M^{-1}  (\S - \M)  \M^{-1} \a_i }{(\a_i^\hop \M^{-1} \a_i)^2 }, 0  \Big)$

\BlankLine
$\boldsymbol{\epsilon}=(\epsilon_i)   \gets  \big(  \log (1 + \gamma_i \a_i^\hop \M^{-1} \a_i )  -  \gamma_i \a_i^\hop \M^{-1} \a_i  \big)_{M \times 1}$

\BlankLine
$\mathcal M   \gets \mathcal M  \cup \{ i_k \}$ with $i_k  \gets   \arg \min_{i \not \in \mathcal M} \epsilon_i $

\BlankLine
$\sigma^2 \gets  \frac{1}{\pdim-k}\mathrm{tr} \big(  (\mathbf{I} - \A_{\mathcal{M}} \A_{\mathcal{M}}^+)  \S  \big) $

 \BlankLine
 $\gam_{\mathcal M}  \gets  \max\big(\diag\big(\A_{\mathcal{M}}^+ ( \S -  \sigma^2 \mathbf{I})  \A_{\mathcal{M}}^{+\hop} \big),0\big)$

 \BlankLine
 $\gam_{ \mathcal M^\complement}   \gets  \mathbf{0}$ 
 
 \BlankLine
$ \M \gets \A \diag(\gam) \A^\hop + \sigma^{2} \mathbf{I}$

\BlankLine
\If{stopping rule is met (see text)}{\nonl \textsf{break}}

   }
 \Output{$\mathcal{M}$, $ \boldsymbol{\gamma}$,  $\sigmaw$} 
\end{algorithm}

It is interesting to draw some parallel to a greedy strategy proposed in \cite{tipping2003fast}. The algorithm in \cite{tipping2003fast} was developed for the SMV case, 
with main motivation of  providing computationally lighter algorithm for solving the marginal likelihood of SBL \cite{tipping2001sparse}. 
Its generalization to MMV model is not straightforward. In particular, its benefits (computational speed) is hard to maintain  in  MMV model. 
The algorithm iteratively considers candidate atoms that are  to be added to (or deleted from) the current model in support $\mathcal M$ either in random or  in sequential order from $m=1,2,\ldots,M$. The order of the model (number of atoms included) can increase at each iteration, and requires continuous updates of statistics such as \cite[Eq. (23)-(24)]{tipping2003fast}. 
  Such addition/deletion approach can provide some benefits, albeit increase computation times. Any exploration of this approach is left for future work.

\section{Sparse signal recovery: simulation studies} \label{sec:SSR} 

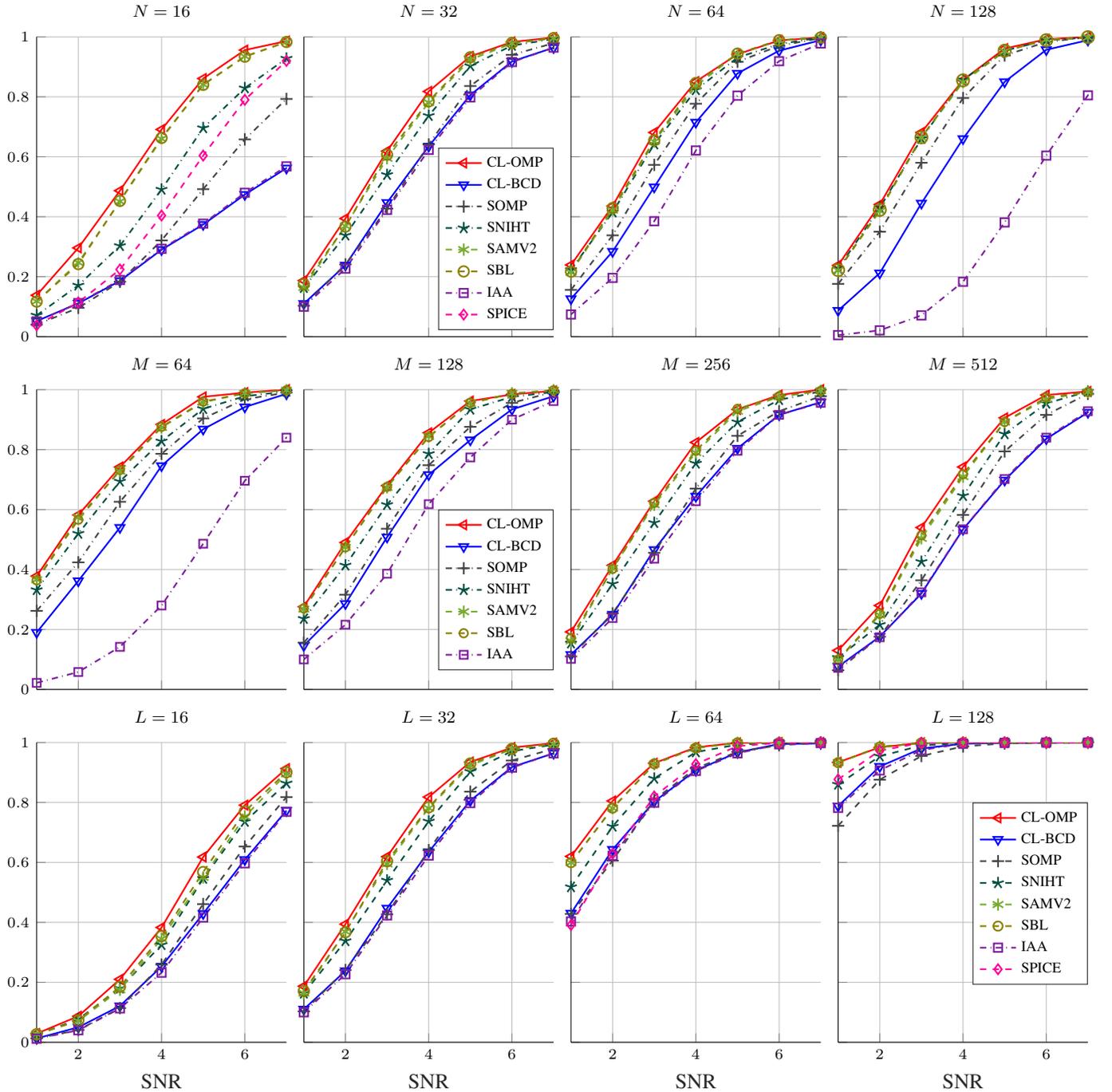
\begin{figure*}[!t]
\centerline{\definecolor{mycolor1}{rgb}{0.26600,0.27400,0.28800}%
\definecolor{mycolor2}{rgb}{0.00784,0.31373,0.26275}%
\definecolor{mycolor3}{rgb}{0.46600,0.67400,0.18800}%
\definecolor{mycolor4}{rgb}{0.54510,0.50196,0.00000}%
\definecolor{mycolor5}{rgb}{0.49412,0.11765,0.61176}%
\definecolor{mycolor6}{rgb}{0.99608,0.00392,0.60392}%

\begin{tikzpicture}
\begin{groupplot}[group style={group size=4 by 1,horizontal sep=0.29cm,vertical sep=0.27cm}]

\nextgroupplot[
width=0.47\columnwidth,
height=5.0cm,
tick label style={font=\scriptsize} , 
title style={font=\footnotesize},
xticklabel=\empty,
scale only axis,
xmin=1,
xmax=7,
xlabel style={font=\color{white!15!black}},
ymin=0,
ymax=1,
ylabel style={font=\color{white!15!black}},
axis background/.style={fill=white},
title={$N=16$},
axis x line*=bottom,
axis y line*=left,
xmajorgrids,
ymajorgrids,
legend style={legend cell align=left, align=left, draw=white!15!black}
]
\addplot [color=red, line width=0.8pt, mark size=2.5pt, mark=triangle, mark options={solid, rotate=90, red}]
  table[row sep=crcr]{%
1	0.138\\
2	0.296\\
3	0.487\\
4	0.691\\
5	0.861\\
6	0.956\\
7	0.986\\
8	0.997\\
9	0.999\\
};
%\addlegendentry{CL-OMP}

\addplot [color=blue, line width=0.8pt, mark size=2.5pt, mark=triangle, mark options={solid, rotate=180, blue}]
  table[row sep=crcr]{%
1	0.052\\
2	0.111\\
3	0.186\\
4	0.29\\
5	0.374\\
6	0.474\\
7	0.562\\
8	0.662\\
9	0.739\\
};
%\addlegendentry{CL-BCD}

\addplot [color=mycolor1, dashdotted, line width=0.8pt, mark size=2.9pt, mark=+, mark options={solid, mycolor1}]
  table[row sep=crcr]{%
1	0.04\\
2	0.096\\
3	0.183\\
4	0.321\\
5	0.492\\
6	0.658\\
7	0.793\\
8	0.889\\
9	0.947\\
};
%\addlegendentry{SOMP}

\addplot [color=mycolor2, dashdotted, line width=0.8pt, mark size=2.9pt, mark=star, mark options={solid, mycolor2}]
  table[row sep=crcr]{%
1	0.071\\
2	0.171\\
3	0.303\\
4	0.491\\
5	0.696\\
6	0.829\\
7	0.928\\
8	0.972\\
9	0.989\\
};
%\addlegendentry{SNIHT}

\addplot [color=mycolor3, dashed, line width=0.8pt, mark size=2.5pt, mark=asterisk, mark options={solid, mycolor3}]
  table[row sep=crcr]{%
1	0.118\\
2	0.247\\
3	0.451\\
4	0.66\\
5	0.841\\
6	0.935\\
7	0.983\\
8	0.994\\
9	0.999\\
};
%\addlegendentry{SAMV2}

\addplot [color=mycolor4, dashed, line width=0.8pt, mark size=2.5pt, mark=o, mark options={solid, mycolor4}]
  table[row sep=crcr]{%
1	0.117\\
2	0.242\\
3	0.453\\
4	0.663\\
5	0.84\\
6	0.934\\
7	0.983\\
8	0.995\\
9	0.999\\
};
%\addlegendentry{SBL}

\addplot [color=mycolor5, dashdotted, line width=0.8pt, mark size=2.0pt, mark=square, mark options={solid, mycolor5}]
  table[row sep=crcr]{%
1	0.051\\
2	0.112\\
3	0.189\\
4	0.293\\
5	0.377\\
6	0.48\\
7	0.568\\
8	0.668\\
9	0.745\\
};
%\addlegendentry{IAA}

%\end{axis}
\addplot [color=mycolor6, dashed, line width=0.8pt, mark size=2.4pt, mark=diamond, mark options={solid, mycolor6}]
  table[row sep=crcr]{%
1	0.038\\
2	0.114\\
3	0.224\\
4	0.404\\
5	0.604\\
6	0.79\\
7	0.92\\
};
%\addlegendentry{SPICE}

\nextgroupplot[
width=0.47\columnwidth,
height=5.0cm,
scale only axis,
title style={font=\footnotesize},
yticklabel=\empty,
xticklabel=\empty,
title style={font=\footnotesize},
tick label style={font=\scriptsize} , 
xmin=1,
xmax=7,
xlabel style={font=\color{white!15!black}},
ymin=0,
ymax=1,
axis background/.style={fill=white},
title={$N=32$},
axis x line*=bottom,
axis y line*=left,
xmajorgrids,
ymajorgrids,
legend style={at={(1.0,0.63)}, legend cell align=left, align=left, draw=white!15!black,font=\scriptsize} % 
]
\addplot [color=red, line width=0.8pt, mark size=2.5pt, mark=triangle, mark options={solid, rotate=90, red}]
  table[row sep=crcr]{%
1	0.187\\
2	0.394\\
3	0.619\\
4	0.818\\
5	0.935\\
6	0.983\\
7	0.998\\
8	0.999\\
9	1\\
};
\addlegendentry{CL-OMP}

\addplot [color=blue, line width=0.8pt, mark size=2.5pt, mark=triangle, mark options={solid, rotate=180, blue}]
  table[row sep=crcr]{%
1	0.11\\
2	0.238\\
3	0.447\\
4	0.635\\
5	0.806\\
6	0.918\\
7	0.964\\
8	0.985\\
9	0.997\\
};
\addlegendentry{CL-BCD}

\addplot [color=mycolor1, dashdotted, line width=0.8pt, mark size=2.9pt, mark=+, mark options={solid, mycolor1}]
  table[row sep=crcr]{%
1	0.103\\
2	0.241\\
3	0.427\\
4	0.641\\
5	0.836\\
6	0.94\\
7	0.979\\
8	0.997\\
9	1\\
};
\addlegendentry{SOMP}

\addplot [color=mycolor2, dashdotted, line width=0.8pt, mark size=2.9pt, mark=star, mark options={solid, mycolor2}]
  table[row sep=crcr]{%
1	0.163\\
2	0.338\\
3	0.54\\
4	0.737\\
5	0.903\\
6	0.971\\
7	0.993\\
8	0.998\\
9	0.999\\
};
\addlegendentry{SNIHT}

\addplot [color=mycolor3, dashed, line width=0.8pt, mark size=2.9pt, mark=asterisk, mark options={solid, mycolor3}]
  table[row sep=crcr]{%
1	0.164\\
2	0.369\\
3	0.595\\
4	0.781\\
5	0.924\\
6	0.978\\
7	0.995\\
8	0.999\\
9	0.999\\
};
\addlegendentry{SAMV2}

\addplot [color=mycolor4, dashed, line width=0.8pt, mark size=2.5pt, mark=o, mark options={solid, mycolor4}]
  table[row sep=crcr]{%
1	0.175\\
2	0.367\\
3	0.607\\
4	0.785\\
5	0.932\\
6	0.98\\
7	0.996\\
8	0.999\\
9	0.999\\
};
\addlegendentry{SBL}

\addplot [color=mycolor5, dashdotted, line width=0.8pt, mark size=2.0pt, mark=square, mark options={solid, mycolor5}]
  table[row sep=crcr]{%
1	0.1\\
2	0.227\\
3	0.423\\
4	0.623\\
5	0.798\\
6	0.916\\
7	0.964\\
8	0.986\\
9	0.997\\
};
\addlegendentry{IAA}

\addplot [color=mycolor6, dashed, line width=0.8pt, mark size=2.4pt, mark=diamond, mark options={solid, mycolor6}]
  table[row sep=crcr]{%
1	10.038\\
2	10.114\\
3	10.224\\
4	10.404\\
5	10.604\\
6	10.79\\
7	10.92\\
};
\addlegendentry{SPICE}
%\end{axis}

\nextgroupplot[
width=0.47\columnwidth,
height=5.0cm,
scale only axis,
title style={font=\footnotesize},
yticklabel=\empty,
xticklabel=\empty,
tick label style={font=\scriptsize} , 
xmin=1,
xmax=7,
xlabel style={font=\color{white!15!black}},
ymin=0,
ymax=1,
axis background/.style={fill=white},
title={$N=64$},
axis x line*=bottom,
axis y line*=left,
xmajorgrids,
ymajorgrids,
legend style={legend cell align=left, align=left, draw=white!15!black}
]
\addplot [color=red, line width=0.8pt, mark size=2.5pt, mark=triangle, mark options={solid, rotate=90, red}]
  table[row sep=crcr]{%
1	0.239\\
2	0.436\\
3	0.681\\
4	0.851\\
5	0.942\\
6	0.988\\
7	0.998\\
8	1\\
9	1\\
};
%\addlegendentry{CL-OMP}

\addplot [color=blue, line width=0.8pt, mark size=2.5pt, mark=triangle, mark options={solid, rotate=180, blue}]
  table[row sep=crcr]{%
1	0.126\\
2	0.284\\
3	0.499\\
4	0.715\\
5	0.878\\
6	0.954\\
7	0.989\\
8	0.999\\
9	1\\
};
%\addlegendentry{CL-BCD}

\addplot [color=mycolor1, dashdotted, line width=0.8pt, mark size=2.9pt, mark=+, mark options={solid, mycolor1}]
  table[row sep=crcr]{%
1	0.156\\
2	0.338\\
3	0.573\\
4	0.777\\
5	0.917\\
6	0.97\\
7	0.996\\
8	0.999\\
9	1\\
};
%\addlegendentry{SOMP}

\addplot [color=mycolor2, dashdotted, line width=0.8pt, mark size=2.9pt, mark=star, mark options={solid, mycolor2}]
  table[row sep=crcr]{%
1	0.223\\
2	0.415\\
3	0.642\\
4	0.823\\
5	0.931\\
6	0.976\\
7	0.998\\
8	0.999\\
9	1\\
};
%\addlegendentry{SNIHT}

\addplot [color=mycolor3, dashed, line width=0.8pt, mark size=2.9pt, mark=asterisk, mark options={solid, mycolor3}]
  table[row sep=crcr]{%
1	0.214\\
2	0.422\\
3	0.651\\
4	0.836\\
5	0.944\\
6	0.988\\
7	0.998\\
8	1\\
9	1\\
};
%\addlegendentry{SAMV2}

\addplot [color=mycolor4, dashed, line width=0.8pt, mark size=2.5pt, mark=o, mark options={solid, mycolor4}]
  table[row sep=crcr]{%
1	0.217\\
2	0.429\\
3	0.655\\
4	0.846\\
5	0.944\\
6	0.989\\
7	0.998\\
8	1\\
9	1\\
};
%\addlegendentry{SBL}

\addplot [color=mycolor5, dashdotted, line width=0.8pt, mark size=2.0pt, mark=square, mark options={solid, mycolor5}]
  table[row sep=crcr]{%
1	0.074\\
2	0.196\\
3	0.385\\
4	0.621\\
5	0.804\\
6	0.919\\
7	0.978\\
8	0.996\\
9	0.999\\
};
%\addlegendentry{IAA}

%\end{axis}

\nextgroupplot[
width=0.47\columnwidth,
height=5.0cm,
scale only axis,
yticklabel=\empty,
xticklabel=\empty,
tick label style={font=\scriptsize} , 
xmin=1,
xmax=7,
xlabel style={font=\color{white!15!black}},
ymin=0,
ymax=1,
axis background/.style={fill=white},
title style={font=\footnotesize},
title={$N=128$},
axis x line*=bottom,
axis y line*=left,
xmajorgrids,
ymajorgrids,
legend style={legend cell align=left, align=left, draw=white!15!black}
]
\addplot [color=red, line width=0.8pt, mark size=2.5pt, mark=triangle, mark options={solid, rotate=90, red}]
  table[row sep=crcr]{%
1	0.238\\
2	0.441\\
3	0.681\\
4	0.856\\
5	0.961\\
6	0.992\\
7	1\\
8	1\\
9	1\\
};
%\addlegendentry{CL-OMP}

\addplot [color=blue, line width=0.8pt, mark size=2.5pt, mark=triangle, mark options={solid, rotate=180, blue}]
  table[row sep=crcr]{%
1	0.087\\
2	0.212\\
3	0.445\\
4	0.66\\
5	0.85\\
6	0.957\\
7	0.989\\
8	0.999\\
9	1\\
};
%\addlegendentry{CL-BCD}

\addplot [color=mycolor1, dashdotted, line width=0.8pt, mark size=2.9pt, mark=+, mark options={solid, mycolor1}]
  table[row sep=crcr]{%
1	0.176\\
2	0.35\\
3	0.58\\
4	0.796\\
5	0.938\\
6	0.983\\
7	0.999\\
8	1\\
9	1\\
};
%\addlegendentry{SOMP}

\addplot [color=mycolor2, dashdotted, line width=0.8pt, mark size=2.9pt, mark=star, mark options={solid, mycolor2}]
  table[row sep=crcr]{%
1	0.227\\
2	0.43\\
3	0.66\\
4	0.857\\
5	0.954\\
6	0.989\\
7	0.998\\
8	1\\
9	1\\
};
%\addlegendentry{SNIHT}

\addplot [color=mycolor3, dashed, line width=0.8pt, mark size=2.9pt, mark=asterisk, mark options={solid, mycolor3}]
  table[row sep=crcr]{%
1	0.221\\
2	0.419\\
3	0.664\\
4	0.849\\
5	0.952\\
6	0.991\\
7	0.998\\
8	1\\
9	1\\
};
%\addlegendentry{SAMV2}

\addplot [color=mycolor4, dashed, line width=0.8pt, mark size=2.9pt, mark=o, mark options={solid, mycolor4}]
  table[row sep=crcr]{%
1	0.222\\
2	0.423\\
3	0.665\\
4	0.855\\
5	0.954\\
6	0.989\\
7	0.999\\
8	1\\
9	1\\
};
%\addlegendentry{SBL}

\addplot [color=mycolor5, dashdotted, line width=0.8pt, mark size=2.0pt, mark=square, mark options={solid, mycolor5}]
  table[row sep=crcr]{%
1	0.005\\
2	0.021\\
3	0.071\\
4	0.183\\
5	0.381\\
6	0.604\\
7	0.805\\
8	0.937\\
9	0.987\\
};
%\addlegendentry{IAA}

%\end{axis}
\end{groupplot}

\end{tikzpicture}%}
\centerline{\definecolor{mycolor1}{rgb}{0.26600,0.27400,0.28800}%
\definecolor{mycolor2}{rgb}{0.00784,0.31373,0.26275}%
\definecolor{mycolor3}{rgb}{0.46600,0.67400,0.18800}%
\definecolor{mycolor4}{rgb}{0.54510,0.50196,0.00000}%
\definecolor{mycolor5}{rgb}{0.49412,0.11765,0.61176}%
\begin{tikzpicture}
\begin{groupplot}[group style={group size=4 by 1,horizontal sep=0.29cm,vertical sep=0.27cm}]

\nextgroupplot[
width=0.47\columnwidth,
height=5.0cm,
tick label style={font=\scriptsize} , 
title style={font=\footnotesize},
xticklabel=\empty,
scale only axis,
xmin=1,
xmax=7,
xlabel style={font=\color{white!15!black}},
ymin=0,
ymax=1,
ylabel style={font=\color{white!15!black}},
axis background/.style={fill=white},
title={$M=64$},
axis x line*=bottom,
axis y line*=left,
xmajorgrids,
ymajorgrids,
legend style={at={(0.03,0.97)}, anchor=north west, legend cell align=left, align=left, draw=white!15!black,font=\scriptsize}
]
\addplot [color=red, line width=0.8pt, mark size=2.5pt, mark=triangle, mark options={solid, rotate=90, red}]
  table[row sep=crcr]{%
1	0.378\\
2	0.582\\
3	0.742\\
4	0.884\\
5	0.976\\
6	0.99\\
7	1\\
8	1\\
9	1\\
};
%\addlegendentry{CL-OMP}

\addplot [color=blue, line width=0.8pt, mark size=2.5pt, mark=triangle, mark options={solid, rotate=180, blue}]
  table[row sep=crcr]{%
1	0.19\\
2	0.362\\
3	0.54\\
4	0.746\\
5	0.868\\
6	0.942\\
7	0.986\\
8	0.996\\
9	1\\
};
%\addlegendentry{CL-BCD}

\addplot [color=mycolor1, dashdotted, line width=0.8pt, mark size=2.9pt, mark=+, mark options={solid, mycolor1}]
  table[row sep=crcr]{%
1	0.262\\
2	0.424\\
3	0.626\\
4	0.786\\
5	0.904\\
6	0.968\\
7	0.99\\
8	1\\
9	1\\
};
%\addlegendentry{SOMP}

\addplot [color=mycolor2, dashdotted, line width=0.8pt, mark size=2.9pt, mark=star, mark options={solid, mycolor2}]
  table[row sep=crcr]{%
1	0.332\\
2	0.52\\
3	0.694\\
4	0.828\\
5	0.936\\
6	0.978\\
7	0.994\\
8	1\\
9	1\\
};
%\addlegendentry{SNIHT}

\addplot [color=mycolor3, dashed, line width=0.8pt, mark size=2.9pt, mark=asterisk, mark options={solid, mycolor3}]
  table[row sep=crcr]{%
1	0.366\\
2	0.574\\
3	0.728\\
4	0.874\\
5	0.958\\
6	0.988\\
7	0.998\\
8	1\\
9	1\\
};
%\addlegendentry{SAMV2}

\addplot [color=mycolor4, dashed, line width=0.8pt, mark size=2.0pt, mark=o, mark options={solid, mycolor4}]
  table[row sep=crcr]{%
1	0.364\\
2	0.566\\
3	0.732\\
4	0.876\\
5	0.96\\
6	0.988\\
7	0.998\\
8	1\\
9	1\\
};
%\addlegendentry{SBL}

\addplot [color=mycolor5, dashdotted, line width=0.8pt, mark size=1.9pt, mark=square, mark options={solid, mycolor5}]
  table[row sep=crcr]{%
1	0.022\\
2	0.058\\
3	0.142\\
4	0.28\\
5	0.486\\
6	0.696\\
7	0.84\\
8	0.93\\
9	0.974\\
};
%\addlegendentry{IAA}

%\end{axis}

\nextgroupplot[
width=0.47\columnwidth,
height=5.0cm,
title style={font=\footnotesize},
scale only axis,
yticklabel=\empty,
xticklabel=\empty,
tick label style={font=\scriptsize} , 
xmin=1,
xmax=7,
xlabel style={font=\color{white!15!black}},
ymin=0,
ymax=1,
axis background/.style={fill=white},
title={$M=128$},
axis x line*=bottom,
axis y line*=left,
xmajorgrids,
ymajorgrids,
legend style={at={(1.0,0.6)}, legend cell align=left, align=left, draw=white!15!black,font=\scriptsize} % 
]
\addplot [color=red, line width=0.8pt, mark size=2.5pt, mark=triangle, mark options={solid, rotate=90, red}]
  table[row sep=crcr]{%
1	0.276\\
2	0.49\\
3	0.68\\
4	0.856\\
5	0.962\\
6	0.984\\
7	0.996\\
8	1\\
9	1\\
};
\addlegendentry{CL-OMP}

\addplot [color=blue, line width=0.8pt, mark size=2.5pt, mark=triangle, mark options={solid, rotate=180, blue}]
  table[row sep=crcr]{%
1	0.146\\
2	0.286\\
3	0.508\\
4	0.716\\
5	0.832\\
6	0.934\\
7	0.978\\
8	0.994\\
9	0.998\\
};
\addlegendentry{CL-BCD}

\addplot [color=mycolor1, dashdotted, line width=0.8pt, mark size=2.9pt, mark=+, mark options={solid, mycolor1}]
  table[row sep=crcr]{%
1	0.156\\
2	0.316\\
3	0.536\\
4	0.748\\
5	0.876\\
6	0.956\\
7	0.992\\
8	0.998\\
9	0.998\\
};
\addlegendentry{SOMP}

\addplot [color=mycolor2, dashdotted, line width=0.8pt, mark size=2.9pt, mark=star, mark options={solid, mycolor2}]
  table[row sep=crcr]{%
1	0.236\\
2	0.414\\
3	0.616\\
4	0.786\\
5	0.934\\
6	0.976\\
7	0.992\\
8	0.998\\
9	1\\
};
\addlegendentry{SNIHT}

\addplot [color=mycolor3, dashed, line width=0.8pt, mark size=2.9pt, mark=asterisk, mark options={solid, mycolor3}]
  table[row sep=crcr]{%
1	0.268\\
2	0.474\\
3	0.674\\
4	0.844\\
5	0.954\\
6	0.988\\
7	0.998\\
8	1\\
9	1\\
};
\addlegendentry{SAMV2}

\addplot [color=mycolor4, dashed, line width=0.8pt, mark size=2.0pt, mark=o, mark options={solid, mycolor4}]
  table[row sep=crcr]{%
1	0.27\\
2	0.474\\
3	0.676\\
4	0.842\\
5	0.956\\
6	0.986\\
7	0.998\\
8	1\\
9	1\\
};
\addlegendentry{SBL}

\addplot [color=mycolor5, dashdotted, line width=0.8pt, mark size=1.9pt, mark=square, mark options={solid, mycolor5}]
  table[row sep=crcr]{%
1	0.1\\
2	0.216\\
3	0.386\\
4	0.618\\
5	0.774\\
6	0.9\\
7	0.962\\
8	0.994\\
9	0.998\\
};
\addlegendentry{IAA}

%\end{axis}

\nextgroupplot[
width=0.47\columnwidth,
height=5.0cm,
scale only axis,
xmin=1,
xmax=7,
xlabel style={font=\color{white!15!black}},
ymin=0,
ymax=1,
yticklabel=\empty,
xticklabel=\empty,
tick label style={font=\scriptsize} , 
axis background/.style={fill=white},
title style={font=\footnotesize},
title={$M=256$},
axis x line*=bottom,
axis y line*=left,
xmajorgrids,
ymajorgrids,
legend style={legend cell align=left, align=left, draw=white!15!black}
]
\addplot [color=red, line width=0.8pt, mark size=2.5pt, mark=triangle, mark options={solid, rotate=90, red}]
  table[row sep=crcr]{%
1	0.192\\
2	0.414\\
3	0.628\\
4	0.824\\
5	0.934\\
6	0.982\\
7	1\\
8	1\\
9	1\\
};
%\addlegendentry{CL-OMP}

\addplot [color=blue, line width=0.8pt, mark size=2.5pt, mark=triangle, mark options={solid, rotate=180, blue}]
  table[row sep=crcr]{%
1	0.116\\
2	0.25\\
3	0.466\\
4	0.644\\
5	0.804\\
6	0.916\\
7	0.958\\
8	0.984\\
9	0.998\\
};
%\addlegendentry{CL-BCD}

\addplot [color=mycolor1, dashdotted, line width=0.8pt, mark size=2.9pt, mark=+, mark options={solid, mycolor1}]
  table[row sep=crcr]{%
1	0.11\\
2	0.254\\
3	0.456\\
4	0.67\\
5	0.846\\
6	0.928\\
7	0.978\\
8	0.998\\
9	1\\
};
%\addlegendentry{SOMP}

\addplot [color=mycolor2, dashdotted, line width=0.8pt, mark size=2.9pt, mark=star, mark options={solid, mycolor2}]
  table[row sep=crcr]{%
1	0.154\\
2	0.352\\
3	0.556\\
4	0.754\\
5	0.892\\
6	0.966\\
7	0.994\\
8	1\\
9	1\\
};
%\addlegendentry{SNIHT}

\addplot [color=mycolor3, dashed, line width=0.8pt, mark size=2.9pt, mark=asterisk, mark options={solid, mycolor3}]
  table[row sep=crcr]{%
1	0.162\\
2	0.404\\
3	0.616\\
4	0.79\\
5	0.928\\
6	0.976\\
7	0.996\\
8	1\\
9	1\\
};
%\addlegendentry{SAMV2}

\addplot [color=mycolor4, dashed, line width=0.8pt, mark size=2.0pt, mark=o, mark options={solid, mycolor4}]
  table[row sep=crcr]{%
1	0.172\\
2	0.402\\
3	0.622\\
4	0.796\\
5	0.936\\
6	0.978\\
7	0.996\\
8	1\\
9	1\\
};
%\addlegendentry{SBL}

\addplot [color=mycolor5, dashdotted, line width=0.8pt, mark size=1.9pt, mark=square, mark options={solid, mycolor5}]
  table[row sep=crcr]{%
1	0.102\\
2	0.238\\
3	0.436\\
4	0.628\\
5	0.796\\
6	0.916\\
7	0.956\\
8	0.986\\
9	0.998\\
};
%\addlegendentry{IAA}

%\end{axis}

\nextgroupplot[
width=0.47\columnwidth,
height=5.0cm,
tick label style={font=\scriptsize} , 
yticklabel=\empty,
xticklabel=\empty,
scale only axis,
xmin=1,
xmax=7,
xlabel style={font=\color{white!15!black}},
ymin=0,
ymax=1,
axis background/.style={fill=white},
title style={font=\footnotesize},
title={$M=512$},
axis x line*=bottom,
axis y line*=left,
xmajorgrids,
ymajorgrids,
legend style={legend cell align=left, align=left, draw=white!15!black}
]
\addplot [color=red, line width=0.8pt, mark size=2.5pt, mark=triangle, mark options={solid, rotate=90, red}]
  table[row sep=crcr]{%
1	0.13\\
2	0.28\\
3	0.54\\
4	0.742\\
5	0.906\\
6	0.982\\
7	0.994\\
8	0.998\\
9	1\\
};
%\addlegendentry{CL-OMP}

\addplot [color=blue, line width=0.8pt, mark size=2.5pt, mark=triangle, mark options={solid, rotate=180, blue}]
  table[row sep=crcr]{%
1	0.076\\
2	0.176\\
3	0.32\\
4	0.534\\
5	0.698\\
6	0.836\\
7	0.924\\
8	0.958\\
9	0.988\\
};
%\addlegendentry{CL-BCD}

\addplot [color=mycolor1, dashdotted, line width=0.8pt, mark size=2.9pt, mark=+, mark options={solid, mycolor1}]
  table[row sep=crcr]{%
1	0.064\\
2	0.176\\
3	0.364\\
4	0.582\\
5	0.794\\
6	0.916\\
7	0.986\\
8	0.992\\
9	0.998\\
};
%\addlegendentry{SOMP}

\addplot [color=mycolor2, dashdotted, line width=0.8pt, mark size=2.9pt, mark=star, mark options={solid, mycolor2}]
  table[row sep=crcr]{%
1	0.106\\
2	0.216\\
3	0.426\\
4	0.646\\
5	0.852\\
6	0.954\\
7	0.994\\
8	0.998\\
9	0.998\\
};
%\addlegendentry{SNIHT}

\addplot [color=mycolor3, dashed, line width=0.8pt, mark size=2.9pt, mark=asterisk, mark options={solid, mycolor3}]
  table[row sep=crcr]{%
1	0.098\\
2	0.248\\
3	0.504\\
4	0.71\\
5	0.892\\
6	0.968\\
7	0.992\\
8	0.998\\
9	1\\
};
%\addlegendentry{SAMV2}

\addplot [color=mycolor4, dashed, line width=0.8pt, mark size=2.0pt, mark=o, mark options={solid, mycolor4}]
  table[row sep=crcr]{%
1	0.1\\
2	0.252\\
3	0.514\\
4	0.718\\
5	0.892\\
6	0.974\\
7	0.992\\
8	0.998\\
9	1\\
};
%\addlegendentry{SBL}

\addplot [color=mycolor5, dashdotted, line width=0.8pt, mark size=1.9pt, mark=square, mark options={solid, mycolor5}]
  table[row sep=crcr]{%
1	0.072\\
2	0.174\\
3	0.324\\
4	0.534\\
5	0.702\\
6	0.84\\
7	0.928\\
8	0.96\\
9	0.99\\
};
%\addlegendentry{IAA}
\end{groupplot}
%\end{axis}
\end{tikzpicture}%}
\centerline{\definecolor{mycolor1}{rgb}{0.26600,0.27400,0.28800}%
\definecolor{mycolor2}{rgb}{0.00784,0.31373,0.26275}%
\definecolor{mycolor3}{rgb}{0.46600,0.67400,0.18800}%
\definecolor{mycolor4}{rgb}{0.54510,0.50196,0.00000}%
\definecolor{mycolor5}{rgb}{0.49412,0.11765,0.61176}%
\definecolor{mycolor6}{rgb}{0.99608,0.00392,0.60392}%
\begin{tikzpicture}
\begin{groupplot}[group style={group size=4 by 1,horizontal sep=0.29cm,vertical sep=0.27cm}]

\nextgroupplot[
width=0.47\columnwidth,
height=5.0cm,
scale only axis,
tick label style={font=\scriptsize} , 
title style={font=\footnotesize},
xmin=1,
xmax=7,
xlabel style={font=\color{white!15!black}},
xlabel={SNR},
ymin=0,
ymax=1,
ylabel style={font=\color{white!15!black}},
axis background/.style={fill=white},
title={$L=16$},
axis x line*=bottom,
axis y line*=left,
xmajorgrids,
ymajorgrids,
legend style={legend cell align=left, align=left, draw=white!15!black}
]
\addplot [color=red, line width=0.8pt, mark size=2.5pt, mark=triangle, mark options={solid, rotate=90, red}]
  table[row sep=crcr]{%
1	0.029\\
2	0.087\\
3	0.21\\
4	0.383\\
5	0.618\\
6	0.791\\
7	0.913\\
};
%\addlegendentry{CL-OMP}

\addplot [color=blue, line width=0.8pt, mark size=2.5pt, mark=triangle, mark options={solid, rotate=180, blue}]
  table[row sep=crcr]{%
1	0.013\\
2	0.05\\
3	0.121\\
4	0.253\\
5	0.429\\
6	0.609\\
7	0.773\\
};
%\addlegendentry{CL-BCD}

\addplot [color=mycolor1, dashed, line width=0.8pt, mark size=2.9pt, mark=+, mark options={solid, mycolor1}]
  table[row sep=crcr]{%
1	0.013\\
2	0.041\\
3	0.114\\
4	0.259\\
5	0.461\\
6	0.653\\
7	0.818\\
};
%\addlegendentry{SOMP}

\addplot [color=mycolor2, dashed, line width=0.8pt, mark size=2.9pt, mark=star, mark options={solid, mycolor2}]
  table[row sep=crcr]{%
1	0.025\\
2	0.077\\
3	0.18\\
4	0.325\\
5	0.546\\
6	0.736\\
7	0.864\\
};
%\addlegendentry{SNIHT}

\addplot [color=mycolor3, dashed, line width=0.8pt, mark size=2.9pt, mark=asterisk, mark options={solid, mycolor3}]
  table[row sep=crcr]{%
1	0.027\\
2	0.071\\
3	0.175\\
4	0.345\\
5	0.55\\
6	0.753\\
7	0.895\\
};
%\addlegendentry{SAMV2}

\addplot [color=mycolor4, dashed, line width=0.8pt, mark size=2.5pt, mark=o, mark options={solid, mycolor4}]
  table[row sep=crcr]{%
1	0.027\\
2	0.073\\
3	0.185\\
4	0.354\\
5	0.568\\
6	0.77\\
7	0.901\\
};
%\addlegendentry{SBL}

\addplot [color=mycolor5, dashdotted, line width=0.8pt, mark size=2.0pt, mark=square, mark options={solid, mycolor5}]
  table[row sep=crcr]{%
1	0.012\\
2	0.04\\
3	0.113\\
4	0.232\\
5	0.416\\
6	0.597\\
7	0.769\\
};
%\addlegendentry{IAA}

%\end{axis}

\nextgroupplot[
width=0.47\columnwidth,
height=5.0cm,
scale only axis,
title style={font=\footnotesize},
yticklabel=\empty,
title style={font=\footnotesize},
tick label style={font=\scriptsize} , 
xmin=1,
xmax=7,
xlabel style={font=\color{white!15!black}},
xlabel={SNR},
ymin=0,
ymax=1,
axis background/.style={fill=white},
title={$L=32$},
axis x line*=bottom,
axis y line*=left,
xmajorgrids,
ymajorgrids,
legend style={legend cell align=left, align=left, draw=white!15!black}
]
\addplot [color=red, line width=0.8pt, mark size=2.5pt, mark=triangle, mark options={solid, rotate=90, red}]
  table[row sep=crcr]{%
1	0.187\\
2	0.394\\
3	0.619\\
4	0.818\\
5	0.935\\
6	0.983\\
7	0.998\\
};
%\addlegendentry{CL-OMP}

\addplot [color=blue, line width=0.8pt, mark size=2.5pt, mark=triangle, mark options={solid, rotate=180, blue}]
  table[row sep=crcr]{%
1	0.11\\
2	0.238\\
3	0.447\\
4	0.635\\
5	0.806\\
6	0.918\\
7	0.964\\
};
%\addlegendentry{CL-BCD}

\addplot [color=mycolor1, dashed, line width=0.8pt, mark size=2.9pt, mark=+, mark options={solid, mycolor1}]
  table[row sep=crcr]{%
1	0.103\\
2	0.241\\
3	0.427\\
4	0.641\\
5	0.836\\
6	0.94\\
7	0.979\\
};
%\addlegendentry{SOMP}

\addplot [color=mycolor2, dashed, line width=0.8pt, mark size=2.9pt, mark=star, mark options={solid, mycolor2}]
  table[row sep=crcr]{%
1	0.163\\
2	0.338\\
3	0.54\\
4	0.737\\
5	0.903\\
6	0.971\\
7	0.993\\
};
%\addlegendentry{SNIHT}

\addplot [color=mycolor3, dashed, line width=0.8pt, mark size=2.9pt, mark=asterisk, mark options={solid, mycolor3}]
  table[row sep=crcr]{%
1	0.164\\
2	0.369\\
3	0.595\\
4	0.781\\
5	0.924\\
6	0.978\\
7	0.995\\
};
%\addlegendentry{SAMV2}

\addplot [color=mycolor4, dashed, line width=0.8pt, mark size=2.5pt, mark=o, mark options={solid, mycolor4}]
  table[row sep=crcr]{%
1	0.171\\
2	0.367\\
3	0.604\\
4	0.784\\
5	0.93\\
6	0.979\\
7	0.996\\
};
%\addlegendentry{SBL}

\addplot [color=mycolor5, dashdotted, line width=0.8pt, mark size=2.0pt, mark=square, mark options={solid, mycolor5}]
  table[row sep=crcr]{%
1	0.1\\
2	0.227\\
3	0.423\\
4	0.623\\
5	0.798\\
6	0.916\\
7	0.964\\
};
%\addlegendentry{IAA}

%\end{axis}

\nextgroupplot[
width=0.47\columnwidth,
height=5.0cm,
scale only axis,
title style={font=\footnotesize},
yticklabel=\empty,
title style={font=\footnotesize},
tick label style={font=\scriptsize} , 
xmin=1,
xmax=7,
xlabel style={font=\color{white!15!black}},
xlabel={SNR},
ymin=0,
ymax=1,
axis background/.style={fill=white},
title={$L=64$},
axis x line*=bottom,
axis y line*=left,
xmajorgrids,
ymajorgrids,
legend style={legend cell align=left, align=left, draw=white!15!black}
]
\addplot [color=red, line width=0.8pt, mark size=2.5pt, mark=triangle, mark options={solid, rotate=90, red}]
  table[row sep=crcr]{%
1	0.62\\
2	0.805\\
3	0.931\\
4	0.984\\
5	0.999\\
6	1\\
7	1\\
};
%\addlegendentry{CL-OMP}

\addplot [color=blue, line width=0.8pt, mark size=2.5pt, mark=triangle, mark options={solid, rotate=180, blue}]
  table[row sep=crcr]{%
1	0.43\\
2	0.643\\
3	0.803\\
4	0.907\\
5	0.967\\
6	0.995\\
7	0.997\\
};
%\addlegendentry{CL-BCD}

\addplot [color=mycolor1, dashed, line width=0.8pt, mark size=2.9pt, mark=+, mark options={solid, mycolor1}]
  table[row sep=crcr]{%
1	0.428\\
2	0.606\\
3	0.809\\
4	0.914\\
5	0.969\\
6	0.991\\
7	0.999\\
};
%\addlegendentry{SOMP}

\addplot [color=mycolor2, dashed, line width=0.8pt, mark size=2.9pt, mark=star, mark options={solid, mycolor2}]
  table[row sep=crcr]{%
1	0.518\\
2	0.72\\
3	0.88\\
4	0.968\\
5	0.993\\
6	0.999\\
7	1\\
};
%\addlegendentry{SNIHT}

\addplot [color=mycolor3, dashed, line width=0.8pt, mark size=2.9pt, mark=asterisk, mark options={solid, mycolor3}]
  table[row sep=crcr]{%
1	0.599\\
2	0.781\\
3	0.928\\
4	0.982\\
5	0.999\\
6	1\\
7	1\\
};
%\addlegendentry{SAMV2}

\addplot [color=mycolor4, dashed, line width=0.8pt, mark size=2.2pt, mark=o, mark options={solid, mycolor4}]
  table[row sep=crcr]{%
1	0.599\\
2	0.781\\
3	0.929\\
4	0.982\\
5	0.999\\
6	1\\
7	1\\
};
%\addlegendentry{SBL}

\addplot [color=mycolor5, dashdotted, line width=0.8pt, mark size=2.0pt, mark=square, mark options={solid, mycolor5}]
  table[row sep=crcr]{%
1	0.403\\
2	0.622\\
3	0.799\\
4	0.905\\
5	0.964\\
6	0.994\\
7	0.998\\
};
%\addlegendentry{IAA}

\addplot [color=mycolor6, dashed, line width=0.8pt, mark size=2.4pt, mark=diamond, mark options={solid, mycolor6}]
  table[row sep=crcr]{%
1	0.391\\
2	0.629\\
3	0.82\\
4	0.928\\
5	0.988\\
6	0.999\\
7	1\\
};
%\addlegendentry{SPICE}
%
%\end{axis}

\nextgroupplot[
width=0.47\columnwidth,
height=5.0cm,
scale only axis,
title style={font=\footnotesize},
yticklabel=\empty,
title style={font=\footnotesize},
tick label style={font=\scriptsize} , 
xmin=1,
xmax=7,
xlabel style={font=\color{white!15!black}},
xlabel={SNR},
ymin=0,
ymax=1,
axis background/.style={fill=white},
title={$L=128$},
axis x line*=bottom,
axis y line*=left,
xmajorgrids,
ymajorgrids,
legend style={at={(1.0,0.8)}, legend cell align=left, align=left, draw=white!15!black,font=\scriptsize} % 
]
\addplot [color=red, line width=0.8pt, mark size=2.5pt, mark=triangle, mark options={solid, rotate=90, red}]
  table[row sep=crcr]{%
1	0.934\\
2	0.984\\
3	0.999\\
4	1\\
5	1\\
6	1\\
7	1\\
};
\addlegendentry{CL-OMP}

\addplot [color=blue, line width=0.8pt, mark size=2.5pt, mark=triangle, mark options={solid, rotate=180, blue}]
  table[row sep=crcr]{%
1	0.788\\
2	0.92\\
3	0.98\\
4	0.996\\
5	0.999\\
6	1\\
7	1\\
};
\addlegendentry{CL-BCD}

\addplot [color=mycolor1, dashed, line width=0.8pt, mark size=2.9pt, mark=+, mark options={solid, mycolor1}]
  table[row sep=crcr]{%
1	0.722\\
2	0.876\\
3	0.955\\
4	0.987\\
5	0.997\\
6	0.999\\
7	1\\
};
\addlegendentry{SOMP}

\addplot [color=mycolor2, dashed, line width=0.8pt, mark size=2.9pt, mark=star, mark options={solid, mycolor2}]
  table[row sep=crcr]{%
1	0.86\\
2	0.954\\
3	0.99\\
4	0.999\\
5	0.999\\
6	1\\
7	1\\
};
\addlegendentry{SNIHT}

\addplot [color=mycolor3, dashed, line width=0.8pt, mark size=2.9pt, mark=asterisk, mark options={solid, mycolor3}]
  table[row sep=crcr]{%
1	0.933\\
2	0.986\\
3	0.998\\
4	1\\
5	1\\
6	1\\
7	1\\
};
\addlegendentry{SAMV2}

\addplot [color=mycolor4, dashed, line width=0.8pt, mark size=2.2pt, mark=o, mark options={solid, mycolor4}]
  table[row sep=crcr]{%
1	0.933\\
2	0.985\\
3	0.999\\
4	1\\
5	1\\
6	1\\
7	1\\
};
\addlegendentry{SBL}

\addplot [color=mycolor5, dashdotted, line width=0.8pt, mark size=2.0pt, mark=square, mark options={solid, mycolor5}]
  table[row sep=crcr]{%
1	0.782\\
2	0.907\\
3	0.975\\
4	0.996\\
5	0.999\\
6	1\\
7	1\\
};
\addlegendentry{IAA}

\addplot [color=mycolor6, dashed, line width=0.8pt, mark size=2.4pt, mark=diamond, mark options={solid, mycolor6}]
  table[row sep=crcr]{%
1	0.877\\
2	0.973\\
3	0.999\\
4	1\\
5	1\\
6	1\\
7	1\\
};
\addlegendentry{SPICE}

%\end{axis}
\end{groupplot}

\end{tikzpicture}%}
\caption{PER rates vs. SNR Top: number of measurements varies from $N = 16$ to $N= 128$ ($L=32$, $M=256$);  Middle: grid size varies from $M=64$ to $M=512$ ($N=32$, $L=32$). Bottom: sample size increases from $L=16$ to $L=128$ ($N=32$, $M=256$).}  \label{fig:Ksparse_vs_REC}
\end{figure*}

\begin{figure*}[!t]
\centerline{\definecolor{mycolor1}{rgb}{0.26600,0.27400,0.28800}%
\definecolor{mycolor2}{rgb}{0.00784,0.31373,0.26275}%
\definecolor{mycolor3}{rgb}{0.46600,0.67400,0.18800}%
\definecolor{mycolor4}{rgb}{0.54510,0.50196,0.00000}%
\definecolor{mycolor5}{rgb}{0.49412,0.11765,0.61176}%
\definecolor{mycolor6}{rgb}{0.99608,0.00392,0.60392}%
\begin{tikzpicture}
\begin{groupplot}[group style={group size=4 by 1,horizontal sep=0.965cm,vertical sep=0.27cm}]

\nextgroupplot[
width=0.55\columnwidth,
height=5.0cm,
tick label style={font=\scriptsize} , 
title style={font=\footnotesize},
xticklabel=\empty,
scale only axis,
xmin=1,
xmax=7,
xlabel style={font=\color{white!15!black}},
ymin=0.000634068796,
ymax=0.140443496458,
ylabel style={font=\color{white!15!black}},
ylabel={running time [s]},
axis background/.style={fill=white},
title={$N=16$},
axis x line*=bottom,
axis y line*=left,
xmajorgrids,
ymajorgrids,
yticklabel=\pgfkeys{/pgf/number format/.cd,fixed,precision=2,zerofill}\pgfmathprintnumber{\tick}, 
legend style={legend cell align=left, align=left, draw=white!15!black}
]
\addplot [color=red, line width=0.8pt, mark size=2.5pt, mark=triangle, mark options={solid, rotate=90, red}]
  table[row sep=crcr]{%
1	0.001727988472\\
2	0.00162681799\\
3	0.001611174792\\
4	0.001688151814\\
5	0.001629687498\\
6	0.001644898842\\
7	0.001680348524\\
};
%\addlegendentry{CL-OMP}

\addplot [color=blue, line width=0.8pt, mark size=2.5pt, mark=triangle, mark options={solid, rotate=180, blue}]
  table[row sep=crcr]{%
1	0.002997164692\\
2	0.002962322742\\
3	0.002871960482\\
4	0.00293240084\\
5	0.002998537954\\
6	0.00314978828\\
7	0.003186146082\\
};
%\addlegendentry{CL-BCD}

\addplot [color=mycolor1, dashdotted, line width=0.8pt, mark size=2.9pt, mark=+, mark options={solid, mycolor1}]
  table[row sep=crcr]{%
1	0.000698829878\\
2	0.000662391766\\
3	0.000636335532\\
4	0.000634068796\\
5	0.000665550682\\
6	0.00066517998\\
7	0.000661897054\\
};
%\addlegendentry{SOMP}

\addplot [color=mycolor2, dashdotted, line width=0.8pt, mark size=2.9pt, mark=star, mark options={solid, mycolor2}]
  table[row sep=crcr]{%
1	0.00844229043600001\\
2	0.007401898358\\
3	0.006373940652\\
4	0.005877114508\\
5	0.00573464537\\
6	0.005176676594\\
7	0.004774094582\\
};
%\addlegendentry{SNIHT}

\addplot [color=mycolor3, dashed, line width=0.8pt, mark size=2.5pt, mark=asterisk, mark options={solid, mycolor3}]
  table[row sep=crcr]{%
1	0.08177562477\\
2	0.072987353006\\
3	0.0625723907979999\\
4	0.05669838867\\
5	0.0545225998699999\\
6	0.052309291622\\
7	0.047153375036\\
};
%\addlegendentry{SAMV2}

\addplot [color=mycolor4, dashed, line width=0.8pt, mark size=2.5pt, mark=o, mark options={solid, mycolor4}]
  table[row sep=crcr]{%
1	0.140443496458\\
2	0.124931676578\\
3	0.108157585716\\
4	0.0968431227859999\\
5	0.0934943308360001\\
6	0.088599184078\\
7	0.0792486556699999\\
};
%\addlegendentry{SBL}

\addplot [color=mycolor5, dashdotted, line width=0.8pt, mark size=2.0pt, mark=square, mark options={solid, mycolor5}]
  table[row sep=crcr]{%
1	0.07495431691\\
2	0.0671639718140001\\
3	0.060023316202\\
4	0.055315885678\\
5	0.0533372119540001\\
6	0.050986568436\\
7	0.045227097198\\
};

\nextgroupplot[
width=0.55\columnwidth,
height=5.0cm,
scale only axis,
xticklabel=\empty,
tick label style={font=\scriptsize} , 
xmin=1,
xmax=7,
xlabel style={font=\color{white!15!black}},
ymin=0.000782610554,
ymax=0.280339412648,
axis background/.style={fill=white},
title style={font=\footnotesize},
title={$N=64$},
axis x line*=bottom,
axis y line*=left,
xmajorgrids,
ymajorgrids,
yticklabel=\pgfkeys{/pgf/number format/.cd,fixed,precision=2,zerofill}\pgfmathprintnumber{\tick}, 
legend style={legend cell align=left, align=left, draw=white!15!black}
]
\addplot [color=red, line width=0.8pt, mark size=2.5pt, mark=triangle, mark options={solid, rotate=90, red}]
  table[row sep=crcr]{%
1	0.003914669942\\
2	0.003972239086\\
3	0.004003947448\\
4	0.003851966006\\
5	0.003992213148\\
6	0.00390961472\\
7	0.003822425228\\
};
%\addlegendentry{CL-OMP}

\addplot [color=blue, line width=0.8pt, mark size=2.5pt, mark=triangle, mark options={solid, rotate=180, blue}]
  table[row sep=crcr]{%
1	0.009312682456\\
2	0.008177984452\\
3	0.007973796148\\
4	0.00673040343000001\\
5	0.00798637037\\
6	0.006933372368\\
7	0.006901404984\\
};
%\addlegendentry{CL-BCD}

\addplot [color=mycolor1, dashdotted, line width=0.8pt, mark size=2.9pt, mark=+, mark options={solid, mycolor1}]
  table[row sep=crcr]{%
1	0.000830877838\\
2	0.000816929022\\
3	0.000834896786\\
4	0.000805494782\\
5	0.000832140534\\
6	0.000811931265999999\\
7	0.000782610554\\
};
%\addlegendentry{SOMP}

\addplot [color=mycolor2, dashdotted, line width=0.8pt, mark size=2.9pt, mark=star, mark options={solid, mycolor2}]
  table[row sep=crcr]{%
1	0.0048285471\\
2	0.004383769636\\
3	0.003930089258\\
4	0.003589466406\\
5	0.003518580008\\
6	0.003184832442\\
7	0.003082153746\\
};
%\addlegendentry{SNIHT}

\addplot [color=mycolor3, dashed, line width=0.8pt, mark size=2.9pt, mark=asterisk, mark options={solid, mycolor3}]
  table[row sep=crcr]{%
1	0.228235980476\\
2	0.200254287476\\
3	0.179026188518\\
4	0.156647463168\\
5	0.145537298604\\
6	0.126008064576\\
7	0.109322039802\\
};
%\addlegendentry{SAMV2}

\addplot [color=mycolor4, dashed, line width=0.8pt, mark size=2.9pt, mark=o, mark options={solid, mycolor4}]
  table[row sep=crcr]{%
1	0.280339412648\\
2	0.249006032934\\
3	0.22340926109\\
4	0.194524957156\\
5	0.177980167634\\
6	0.1564508987\\
7	0.13764765975\\
};
%\addlegendentry{SBL}

\addplot [color=mycolor5, dashdotted, line width=0.8pt, mark size=2.0pt, mark=square, mark options={solid, mycolor5}]
  table[row sep=crcr]{%
1	0.280339412648\\
2	0.249006032934\\
3	0.22340926109\\
4	0.194524957156\\
5	0.177980167634\\
6	0.1564508987\\
7	0.13764765975\\
};
%\addlegendentry{IAA}

\nextgroupplot[
width=0.55\columnwidth,
height=5.0cm,
scale only axis,
title style={font=\footnotesize},
xticklabel=\empty,
tick label style={font=\scriptsize} , 
xmin=1,
xmax=7,
xlabel style={font=\color{white!15!black}},
ymin=0.001085796162,
ymax=0.655958469792,
axis background/.style={fill=white},
title={$N=128$},
axis x line*=bottom,
axis y line*=left,
xmajorgrids,
ymajorgrids,
yticklabel=\pgfkeys{/pgf/number format/.cd,fixed,precision=2,zerofill}\pgfmathprintnumber{\tick}, 
legend style={at={(0.4,0.6)}, legend cell align=left, align=left, draw=white!15!black,font=\scriptsize} % 
]
\addplot [color=red, line width=0.8pt, mark size=2.5pt, mark=triangle, mark options={solid, rotate=90, red}]
  table[row sep=crcr]{%
1	0.00862191575799999\\
2	0.00888369890600001\\
3	0.008744733476\\
4	0.0088636268\\
5	0.008961177202\\
6	0.00841621748199999\\
7	0.008117016504\\
};
\addlegendentry{CL-OMP}

\addplot [color=blue, line width=0.8pt, mark size=2.5pt, mark=triangle, mark options={solid, rotate=180, blue}]
  table[row sep=crcr]{%
1	0.017105932656\\
2	0.024724270886\\
3	0.017058457434\\
4	0.019313273872\\
5	0.019053346636\\
6	0.014956314238\\
7	0.014175403738\\
};
\addlegendentry{CL-BCD}

\addplot [color=mycolor1, dashdotted, line width=0.8pt, mark size=2.9pt, mark=+, mark options={solid, mycolor1}]
  table[row sep=crcr]{%
1	0.001361722172\\
2	0.001266432044\\
3	0.001272199288\\
4	0.001371466738\\
5	0.001387651478\\
6	0.001125294848\\
7	0.001085796162\\
};
\addlegendentry{SOMP}

\addplot [color=mycolor2, dashdotted, line width=0.8pt, mark size=2.9pt, mark=star, mark options={solid, mycolor2}]
  table[row sep=crcr]{%
1	0.00512928531\\
2	0.00474454694\\
3	0.004307116674\\
4	0.003941686906\\
5	0.00378487659\\
6	0.0032770617\\
7	0.003082790846\\
};
\addlegendentry{SNIHT}

\addplot [color=mycolor3, dashed, line width=0.8pt, mark size=2.9pt, mark=asterisk, mark options={solid, mycolor3}]
  table[row sep=crcr]{%
1	0.655958469792\\
2	0.596388165939999\\
3	0.523101783188\\
4	0.474351238754\\
5	0.427646600836\\
6	0.360559344522\\
7	0.310463910956\\
};
\addlegendentry{SAMV2}

\addplot [color=mycolor4, dashed, line width=0.8pt, mark size=2.9pt, mark=o, mark options={solid, mycolor4}]
  table[row sep=crcr]{%
1	0.644103199754\\
2	0.58363828267\\
3	0.511654864938\\
4	0.463111607076\\
5	0.416690988612\\
6	0.352848751956\\
7	0.304950702604\\
};
\addlegendentry{SBL}

\addplot [color=mycolor5, dashdotted, line width=0.8pt, mark size=2.0pt, mark=square, mark options={solid, mycolor5}]
  table[row sep=crcr]{%
1	0.644103199754\\
2	0.58363828267\\
3	0.511654864938\\
4	0.463111607076\\
5	0.416690988612\\
6	0.352848751956\\
7	0.304950702604\\
};
\addlegendentry{IAA}

\end{groupplot}
\end{tikzpicture}%}
\centerline{\definecolor{mycolor1}{rgb}{0.26600,0.27400,0.28800}%
\definecolor{mycolor2}{rgb}{0.00784,0.31373,0.26275}%
\definecolor{mycolor3}{rgb}{0.46600,0.67400,0.18800}%
\definecolor{mycolor4}{rgb}{0.54510,0.50196,0.00000}%
\definecolor{mycolor5}{rgb}{0.49412,0.11765,0.61176}%
\begin{tikzpicture}
\begin{groupplot}[group style={group size=4 by 1,horizontal sep=0.965cm,vertical sep=0.27cm}]

\nextgroupplot[
width=0.55\columnwidth,
height=5.0cm,
tick label style={font=\scriptsize} , 
title style={font=\footnotesize},
xticklabel=\empty,
scale only axis,
xmin=1,
xmax=7,
xlabel style={font=\color{white!15!black}},
xlabel={SNR},
ymin=0.000403015136,
ymax=0.070292719806,
ylabel style={font=\color{white!15!black}},
ylabel={running time [s]},
axis background/.style={fill=white},
title style={font=\footnotesize},
title={$M=64$},
axis x line*=bottom,
axis y line*=left,
xmajorgrids,
ymajorgrids,
yticklabel=\pgfkeys{/pgf/number format/.cd,fixed,precision=1,zerofill}\pgfmathprintnumber{\tick}, 
legend style={legend cell align=left, align=left, draw=white!15!black}
]
\addplot [color=red, line width=0.8pt, mark size=2.5pt, mark=triangle, mark options={solid, rotate=90, red}]
  table[row sep=crcr]{%
1	0.001572204848\\
2	0.001472969142\\
3	0.001401481024\\
4	0.001410442568\\
5	0.001399221714\\
6	0.001394152828\\
7	0.00142186899\\
};
%\addlegendentry{CL-OMP}

\addplot [color=blue, line width=0.8pt, mark size=2.5pt, mark=triangle, mark options={solid, rotate=180, blue}]
  table[row sep=crcr]{%
1	0.00327969758\\
2	0.0027029446\\
3	0.00221146073\\
4	0.002458671574\\
5	0.002104109202\\
6	0.002051409784\\
7	0.002133583824\\
};
%\addlegendentry{CL-BCD}

\addplot [color=mycolor1, dashdotted, line width=0.8pt, mark size=2.5pt, mark=+, mark options={solid, mycolor1}]
  table[row sep=crcr]{%
1	0.000481686861999999\\
2	0.000433755426\\
3	0.000413055076\\
4	0.000416471176\\
5	0.000409720796\\
6	0.000403015136\\
7	0.000420798676\\
};
%\addlegendentry{SOMP}

\addplot [color=mycolor2, dashdotted, line width=0.8pt, mark size=2.5pt, mark=star, mark options={solid, mycolor2}]
  table[row sep=crcr]{%
1	0.002024821878\\
2	0.001750336918\\
3	0.001583851204\\
4	0.001486633576\\
5	0.001410578578\\
6	0.001356577204\\
7	0.001363017484\\
};
%\addlegendentry{SNIHT}

\addplot [color=mycolor3, dashed, line width=0.8pt, mark size=2.5pt, mark=asterisk, mark options={solid, mycolor3}]
  table[row sep=crcr]{%
1	0.052652633958\\
2	0.044061513976\\
3	0.038336551638\\
4	0.034446280734\\
5	0.030929969832\\
6	0.027590328606\\
7	0.025241081672\\
};
%\addlegendentry{SAMV2}

\addplot [color=mycolor4, dashed, line width=0.8pt, mark size=2.5pt, mark=o, mark options={solid, mycolor4}]
  table[row sep=crcr]{%
1	0.070292719806\\
2	0.059714420038\\
3	0.051961893458\\
4	0.04688289367\\
5	0.041699433102\\
6	0.03741857708\\
7	0.034365286462\\
};
%\addlegendentry{SBL}

\addplot [color=mycolor5, dashdotted, line width=0.8pt, mark size=3.0pt, mark=square, mark options={solid, mycolor5}]
  table[row sep=crcr]{%
1	0.070292719806\\
2	0.059714420038\\
3	0.051961893458\\
4	0.04688289367\\
5	0.041699433102\\
6	0.03741857708\\
7	0.034365286462\\
};

\nextgroupplot[
width=0.55\columnwidth,
height=5.0cm,
scale only axis,
xticklabel=\empty,
tick label style={font=\scriptsize} , 
xmin=1,
xmax=7,
xlabel style={font=\color{white!15!black}},
xlabel={SNR},
ymin=0.000591034582,
ymax=0.13018409924,
axis background/.style={fill=white},
title style={font=\footnotesize},
title={$M=256$},
axis x line*=bottom,
axis y line*=left,
xmajorgrids,
ymajorgrids,
yticklabel=\pgfkeys{/pgf/number format/.cd,fixed,precision=2,zerofill}\pgfmathprintnumber{\tick}, 
legend style={legend cell align=left, align=left, draw=white!15!black}
]
\addplot [color=red, line width=0.8pt, mark size=2.9pt, mark=triangle, mark options={solid, rotate=90, red}]
  table[row sep=crcr]{%
1	0.001918635924\\
2	0.001854620318\\
3	0.001851362058\\
4	0.001871881378\\
5	0.001909721852\\
6	0.00190856054\\
7	0.002019350328\\
};
%\addlegendentry{CL-OMP}

\addplot [color=blue, line width=0.8pt, mark size=2.9pt, mark=triangle, mark options={solid, rotate=180, blue}]
  table[row sep=crcr]{%
1	0.00344108895\\
2	0.004272347986\\
3	0.00333792369\\
4	0.003407402824\\
5	0.003481620832\\
6	0.003446297208\\
7	0.003762018164\\
};
%\addlegendentry{CL-BCD}

\addplot [color=mycolor1, dashdotted, line width=0.8pt, mark size=2.5pt, mark=+, mark options={solid, mycolor1}]
  table[row sep=crcr]{%
1	0.000632915716000001\\
2	0.000592093354\\
3	0.000592764814\\
4	0.000591034582\\
5	0.000630338958\\
6	0.000617934995999999\\
7	0.000700045712\\
};
%\addlegendentry{SOMP}

\addplot [color=mycolor2, dashdotted, line width=0.8pt, mark size=2.5pt, mark=star, mark options={solid, mycolor2}]
  table[row sep=crcr]{%
1	0.004591928434\\
2	0.004098435024\\
3	0.003778800006\\
4	0.003478487678\\
5	0.003322719864\\
6	0.00318692104\\
7	0.003542744624\\
};
%\addlegendentry{SNIHT}

\addplot [color=mycolor3, dashed, line width=0.8pt, mark size=2.5pt, mark=asterisk, mark options={solid, mycolor3}]
  table[row sep=crcr]{%
1	0.106365461856\\
2	0.090837112218\\
3	0.0810502112279999\\
4	0.073221080568\\
5	0.0683611962100001\\
6	0.061203749344\\
7	0.061872510628\\
};
%\addlegendentry{SAMV2}

\addplot [color=mycolor4, dashed, line width=0.8pt, mark size=2.5pt, mark=o, mark options={solid, mycolor4}]
  table[row sep=crcr]{%
1	0.13018409924\\
2	0.111026678628\\
3	0.098538470726\\
4	0.0888345531780001\\
5	0.0832884704679999\\
6	0.07422396939\\
7	0.076679315438\\
};
%\addlegendentry{SBL}

\addplot [color=mycolor5, dashdotted, line width=0.8pt, mark size=3.0pt, mark=square, mark options={solid, mycolor5}]
  table[row sep=crcr]{%
1	0.13018409924\\
2	0.111026678628\\
3	0.098538470726\\
4	0.0888345531780001\\
5	0.0832884704679999\\
6	0.07422396939\\
7	0.076679315438\\
};
%\addlegendentry{IAA}

%\end{axis}

\nextgroupplot[
width=0.55\columnwidth,
height=5.0cm,
scale only axis,
xticklabel=\empty,
tick label style={font=\scriptsize} , 
xmin=1,
xmax=7,
xlabel style={font=\color{white!15!black}},
xlabel={SNR},
ymin=0.000844874374,
ymax=0.222555002342,
axis background/.style={fill=white},
title style={font=\footnotesize},
title={$M=512$},
axis x line*=bottom,
axis y line*=left,
xmajorgrids,
ymajorgrids,
yticklabel=\pgfkeys{/pgf/number format/.cd,fixed,precision=2,zerofill}\pgfmathprintnumber{\tick}, 
legend style={at={(0.4,0.6)}, legend cell align=left, align=left, draw=white!15!black,font=\scriptsize} % 
]
\addplot [color=red, line width=0.8pt, mark size=3.0pt, mark=triangle, mark options={solid, rotate=90, red}]
  table[row sep=crcr]{%
1	0.002379252438\\
2	0.002349015216\\
3	0.00236361743\\
4	0.002392963436\\
5	0.002403007686\\
6	0.0024961001\\
7	0.00243256348\\
};
\addlegendentry{CL-OMP}

\addplot [color=blue, line width=0.8pt, mark size=3.0pt, mark=triangle, mark options={solid, rotate=180, blue}]
  table[row sep=crcr]{%
1	0.004745818034\\
2	0.005268392092\\
3	0.004641755406\\
4	0.004685163862\\
5	0.004746936922\\
6	0.004993930626\\
7	0.004809210302\\
};
\addlegendentry{CL-BCD}

\addplot [color=mycolor1, dashdotted, line width=0.8pt, mark size=2.5pt, mark=+, mark options={solid, mycolor1}]
  table[row sep=crcr]{%
1	0.000877999932\\
2	0.000847470002\\
3	0.000844874374\\
4	0.000854487847999999\\
5	0.00086588707\\
6	0.000919253597999999\\
7	0.00087233167\\
};
\addlegendentry{SOMP}

\addplot [color=mycolor2, dashdotted, line width=0.8pt, mark size=2.5pt, mark=star, mark options={solid, mycolor2}]
  table[row sep=crcr]{%
1	0.00811422024200001\\
2	0.007214023488\\
3	0.006776622464\\
4	0.005952626834\\
5	0.005617724718\\
6	0.005567896424\\
7	0.005098542452\\
};
\addlegendentry{SNIHT}

\addplot [color=mycolor3, dashed, line width=0.8pt, mark size=2.5pt, mark=asterisk, mark options={solid, mycolor3}]
  table[row sep=crcr]{%
1	0.177485383872\\
2	0.15387025227\\
3	0.137995772944\\
4	0.125988329108\\
5	0.115680137386\\
6	0.111146628422\\
7	0.0967823130600001\\
};
\addlegendentry{SAMV2}

\addplot [color=mycolor4, dashed, line width=0.8pt, mark size=2.5pt, mark=o, mark options={solid, mycolor4}]
  table[row sep=crcr]{%
1	0.222555002342\\
2	0.194527252912\\
3	0.172971662518\\
4	0.156682091222\\
5	0.142094038394\\
6	0.132964367274\\
7	0.115930773674\\
};
\addlegendentry{SBL}

\addplot [color=mycolor5, dashdotted, line width=0.8pt, mark size=3.0pt, mark=square, mark options={solid, mycolor5}]
  table[row sep=crcr]{%
1	0.222555002342\\
2	0.194527252912\\
3	0.172971662518\\
4	0.156682091222\\
5	0.142094038394\\
6	0.132964367274\\
7	0.115930773674\\
};
\addlegendentry{IAA}
\end{groupplot}

%\end{axis}
\end{tikzpicture}%}
\caption{Running times vs. SNR Top: number of measurements varies from $N = 16$ to $N= 128$ ($L=32$, $M=256$);  Bottom: number of atoms varies from $M=64$ to $M=512$ ($N=32$, $L=32$).
}  \label{fig:runtime_vs_REC}
\end{figure*}
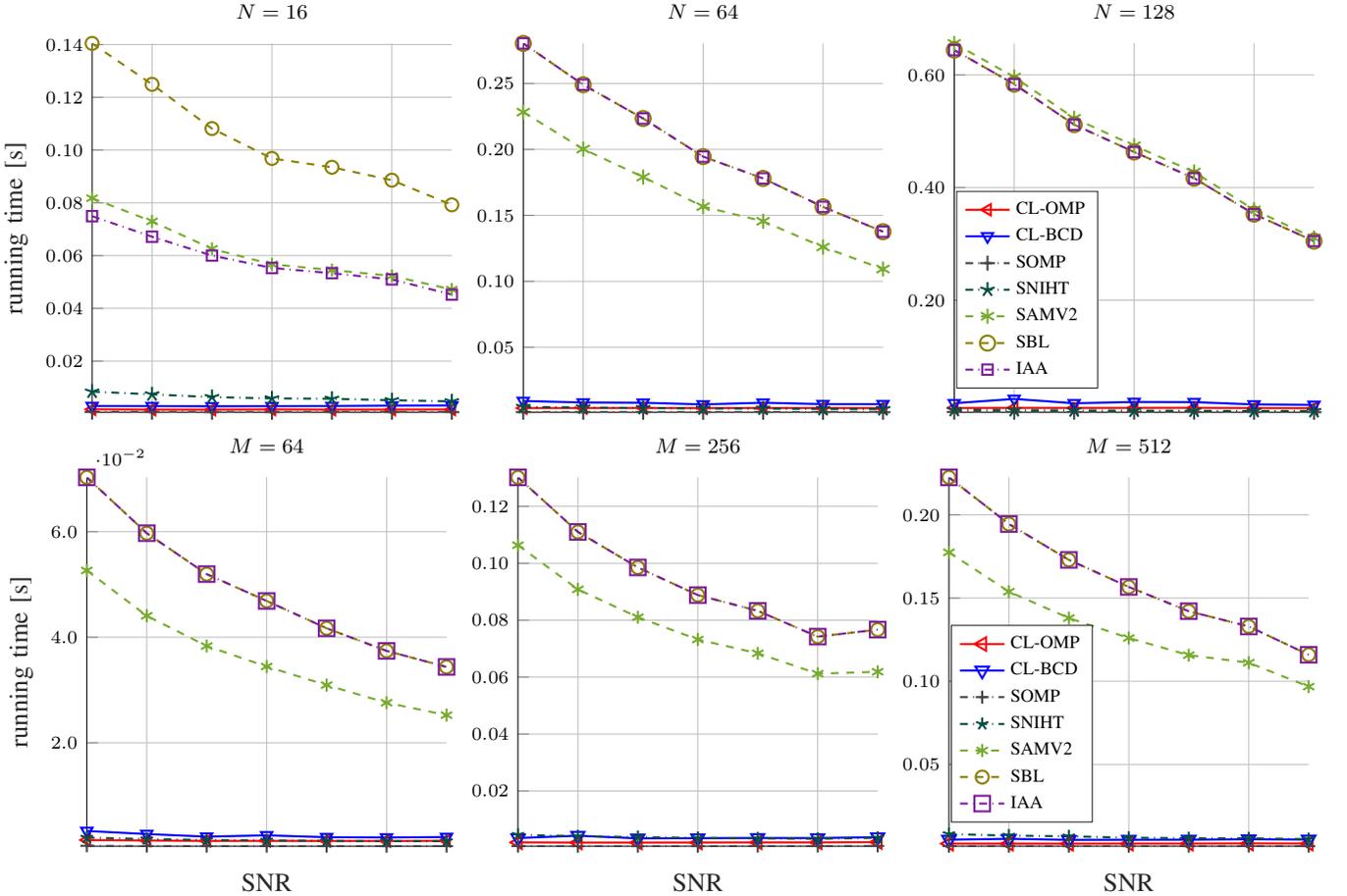

Next  we compare the sparse signal recovery performance of proposed methods to traditional greedy sparse signal recovery  algorithms,  
the {\bf SOMP} \cite[Algorithm~3.1]{tropp2006algorithms} and  the {\bf SNIHT}  \cite[Algorithm~1]{blanchard2014greedy}. 
The boolean variable \textsf{peak} is thus set to \textsf{false} as we assume $K$-sparsity in this example. 
Both SOMP and SNIHT algorithms are designed for MMV model, and return an estimated support $\mathcal M$ of  $K$-rowsparse signal matrix given the measurement matrix $\Y$, the dictionary $\A$,  and the desired rowsparsity level $K$. We also compare with  {\bf IAA}(-APES) \cite[Table~2]{yardibi2010source}, 
{\bf SAMV2} \cite[Table I]{abeida2012iterative},  {\bf SBL} variant of  \cite{nannuru2019sparse} (using $b=1$ in \eqref{eq:update_SAMV2} in signal power update). 
and (enhanced) {\bf SPICE}, also referred to as  SPICE+ in \cite[eq. (44)-(46)]{stoica2010spice}. 
 Note that SPICE assumes that $L> N$, and thus it is included in the study only  when this assumption holds.     
 For all  CL-methods (except for  greedy CL-OMP which terminates after $K$ iterations) we use \eqref{eq:terminate_iters} as the stopping criterion and set $500$ as maximum number of iterations. 
Matrix $\A$ is a Gaussian random measurement matrix, i.e.,  the elements of $\A$ are drawn from $\mathcal{CN}(0,1)$ distribution and the columns are unit-norm normalised as is common in compressed sensing.   To form the $\kdim$-rowsparse  source signal matrix $\X \in \mathbb{C}^{M \times L}$, support $\mathcal M =\mathrm{supp}(\X)$ is randomly chosen from $\{1,\ldots,M\}$   
without replacement for each Monte-Carlo (MC-)trial. The noise $\mathbf{E} \in \mathbb{C}^{N \times L} $ have elements that are  i.i.d. circular complex Gaussian with unit variance (i.e., $\sigma^2 = 1$).  

As performance measure we use  the  empirical \emph{probability of exact recovery},   
$
\mbox{PER}  = \frac{1}{T}\sum_{t=1}^T  \mathrm{I} \big(  \hat{\mathcal{M}}^{(t)} = \mathcal{M}^{(t)} \big), 
$
where $\mathrm{I}(\cdot)$ denotes the indicator function,  and $\hat{\mathcal{M}}^{(t)}$ 
denotes the  estimate of the true signal support $\mathcal M^{(t)}$  for $t^{\text{th}}$ MC trial. 
The number of MC trials is $T=500$  and the sparsity level is $\kdim=4$. 
Let $\mathcal M = \{i_1,\ldots, i_K\}$ be the true support set where $K=4$, and let  $\sigma^2_1= \gamma_{i_1}$ denote the  power of the $1^{\text{st}}$ Gaussian source signal.  
 Define the SNR of the first non-zero source sequence as $10 \log_{10} \sigma^2_1/\sigma^2$. For the second, third, and fourth Gaussian sources, we set their SNR levels to be 1 dB, 2 dB, and 4 dB lower than that of the first source, respectively.

{\bf PER vs dimensionality of the measurements  $N$}. Top panel of \autoref{fig:Ksparse_vs_REC} displays the PER rates  when $N$ varies  from $N = 16$ to $N= 128$ while $L$ and $M$ are fixed ($L=32$,  $M=256$). 
 As can be noted, CL-OMP has the best performance in all cases and SNR levels while having similar performance with SAMV2 and SBL when $N$ increases to $N=128$.  Curiously, the  performance of IAA  decays when the number of measurements increases: it offers clearly the worst   performance  in high dimension (large $N$). When comparing SNIHT and SOMP, it is observed that SNIHT outperforms SOMP, yet SOMP gradually improves, approaching the performance level of SNIHT as $N$ increases. While CL-BCD is not performing that well for small $N$, it also shows significant improvement as $N$ increases.

{\bf PER vs number of atoms $M$.}  The middle panel of \autoref{fig:Ksparse_vs_REC} displays the PER rates when $M$ increases  
from   $M=64$ to $M=512$ while $N$ and $L$ are fixed ($N=32$, $L=32$). The conclusions are similar as previously. CL-OMP again attains the top performance for all SNR levels while SAMV2 and SBL are the next best performing methods. The conventional SSR methods, SOMP and SNIHT, perform worse than these three CL methods. IAA  performs poorly for small $M$  but similarly with BCD when $M$ is large. 

{\bf PER vs number of measurements vectors $L$.}  The bottom panel of \autoref{fig:Ksparse_vs_REC} displays the PER rates as $L$ varies while $N$ and $M$ are fixed ($N=16$, $M=256$). 
Again it can be noted that CL-OMP  uniformly outperforms the other methods for all SNR levels. Overall CL-OMP is distinctively more robust to low SNR than conventional greedy methods, SOMP and SNIHT, especially when number of measurements vectors are limited.   
CL-BCD performs similarly to IAA in this setting. 

{\bf SNR versus running time.}  \autoref{fig:runtime_vs_REC} displays the running time of different algorithms in the first two settings when $M$ and $N$ varies, respectively. As can be noted, 
the greedy methods CL-OMP, SNIHT and SOMP are clearly the best performing methods. Their running time is also independent of the SNR unlike some of the iterative methods. CL-BCD also stands out in running time compared to other iterative algorithms. It has much faster convergence than other iterative algorithms (SAMV2, SBL and IAA).  Moreover,  running time of  SAMV2, SBL or IAA   increases steeply as SNR decreases while running time of CL-BCD appears unaffected by SNR.

\section{Source localization using covariance learning} \label{sec:source} 

\subsection{SSR-based DOA estimation} 

 In DOA (or source localization) estimation problem, the aim is  to estimate the DOA-s 
of narrowband sources that are present in the array's 
viewing field.  In SSR-based DOA estimation, one forms  a fine (equidistant)  {\bf grid}  that  covers the DOA space $\Theta = [-\pi/2,\pi/2)$: $-\pi/2 \leq \theta_1  < \cdots < \theta_M< \pi/2$, where the grid spacing $\Delta \theta = \theta_2-\theta_1$ determines the angular resolution.   The grid size $M$ is at least several hundreds in order to achieve high-resolution DOA estimates.  It is assumed that $K<N$ sources are present and  the  grid is fine enough ($M \gg K$) so that  true DOA parameters lie on or at least close to one of the grid points.   The array consists of $N$ sensors and let $\a(\theta) \in \mathbb{C}^N$ denote the array manifold  which  without any loss of generality is assumed to verify $\| \a(\theta) \|^2 = N$. We let  $\a_m = \a(\theta_m) \in \mathbb{C}^{N}$  denote  \emph{steering vector} corresponding to DOA $\theta_m$ on the grid. 

We assume to have $L$ array snapshots available. The $l^{\text{th}}$ snapshot $\y_l$ follows the model in \eqref{eq:MMV} where  $\A \in \mathbb{C}^{N \times M}$ contains steering vectors $\a_m$ as columns,  $\x_l$ contains the latent source signal waveforms  and $\e_l$  denotes the sensor noise vector. 
When represented in matrix form, the model follows  \eqref{eq:MMVmodel},  where the source signal matrix $\X$ is exactly $K$-rowsparse if the $K$ true sources are exactly on the grid.  
Exploiting sparsity of grid-based array model in DOA estimation was groud breaking  work by   \cite{malioutov2005sparse}. We refer the reader to \cite{yang2018sparse}   for a recent review. 

 If one assumes that snapshots are i.i.d.,  the noise vector $\e_l$ is spatially white ($\cov(\e_l)= \sigma^2  \mathbf{I}$)),    $\x_l$ and $\e_l$ are independent,  and source signals are uncorrelated, then  the covariance matrix of $\y_l$ has the same form as in   \eqref{eq:M}, with  $\Gam= \cov(\x_l)=\diag(\gamma_1,\ldots, \gamma_M)$, where {$\gamma_m = \var(x_{ml})$  is  the power of $m^{\text{th}}$ source signal.  Again if source DOAs are exactly on the grid, then $\gam=(\gamma_1,\ldots, \gamma_M)^\top$ is exactly $K$-sparse vector \cite{stoica2010spice,yardibi2010source}.  Under the assumption of Gaussian source signal and noise, the  negative LLF  in  \eqref{eq:SML} can be optimized for obtaining  estimate of the signal power $\gam$ and the noise $\sigma^2$.   Methods that follow this  CL-approach for DOA estimation have been developed in a series of papers, see \cite{wipf2007beamforming,stoica2010spice,yardibi2010source,abeida2012iterative,gerstoft2016multisnapshot,mecklenbrauker2022doa} to mention only a few. 

It is important to realise that even if the true DOAs are on-grid, the source power estimates $\hat{\gam}$ does not exhibit $K$-sparsity but $K$-peaksparsity.  Namely,  power from $i^{\text{th}}$ source is leaked or smeared to close by grids of the true DOA, and hence the estimate of the source power vector  $\gam$ is $K$-peaksparse. The width of the  peak around the true DOA will depend mainly on the SNR, the sample size $L$ and the grid size $M$.  
Hence the CL-algorithms, such as CL-BCD or  CL-OMP  will set the boolean variable \textsf{peak} as \textsf{true}  in order to locate $K$  significant peaks in  $\gam$.  Naturally, the same is true for other CL-algorithms (IAA, SPICE, SAMV, SBL). 

\subsection{Simulation}  \label{sec:source:simul}

\begin{figure}[!t]
\centerline{\definecolor{mycolor1}{rgb}{0.30100,0.74500,0.93300}%
\definecolor{mycolor2}{rgb}{0.92900,0.69400,0.12500}%
\definecolor{mycolor3}{rgb}{0.46600,0.67400,0.18800}%
%\definecolor{mycolor4}{rgb}{1.00000,0.64706,0.00000}%
\definecolor{mycolor4}{rgb}{0.34200,0.87200,0.50300}%
\definecolor{mycolor5}{rgb}{1.00000,0.00000,1.00000}%

\begin{tikzpicture}

\begin{axis}[%
width= 0.78\columnwidth, 
height= 4.7cm, 
 tick label style={font=\scriptsize} , 
scale only axis,
xmin=-14,
xmax=-2,
xlabel style={font=\small \color{white!15!black}},
xlabel={},
ymode=log,
ymin=0.144331517421058,
ymax=31.7654560804658,
yminorticks=true,
ylabel style={font=\small \color{white!15!black}},
ylabel={RMSE of DOA (deg.)},
axis background/.style={fill=white},
xmajorgrids,
ymajorgrids,
yminorgrids,
xticklabel=\empty,% <- after symbolic x coords
legend style={legend cell align=left, align=left, draw=white!15!black,font=\scriptsize,at={(1.0,1.0)}}
]
\addplot [color=blue, line width=1.0pt, mark size=2.5pt, mark=triangle, mark options={solid, rotate=180, blue}]
  table[row sep=crcr]{%
-2	0.15533834040571\\
-2.8	0.173407035612746\\
-3.6	0.19028925350634\\
-4.4	0.212132034355964\\
-5.2	0.2346699810372\\
-6	0.262507142759964\\
-6.8	0.294686273857472\\
-7.6	0.330439101802437\\
-8.4	0.379631400176541\\
-9.2	0.436875268240262\\
-10	3.35460578906078\\
-10.8	8.42721009587395\\
-11.6	12.1511542661593\\
-12.4	18.7708028597607\\
-13.2	24.5569002522712\\
-14	31.6136722004894\\
};
\addlegendentry{CL-BCD}

%\addplot [color=mycolor1, line width=1.0pt, mark size=2.5pt, mark=triangle, mark options={solid, mycolor1}]
%  table[row sep=crcr]{%
%-2	0.171347599924832\\
%-2.8	0.195857090757522\\
%-3.6	0.21915747762739\\
%-4.4	0.241764348074731\\
%-5.2	0.273221521846286\\
%-6	0.304203879002226\\
%-6.8	0.34038213819177\\
%-7.6	3.71836926622413\\
%-8.4	4.92906583441529\\
%-9.2	5.96042783699292\\
%-10	7.5939930208027\\
%-10.8	10.4322140507181\\
%-11.6	12.7136418857855\\
%-12.4	14.972540866533\\
%-13.2	17.9097914560723\\
%-14	21.319807222393\\
%};
%\addlegendentry{CCD}

\addplot [color=red, line width=1.0pt, mark size=2.8pt, mark=triangle, mark options={solid, rotate=90, red}]
  table[row sep=crcr]{%
-2	0.153525242224203\\
-2.8	0.170205757834452\\
-3.6	0.187962762269552\\
-4.4	0.207557221025914\\
-5.2	0.231905153025973\\
-6	0.258689002472081\\
-6.8	0.291376045686671\\
-7.6	0.32671088136149\\
-8.4	0.376164857476082\\
-9.2	0.433347435667964\\
-10	1.47906727365594\\
-10.8	3.85023765500263\\
-11.6	7.24013190487577\\
-12.4	10.1673693746219\\
-13.2	14.9112913592351\\
-14	19.9746073803717\\
};
\addlegendentry{CL-OMP}

\addplot [color=mycolor2, dashed, line width=1.0pt, mark size=2.5pt, mark=o, mark options={solid, mycolor2}]
  table[row sep=crcr]{%
-2	0.170058813355851\\
-2.8	0.190184121314058\\
-3.6	0.212344060430237\\
-4.4	0.238390436049771\\
-5.2	0.265084892062902\\
-6	0.300582767303784\\
-6.8	2.7150653767451\\
-7.6	2.99333593169895\\
-8.4	3.70750455158184\\
-9.2	4.8358453242427\\
-10	6.40227147190746\\
-10.8	8.11358798559553\\
-11.6	11.2556990009506\\
-12.4	14.1591048445868\\
-13.2	17.3459286289319\\
-14	21.250614579348\\
};
\addlegendentry{SBL}

\addplot [color=mycolor3, dashed, line width=1.0pt, mark size=2.8pt, mark=star, mark options={solid, mycolor3}]
  table[row sep=crcr]{%
-2	0.174527934726794\\
-2.8	0.193597520645282\\
-3.6	0.217347647790355\\
-4.4	0.242404620418011\\
-5.2	0.274972725920228\\
-6	0.308074666274267\\
-6.8	2.29003056748158\\
-7.6	3.62645281232226\\
-8.4	4.83586807098788\\
-9.2	5.62028735920149\\
-10	7.92041539314701\\
-10.8	10.0262021723083\\
-11.6	12.4256017962914\\
-12.4	15.6204593402371\\
-13.2	18.5514837681518\\
-14	21.8414463806773\\
};
\addlegendentry{SAMV2}

\addplot [color=orange, dashed, line width=1.0pt, mark size=2.5pt, mark=square, mark options={solid, orange}]
  table[row sep=crcr]{%
-2	0.155273951453553\\
-2.8	0.173464693814045\\
-3.6	0.190052624291274\\
-4.4	0.212320512433444\\
-5.2	0.23473389188611\\
-6	0.262564277844493\\
-6.8	0.294618397253126\\
-7.6	0.330393704540509\\
-8.4	0.380328805114734\\
-9.2	0.437264222181509\\
-10	3.3546505034057\\
-10.8	8.42721306245427\\
-11.6	12.151983788666\\
-12.4	19.0885994771749\\
-13.2	24.5592626110802\\
-14	31.7654560804658\\
};
\addlegendentry{IAA}

\addplot [color=mycolor4, line width=1.0pt, mark size=2.5pt, mark=diamond, mark options={solid, mycolor4}]
  table[row sep=crcr]{%
-2	0.153687995627506\\
-2.8	0.172104619345328\\
-3.6	0.188042548376691\\
-4.4	0.211849002829846\\
-5.2	0.233452350598575\\
-6	0.263761255683999\\
-6.8	0.298127489507427\\
-7.6	0.341408845813933\\
-8.4	1.20696313116847\\
-9.2	3.47383793519503\\
-10	5.18856145766821\\
-10.8	8.34206029707289\\
-11.6	13.6999963503645\\
-12.4	19.3000474092682\\
-13.2	23.4768635043099\\
-14	28.5580277680375\\
};
\addlegendentry{MUSIC}

\addplot [color=mycolor5, line width=1.0pt, mark size=2.8pt, mark=+, mark options={solid, mycolor5}]
  table[row sep=crcr]{%
-2	0.151667900422665\\
-2.8	0.168168803461097\\
-3.6	0.186917721472199\\
-4.4	0.208395675731806\\
-5.2	0.233258536897253\\
-6	0.262432027837696\\
-6.8	0.297260404250697\\
-7.6	0.509852304808686\\
-8.4	2.55909005603818\\
-9.2	5.29854366545462\\
-10	7.60047538630686\\
-10.8	10.3967624182315\\
-11.6	15.9956626685649\\
-12.4	22.3980347710983\\
-13.2	26.9004266739098\\
-14	31.8931182361707\\
};
\addlegendentry{R-MUSIC}

\addplot [color=cyan, dashed, line width=1.2pt, mark size=3.5pt, mark=|, mark options={solid, cyan}]
  table[row sep=crcr]{%
-2	0.151542403306797\\
-2.8	0.167735804168341\\
-3.6	0.186134360073577\\
-4.4	0.20691592495504\\
-5.2	0.230592714542328\\
-6	0.25783870927384\\
-6.8	0.289427192917321\\
-7.6	0.326959018838753\\
-8.4	0.374479238409822\\
-9.2	0.432354368545063\\
-10	1.47730233872421\\
-10.8	3.84973522466156\\
-11.6	7.23956823988833\\
-12.4	10.1671696700704\\
-13.2	14.9125653829246\\
-14	19.9745623481467\\
};
\addlegendentry{MLE}

\addplot [color=black, line width=1.2pt]
  table[row sep=crcr]{%
-2	0.144331517421058\\
-2.8	0.159427201006278\\
-3.6	0.17633954353049\\
-4.4	0.195351740527268\\
-5.2	0.216805141283621\\
-6	0.241112940431716\\
-6.8	0.268776989117953\\
-7.6	0.300408307564667\\
-8.4	0.336751949337351\\
-9.2	0.378716944044976\\
-10	0.427412139459791\\
-10.8	0.484188891653789\\
-11.6	0.5506917329059\\
-12.4	0.628918403887093\\
-13.2	0.72129098857241\\
-14	0.830740350339569\\
};
\addlegendentry{CRLB}

\end{axis}
\end{tikzpicture}%}\vspace{-2pt}
\centerline{\hspace{6pt}\definecolor{mycolor1}{rgb}{0.30100,0.74500,0.93300}%
\definecolor{mycolor2}{rgb}{0.92900,0.69400,0.12500}%
\definecolor{mycolor3}{rgb}{0.46600,0.67400,0.18800}%
\definecolor{mycolor4}{rgb}{1.00000,0.64706,0.00000}%
\begin{tikzpicture}

\begin{axis}[%
width= 0.78\columnwidth, 
height= 4.7cm, 
 tick label style={font=\scriptsize} , 
scale only axis,
xmin=-14,
xmax=-2,
xlabel style={font=\small \color{white!15!black}},
xlabel={SNR (dB)},
ymin=0.211404539287313,
ymax=1.40246884506856,
ylabel style={font=\small \color{white!15!black}},
ylabel={Root NMSE of $\hat \gamma$},
axis background/.style={fill=white},
xmajorgrids,
ymajorgrids,
legend style={legend cell align=left, align=left, draw=white!15!black,font=\scriptsize,at={(1.0,1.0)}}
]
\addplot [color=blue, line width=1.0pt, mark size=2.5pt, mark=triangle, mark options={solid, rotate=180, blue}]
  table[row sep=crcr]{%
-2	0.225273571706401\\
-2.8	0.233724011929479\\
-3.6	0.244969803142989\\
-4.4	0.259781700296373\\
-5.2	0.27923392912607\\
-6	0.304510669209365\\
-6.8	0.337075146085164\\
-7.6	0.378580156940595\\
-8.4	0.430937668178998\\
-9.2	0.496372680895655\\
-10	0.577598092357972\\
-10.8	0.677890413330678\\
-11.6	0.801827905273843\\
-12.4	0.956582528351925\\
-13.2	1.1522406876814\\
-14	1.40190551213633\\
};
\addlegendentry{CL-BCD}

\addplot [color=red, line width=1.0pt, mark size=2.8pt, mark=triangle, mark options={solid, rotate=90, red}]
  table[row sep=crcr]{%
-2	0.211404539287313\\
-2.8	0.214335666268326\\
-3.6	0.217876121609759\\
-4.4	0.222145818996785\\
-5.2	0.227289166151866\\
-6	0.23348779538153\\
-6.8	0.240944675321894\\
-7.6	0.249900206802786\\
-8.4	0.260628241601095\\
-9.2	0.273415843315441\\
-10	0.288615142961534\\
-10.8	0.306140655352345\\
-11.6	0.325258968271488\\
-12.4	0.345462699606557\\
-13.2	0.365374729986318\\
-14	0.387314683312987\\
};
\addlegendentry{CL-OMP}

\addplot [color=mycolor2, dashed, line width=1.0pt, mark size=2.5pt, mark=o, mark options={solid, mycolor2}]
  table[row sep=crcr]{%
-2	0.617435426971236\\
-2.8	0.640917265514221\\
-3.6	0.66281154764782\\
-4.4	0.683385030884299\\
-5.2	0.702835624446067\\
-6	0.721084099310109\\
-6.8	0.737604612531123\\
-7.6	0.753106228794092\\
-8.4	0.766742357414984\\
-9.2	0.779027037718557\\
-10	0.789854146538981\\
-10.8	0.798472346358566\\
-11.6	0.805076392857122\\
-12.4	0.809140044833767\\
-13.2	0.809495233266393\\
-14	0.805340989575452\\
};
\addlegendentry{SBL}

\addplot [color=mycolor3, dashed, line width=1.0pt, mark size=2.8pt, mark=star, mark options={solid, mycolor3}]
  table[row sep=crcr]{%
-2	0.637830471886107\\
-2.8	0.659847195079435\\
-3.6	0.680330483247524\\
-4.4	0.69930437594705\\
-5.2	0.717191539976546\\
-6	0.73401740963137\\
-6.8	0.749211027981336\\
-7.6	0.763198788289634\\
-8.4	0.775502599069268\\
-9.2	0.786308890082303\\
-10	0.795662195743692\\
-10.8	0.802887896294969\\
-11.6	0.808071510274618\\
-12.4	0.810181195364849\\
-13.2	0.80862885445993\\
-14	0.802133305405494\\
};
\addlegendentry{SAMV2}

\addplot [color=orange, dashed, line width=0.9pt, mark size=2.5pt, mark=square, mark options={solid, orange}]
  table[row sep=crcr]{%
-2	0.225320159121887\\
-2.8	0.233775722350656\\
-3.6	0.245027334762188\\
-4.4	0.259843978878145\\
-5.2	0.279301282588453\\
-6	0.304582175621347\\
-6.8	0.337149396658526\\
-7.6	0.378656212962597\\
-8.4	0.431014474337194\\
-9.2	0.496449910476266\\
-10	0.577674182862698\\
-10.8	0.677970434310246\\
-11.6	0.801924978012594\\
-12.4	0.956745249137457\\
-13.2	1.15253635099886\\
-14	1.40246884506856\\
};
\addlegendentry{IAA}

\addplot [color=cyan, dashed, line width=1.3pt, mark size=3.5pt, mark=|, mark options={solid,cyan}]
  table[row sep=crcr]{%
-2	0.211419237813491\\
-2.8	0.214352037310961\\
-3.6	0.217891760920614\\
-4.4	0.222160941254322\\
-5.2	0.227306540650329\\
-6	0.233503258645829\\
-6.8	0.240958691289994\\
-7.6	0.24991561338434\\
-8.4	0.260643709545586\\
-9.2	0.273435777316111\\
-10	0.288638692594633\\
-10.8	0.306167527102401\\
-11.6	0.325288732260629\\
-12.4	0.345494113846945\\
-13.2	0.365414783090422\\
-14	0.387357588368255\\
};
\addlegendentry{MLE}

\addplot [color=black, line width=1.2pt]
  table[row sep=crcr]{%
-2	0.215879554075642\\
-2.8	0.219098220663534\\
-3.6	0.222970624784067\\
-4.4	0.227630001370447\\
-5.2	0.233236859081474\\
-6	0.239984573349587\\
-6.8	0.248106113039155\\
-7.6	0.25788212083418\\
-8.4	0.269650607888969\\
-9.2	0.283818572536995\\
-10	0.300875914272767\\
-10.8	0.321412091972646\\
-11.6	0.346136073937595\\
-12.4	0.375900250901607\\
-13.2	0.411729134512984\\
-14	0.454853844478911\\
};
\addlegendentry{CRLB}

\end{axis}
\end{tikzpicture}%}
\caption{RMSE of $\hat \theta$ vs SNR (top panel) and  root NMSE of $\hat \gamma$ vs SNR (bottom panel) in a single source case with DOA $\theta_1 = -25$;  $L=25$, $N=20$,  and $M=1801$ ($\Delta \theta=0.1^o$). }  \label{fig:1source_CRB_vs_SNR}
\end{figure}
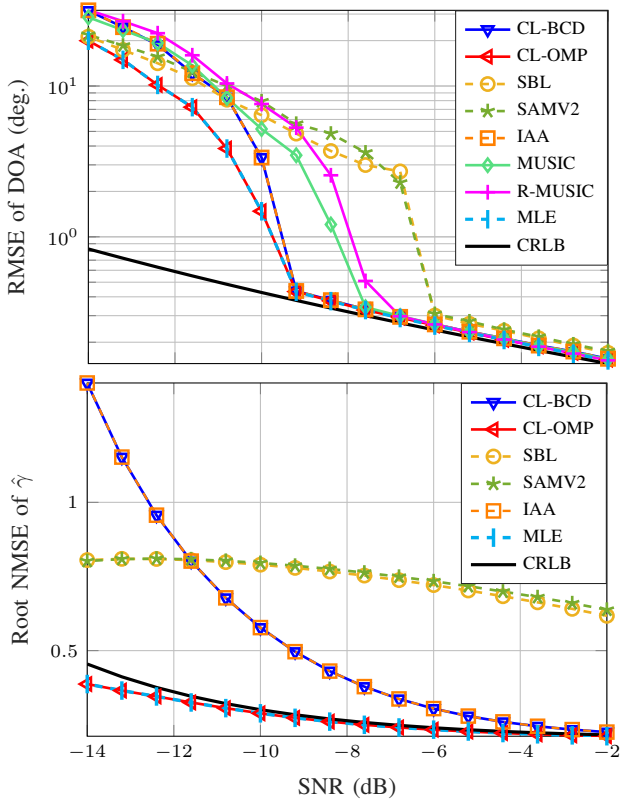

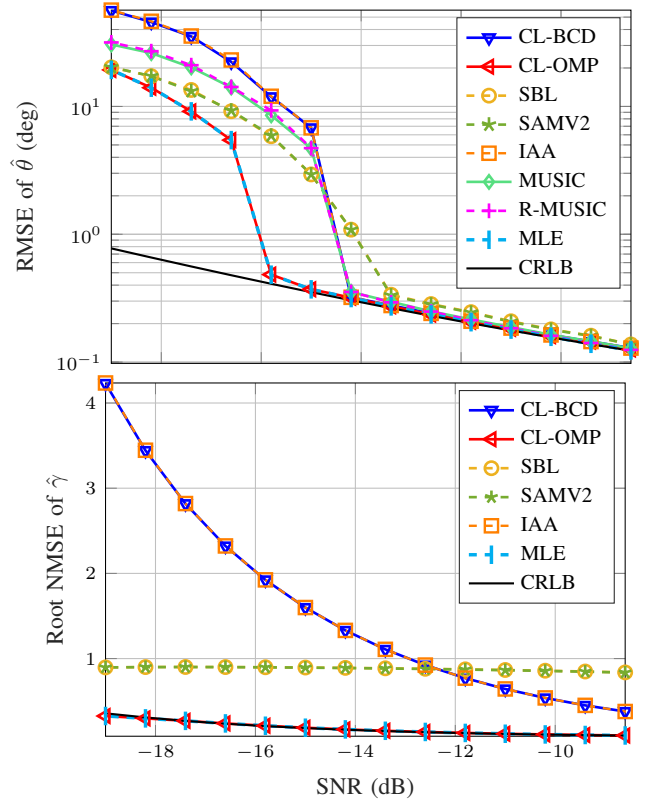
\begin{figure}[!t]
\centerline{\definecolor{mycolor1}{rgb}{0.30100,0.74500,0.93300}%
\definecolor{mycolor2}{rgb}{0.92900,0.69400,0.12500}%
\definecolor{mycolor3}{rgb}{0.46600,0.67400,0.18800}%
%\definecolor{mycolor4}{rgb}{1.00000,0.64706,0.00000}%
\definecolor{mycolor4}{rgb}{0.34200,0.87200,0.50300}%
\definecolor{mycolor5}{rgb}{1.00000,0.00000,1.00000}%

\begin{tikzpicture}

\begin{axis}[%
width= 0.78\columnwidth, 
height= 4.7cm, 
 tick label style={font=\scriptsize} , 
 scale only axis,
xmin=-19,
xmax=-8.6,
xlabel style={font=\small\color{white!15!black}},
ymode=log,
ymin=0.0976819529256765,
ymax=56.7131448607816,
yminorticks=true,
ylabel style={font=\small \color{white!15!black}},
ylabel={RMSE of $\hat \theta$ (deg)},
axis background/.style={fill=white},
xmajorgrids,
ymajorgrids,
yminorgrids,
xticklabel=\empty,% <- after symbolic x coords
legend style={legend cell align=left, align=left, draw=white!15!black,font=\footnotesize}
]
\addplot [color=blue, line width=1.0pt, mark size=2.5pt, mark=triangle, mark options={solid, rotate=180, blue}]
  table[row sep=crcr]{%
-7	0.103440804327886\\
-7.8	0.113929802949009\\
-8.6	0.128802173894698\\
-9.4	0.144844744467999\\
-10.2	0.162788205960998\\
-11	0.184499322491981\\
-11.8	0.208686367547092\\
-12.6	0.240748831772866\\
-13.4	0.277452698671324\\
-14.2	0.32138761643847\\
-15	6.80569100091976\\
-15.8	11.9072452733619\\
-16.6	22.3283615610281\\
-17.4	35.3160586702424\\
-18.2	45.7312440241899\\
-19	56.2949562571994\\
};
\addlegendentry{CL-BCD}

%\addplot [color=mycolor1, line width=1.0pt, mark size=2.5pt, mark=triangle, mark options={solid, mycolor1}]
%  table[row sep=crcr]{%
%-7	0.166973051717934\\
%-7.8	0.189763010094169\\
%-8.6	0.205888319241281\\
%-9.4	0.230564524591273\\
%-10.2	0.257856549267223\\
%-11	0.286443711748051\\
%-11.8	0.936744362139424\\
%-12.6	4.50613359766442\\
%-13.4	6.04995619818856\\
%-14.2	7.9403211521953\\
%-15	11.8523584151003\\
%-15.8	14.7358566089658\\
%-16.6	16.1158726105663\\
%-17.4	18.0055097123075\\
%-18.2	20.8807748898359\\
%-19	24.3838050763206\\
%};
%\addlegendentry{CCD}

\addplot [color=red, line width=1.0pt, mark size=2.9pt, mark=triangle, mark options={solid, rotate=90, red}]
  table[row sep=crcr]{%
-7	0.103392456204503\\
-7.8	0.113357840487547\\
-8.6	0.125698050899766\\
-9.4	0.144706599711279\\
-10.2	0.159687194226714\\
-11	0.182071414560332\\
-11.8	0.206784912409006\\
-12.6	0.235584379787795\\
-13.4	0.273441035691428\\
-14.2	0.319014106271183\\
-15	0.371119926708337\\
-15.8	0.483580396625007\\
-16.6	5.45340077382912\\
-17.4	9.1545529656013\\
-18.2	14.017190874066\\
-19	19.3157919847983\\
};
\addlegendentry{CL-OMP}

\addplot [color=mycolor2, dashed, line width=1.0pt, mark size=2.5pt, mark=o, mark options={solid, mycolor2}]
  table[row sep=crcr]{%
-7	0.110589330407595\\
-7.8	0.122392810246354\\
-8.6	0.137331715200824\\
-9.4	0.160934769394312\\
-10.2	0.180249826629598\\
-11	0.208614476966485\\
-11.8	0.246008129946959\\
-12.6	0.284517134809138\\
-13.4	0.336585798868583\\
-14.2	1.0871292471459\\
-15	2.93992516911571\\
-15.8	5.84890759031121\\
-16.6	9.17392009993547\\
-17.4	13.4612257985668\\
-18.2	17.2604539337759\\
-19	20.2293267312583\\
};
\addlegendentry{SBL}

\addplot [color=mycolor3, dashed, line width=1.0pt, mark size=2.8pt, mark=star, mark options={solid, mycolor3}]
  table[row sep=crcr]{%
-7	0.110498868772491\\
-7.8	0.122800651464071\\
-8.6	0.137113092008022\\
-9.4	0.160187390265277\\
-10.2	0.178969271105406\\
-11	0.206421898063166\\
-11.8	0.243782690115603\\
-12.6	0.282063822565038\\
-13.4	0.332114438108315\\
-14.2	1.08527415891101\\
-15	2.93924139872859\\
-15.8	5.91620655487958\\
-16.6	9.28641642400339\\
-17.4	13.1772690645672\\
-18.2	17.2799872685138\\
-19	20.2787696865466\\
};
\addlegendentry{SAMV2}

\addplot [color=orange, dashed, line width=0.9pt, mark size=2.5pt, mark=square, mark options={solid, orange}]
  table[row sep=crcr]{%
-7	0.103489129863962\\
-7.8	0.11379806676741\\
-8.6	0.128685663537163\\
-9.4	0.144775688566831\\
-10.2  0.163033738839543\\
-11	0.18452642087246\\
-11.8	 0.208710325571113\\
-12.6	 0.24054105678657\\
-13.4	 0.277092042469647\\
-14.2	 0.321247568084182\\
-15	6.80570349045565\\
-15.8	 12.0130212686068\\
-16.6	 23.097937570268\\
-17.4 	35.5755331653653\\
-18.2	 46.3755882981553\\
-19	 56.7131448607816\\
};
\addlegendentry{IAA}

\addplot [color=mycolor4, line width=1.0pt, mark size=2.5pt, mark=diamond, mark options={solid, mycolor4}]
  table[row sep=crcr]{%
-7	0.104211323760905\\
-7.8	0.114017542509914\\
-8.6	0.127906215642556\\
-9.4	0.145567853594124\\
-10.2	0.163462533933621\\
-11	0.187723200484118\\
-11.8	0.215569014471004\\
-12.6	0.249319072675959\\
-13.4	0.296293773137407\\
-14.2	0.353765459026177\\
-15	4.69431145110761\\
-15.8	8.52335028025951\\
-16.6	14.0877052780075\\
-17.4	20.2118007609416\\
-18.2	26.15937078754\\
-19	30.5767344561188\\
};
\addlegendentry{MUSIC}

\addplot [color=mycolor5, dashed, line width=1.0pt, mark size=2.8pt, mark=+, mark options={solid, mycolor5}]
  table[row sep=crcr]{%
-7	0.0985040592985926\\
-7.8	0.110520971055548\\
-8.6	0.124526863817934\\
-9.4	0.140996103515637\\
-10.2	0.160557300927738\\
-11	0.184062656948516\\
-11.8	0.212703793139129\\
-12.6	0.248222461194876\\
-13.4	0.293369004331033\\
-14.2	0.353676758382948\\
-15	4.72065975646681\\
-15.8	9.31433095583727\\
-16.6	14.2027616776688\\
-17.4	21.0318181395102\\
-18.2	27.0639423085323\\
-19	31.6888300187869\\
};
\addlegendentry{R-MUSIC}

\addplot [color=cyan, dashed, line width=1.3pt, mark size=3.5pt, mark=|, mark options={solid,cyan}]
  table[row sep=crcr]{%
-7	0.0980545766397468\\
-7.8	0.109891309938502\\
-8.6	0.123412317051419\\
-9.4	0.139011869996774\\
-10.2	0.157674347945377\\
-11	0.179462252298359\\
-11.8	0.20514848281184\\
-12.6	0.236048935604463\\
-13.4	0.27268223264452\\
-14.2	0.317102822440924\\
-15	0.370822329424753\\
-15.8	0.482888599989687\\
-16.6	5.45392759394547\\
-17.4	9.15274056225783\\
-18.2	14.0162555984114\\
-19	19.3159824937796\\
};
\addlegendentry{MLE}

\addplot [color=black, line width=0.8pt]
  table[row sep=crcr]{%
-7	0.0976819529256765\\
-7.8	0.109255063266972\\
-8.6	0.122568687069943\\
-9.4	0.137960599991531\\
-10.2	0.15584275607671\\
-11	0.176717045666451\\
-11.8	0.201193986704537\\
-12.6	0.230014867372018\\
-13.4	0.264077991651045\\
-14.2	0.304469851957514\\
-15	0.352502265137302\\
-15.8	0.409756758680133\\
-16.6	0.478137782165957\\
-17.4	0.55993664713149\\
-18.2	0.657908474466995\\
-19	0.775364865909318\\
};
\addlegendentry{CRLB}

\end{axis}
\end{tikzpicture}%} \vspace{-2pt}
\centerline{\hspace{11pt}\definecolor{mycolor1}{rgb}{0.30100,0.74500,0.93300}%
\definecolor{mycolor2}{rgb}{0.92900,0.69400,0.12500}%
\definecolor{mycolor3}{rgb}{0.46600,0.67400,0.18800}%
\definecolor{mycolor4}{rgb}{1.00000,0.64706,0.00000}%
\begin{tikzpicture}

\begin{axis}[%
width= 0.78\columnwidth, 
height= 4.7cm, 
 tick label style={font=\scriptsize} , 
scale only axis,
xmin=-19,
xmax=-8.6,
xlabel style={font=\small \color{white!15!black}},
xlabel={SNR (dB)},
ymin=0.0885237117426931,
ymax=4.23516818794506,
ylabel style={font=\small\color{white!15!black}},
ylabel={Root NMSE of $\hat \gamma$},
axis background/.style={fill=white},
xmajorgrids,
ymajorgrids,
legend style={legend cell align=left, align=left, draw=white!15!black,font=\footnotesize}
]
\addplot [color=blue, line width=1.0pt, mark size=2.5pt, mark=triangle, mark options={solid, rotate=180, blue}]
  table[row sep=crcr]{%
-7	0.269922255883377\\
-7.8	0.319029597242011\\
-8.6	0.378628351330418\\
-9.4	0.450798000532048\\
-10.2	0.538032225237281\\
-11	0.643354445318443\\
-11.8	0.770410612957527\\
-12.6	0.923607132569928\\
-13.4	1.10826567826858\\
-14.2	1.33078171818818\\
-15	1.59907633765058\\
-15.8	1.92376514168299\\
-16.6	2.32146430045274\\
-17.4	2.817938580838\\
-18.2	3.44220680437327\\
-19	4.22926122475551\\
};
\addlegendentry{CL-BCD}

\addplot [color=red, line width=1.0pt, mark size=2.9pt, mark=triangle, mark options={solid, rotate=90, red}]
  table[row sep=crcr]{%
-7	0.0889597931155574\\
-7.8	0.0928175655724508\\
-8.6	0.0974520042160049\\
-9.4	0.103016499188423\\
-10.2	0.109689240874284\\
-11	0.117684522350575\\
-11.8	0.127265456894546\\
-12.6	0.138732731174125\\
-13.4	0.152439791101815\\
-14.2	0.168819233282521\\
-15	0.188382255951675\\
-15.8	0.211646648468412\\
-16.6	0.239085515307267\\
-17.4	0.268897522383682\\
-18.2	0.298145311094112\\
-19	0.325811013408601\\
};
\addlegendentry{CL-OMP}

\addplot [color=mycolor2, dashed, line width=1.0pt, mark size=2.5pt, mark=o, mark options={solid, mycolor2}]
  table[row sep=crcr]{%
-7	0.813064935748295\\
-7.8	0.825982547667653\\
-8.6	0.837765082691442\\
-9.4	0.848295541550042\\
-10.2	0.857713446921974\\
-11	0.866185247966104\\
-11.8	0.87361686685951\\
-12.6	0.880226543609761\\
-13.4	0.885978441786756\\
-14.2	0.890937549785301\\
-15	0.895061119729682\\
-15.8	0.898360797377769\\
-16.6	0.900627626846296\\
-17.4	0.901483491541602\\
-18.2	0.90039498222725\\
-19	0.896594398289204\\
};
\addlegendentry{SBL}

\addplot [color=mycolor3, dashed, line width=1.0pt, mark size=2.5pt, mark=star, mark options={solid, mycolor3}]
  table[row sep=crcr]{%
-7	0.811921750962867\\
-7.8	0.824374826405855\\
-8.6	0.835774200027057\\
-9.4	0.846011057419327\\
-10.2	0.85525877081314\\
-11	0.863684365200406\\
-11.8	0.871188860797363\\
-12.6	0.877951618218234\\
-13.4	0.883947227123917\\
-14.2	0.889182366923972\\
-15	0.893619768224972\\
-15.8	0.897219220063464\\
-16.6	0.899745300886403\\
-17.4	0.900777162324722\\
-18.2	0.899793523166742\\
-19	0.89606247121731\\
};
\addlegendentry{SAMV2}

\addplot [color=orange, dashed, line width=0.9pt, mark size=2.5pt, mark=square, mark options={solid,orange}]
  table[row sep=crcr]{%
-7	0.270023933652605\\
-7.8	0.319132807858628\\
-8.6	0.37873262896302\\
-9.4	0.450903002894432\\
-10.2	0.538138196928096\\
-11	0.643461845289568\\
-11.8	0.770520010218817\\
-12.6	0.923720116423918\\
-13.4	1.10838436405029\\
-14.2	1.33090853676729\\
-15	1.59923743352581\\
-15.8	1.92404148849007\\
-16.6	2.32219752896939\\
-17.4	2.81973832161055\\
-18.2	3.44564826053516\\
-19	4.23516818794506\\
};
\addlegendentry{IAA}

\addplot [color=cyan, dashed, line width=1.3pt, mark size=3.5pt, mark=|, mark options={solid, cyan}]
  table[row sep=crcr]{%
-7	0.0889655045347249\\
-7.8	0.0928249550497132\\
-8.6	0.0974601741363501\\
-9.4	0.103023692160921\\
-10.2	0.109696767935854\\
-11	0.117695212677961\\
-11.8	0.127274952397604\\
-12.6	0.138740060346817\\
-13.4	0.152451036014186\\
-14.2	0.168834866498324\\
-15	0.188396222513\\
-15.8	0.211666692675751\\
-16.6	0.239102752821261\\
-17.4	0.268921373285085\\
-18.2	0.29817399713969\\
-19	0.325849583259096\\
};
\addlegendentry{MLE}

\addplot [color=black, line width=0.8pt]
  table[row sep=crcr]{%
-7	0.0885237117426931\\
-7.8	0.0921440839824458\\
-8.6	0.0965024141267074\\
-9.4	0.101749429177645\\
-10.2	0.108066540653729\\
-11	0.115672005422768\\
-11.8	0.124828305224803\\
-12.6	0.135850994599268\\
-13.4	0.149119323001461\\
-14.2	0.165089003420929\\
-15	0.184307577266058\\
-15.8	0.207432914861524\\
-16.6	0.235255495248491\\
-17.4	0.268725232422034\\
-18.2	0.308983763515482\\
-19	0.357403294552935\\
};
\addlegendentry{CRLB}

\end{axis}
\end{tikzpicture}%}
\caption{RMSE of $\hat \theta$ vs SNR (top panel) and root NMSE of $\hat \gamma$ vs SNR (bottom panel) in a single source case with DOA $\theta_1 = -25$;  $L=200$, $N=20$,  and $M=1801$ ($\Delta \theta=0.1^o$).} \label{fig:1source_CRB_vs_SNR_gamma}
\end{figure}

The simulation set-up is as follows. The array is Uniform Linear Array (ULA)  with half a wavelength interelement spacing and number of sensors is $N=20$. 
The grid size is $M=1801$, providing angular resolution  $\Delta \theta=0.1^o$. The number of MC trials is 1000. 
Let $\sigma^2_{s,k} =  \E[ |s_{k}|^2]$ denote the power of the $k^{\text{th}}$ source signal, and let   $\mathrm{SNR}_k = \sigma^2_{s,k}/\sigma^2$ denote the  SNR  of the  $k^{\text{th}}$ source. 
The SNR is defined as average of source SNR-s, i.e.,  $\mbox{SNR (dB)} = \frac{10}{K}  \sum_{k=1}^K  \log_{10} \mbox{SNR}_k$.  

We compare the proposed  CL-BCD and  CL-OMP against the Cram\'er-Rao lower bound and the  following CL-based methods: {\bf IAA}(-APES) \cite[Table~2]{yardibi2010source}, 
{\bf SAMV2} \cite[Table I]{abeida2012iterative}, 
and  {\bf SBL} variant of  \cite{nannuru2019sparse} (using $b=1$ in \eqref{eq:update_SAMV2} in signal power update).  For all methods (except for  greedy CL-OMP which terminates after $K$ iterations) we use \eqref{eq:terminate_iters} as the stopping criterion and set $500$ as maximum number of iterations.  We also compare with conventional high-resolution methods: {\bf MUSIC} \cite{schmidt:1986} and {\bf R-MUSIC} \cite{barabell1983improving} (aka Root-MUSIC).  We will seek the DOA peaks from MUSIC pseudospectrum using same DOA grid, and thus it has the same resolution as  CL-based algorithms.   In the case of a single source, we also compare with the {\bf MLE},  which can be obtained by solving
\beq \label{eq:DOA_MLE_1source}
\hat \theta_{\text{ML}} = \arg \max_{\theta} \,  \a(\theta)^\hop \S \a(\theta). 
\eeq
To find the maximizer in \eqref{eq:DOA_MLE_1source}, we sweep through a dense grid of 18,001 points, yielding a resolution of $0.01^o$.  
It is worthwhile to note that  SAMV2 uses the same signal power update as SBL. Thus we expect these two methods perform similarly in DOA estimation (i.e., identifying the support $\mathcal M$), while possibly having some differences in actual estimates of signal powers. As discussed in \autoref{subsec:FPiter}, the signal power update of IAA is similar to FP update \eqref{eq:gamma_update2} and thus these two methods are expected to perform similarly in identifying the support $\mathcal M$. There are also many grid-free methods, such as gridless SPICE \cite{yang2015gridless}, atomic norm minimization approaches \cite{wagner2021gridless}, etc.  Grid-free methods have their own benefits and disadvantages. In this study we only compare with gridded methods and use fine grid for better illustrating the attainable accuracy of the methods.

In the first simulation setting, we have a single ($K=1$) source at $\theta=-25^o$. 
\autoref{fig:1source_CRB_vs_SNR} and \autoref{fig:1source_CRB_vs_SNR_gamma} display the performance of DOA and signal power estimation vs the SNR in terms of root MSE (RMSE)  and root normalized MSE (NMSE), respectively, when  $L=25$ and $L=200$.
Concerning DOA estimation, the following observations can be made:
(a) The proposed CL-algorithms demonstrate the best performance;
(b) CL-OMP performs comparably to the MLE across all SNR and snapshot ranges;
(c) MUSIC and R-MUSIC exhibit essentially the same performance;
(d) In the high SNR regime, CL-OMP, CL-BCD, IAA, MUSIC, and R-MUSIC perform similarly, attaining the CRLB; 
(e) SBL and SAMV2 exhibit similar performance, as anticipated, but they do achieve the same level of performance as the other methods.
When $L$ increases to 200,  one can notice that CL-OMP and MLE  improve their performance the most. The threshold where CL-OMP and MLE break down is $<-15.8$ dB for $L=200$ while the 2nd best performing methods, CL-BCD and IAA, breakdown when SNR is $<-14.2$ dB.   SAMV2 and SBL show some improvement when the number of snapshots is larger. In terms of signal power estimation, CL-OMP and MLE exhibit impeccable performance, reaching the CRLB. On the other hand, CL-BCD and IAA provide accurate estimates of signal powers in the higher SNR regime, but their accuracy sharply deviates from the CRLB as SNR decreases. SAMV2 and SBL on the other hand provide clearly biased and inconsistent estimates of signal power estimates even in the high SNR regime. 

Next, we investigate the impact of the true DOA,  allowing the DOA of the source to range from $-30^o$ to $-80^o$ in 5 degree steps.  The results are presented for a sample size of $L=25$ and SNR of $-8$ dB in \autoref{fig:1source_CRB_vs_DOAsrc}.  We can notice that CL-BCD and IAA start to deviate from the performance of CL-OMP more when the true DOA deviates significantly of array broadside.  The performance of CL-OMP signal power estimate  $\hat \gamma$ is not affected by true DOA of the source. MUSIC and R-MUSIC exhibit erratic behaviour, and for some values of DOA, they encounter difficulties in identifying the true noise subspace due to the relatively small sample size.  This issue results in a notably large value for the root NMSE. 

\begin{figure}[!t]
\centerline{\definecolor{mycolor1}{rgb}{0.92900,0.69400,0.12500}%
\definecolor{mycolor2}{rgb}{0.46600,0.67400,0.18800}%
\definecolor{mycolor3}{rgb}{1.00000,0.64706,0.00000}%
\definecolor{mycolor4}{rgb}{0.34200,0.87200,0.50300}%
\definecolor{mycolor5}{rgb}{1.00000,0.00000,1.00000}%

\begin{tikzpicture}
%\begin{groupplot}[group style={group size=4 by 1,horizontal sep=0.29cm,vertical sep=0.27cm}]
\begin{groupplot}[group style={group size=1 by 2,vertical sep=0.27cm}]
\nextgroupplot[
width= 0.78\columnwidth, 
height= 4.7cm, 
scale only axis,
 tick label style={font=\scriptsize} , 
xmin=-80,
xmax=-30,
xlabel style={font=\small \color{white!15!black}},
ymode=log,
ymin=0.3,
ymax=10,
yminorticks=true,
ylabel style={font=\small \color{white!15!black}},
ylabel={RMSE of $\hat \theta$ (deg.)},
axis background/.style={fill=white},
 tick label style={font=\scriptsize} , 
 xticklabel=\empty,% <- after symbolic x coords
xmajorgrids,
ymajorgrids,
yminorgrids,
legend style={anchor=north west, legend cell align=left, align=left,font=\footnotesize,at={(0.01,-1.27)}, legend columns=4,draw=none} %draw=white!15!black
]
\addplot [color=blue,  line width=1.0pt, mark size=3.1pt, mark=triangle, mark options={solid, rotate=180}]
  table[row sep=crcr]{%
-80	10.4264218854472\\
-70	1.90046134749785\\
-60	0.708114868271148\\
-50	0.500033332222296\\
-40	0.405240669232493\\
-30	0.36053663706573\\
};
\addlegendentry{CL-BCD}

\addplot [color=red, line width=1.0pt, mark size=3.1pt, mark=triangle, mark options={solid, rotate=90, red}]
  table[row sep=crcr]{%
-80	7.58181816365089\\
-70	0.872066511224918\\
-60	0.618832233592703\\
-50	0.476046216243758\\
-40	0.393293783322339\\
-30	0.349704637277421\\
};
\addlegendentry{CL-OMP}

\addplot [color=mycolor1, dashed, line width=1.0pt, mark size=2.8pt, mark=o, mark options={solid, mycolor1}]
  table[row sep=crcr]{%
-80	40.1257961914776\\
-70	18.3615627512112\\
-60	10.7280684810143\\
-50	5.98552643187437\\
-40	3.52122895213211\\
-30	1.73416838859438\\
};
\addlegendentry{SBL}

\addplot [color=mycolor2, dashed, line width=1.0pt, mark size=2.8pt, mark=star, mark options={solid, mycolor2}]
  table[row sep=crcr]{%
-80	48.7215035345449\\
-70	25.4851855006002\\
-60	14.1007295319545\\
-50	7.99624078343483\\
-40	4.76851129808874\\
-30	3.36082430364935\\
};
\addlegendentry{SAMV2}

\addplot [color=orange, dashed, line width=0.7pt, mark size=2.8pt, mark=square, mark options={solid,orange}]
  table[row sep=crcr]{%
-80	10.4234341749732\\
-70	1.90538010206188\\
-60	0.708891152904778\\
-50	0.499873317284823\\
-40	0.405117266973402\\
-30	0.360841608835048\\
};
\addlegendentry{IAA}

\addplot [color=mycolor4,  line width=1.0pt, mark size=2.8pt, mark=diamond, mark options={solid, mycolor4}]
  table[row sep=crcr]{%
-80	14.3819382096666\\
-70	0.930082433622598\\
-60	0.656561751348137\\
-50	1.85753600234289\\
-40	2.46740214260535\\
-30	2.75778534335075\\
};
\addlegendentry{MUSIC}

\addplot [color=mycolor5, dashed, line width=1.0pt, mark size=2.8pt, mark=+, mark options={solid, mycolor5}]
  table[row sep=crcr]{%
-80	13.7119146088959\\
-70	0.931786188352941\\
-60	0.657794944741776\\
-50	2.50974279634016\\
-40	2.47756705704373\\
-30	2.76467608981676\\
};
\addlegendentry{R-MUSIC}

\addplot [color=cyan, dashed, line width=1.4pt, mark size=3pt, mark=|, mark options={solid, cyan}]
  table[row sep=crcr]{%
-80	7.58242233062759\\
-70	0.871071562310852\\
-60	0.617481443715788\\
-50	0.475698433884326\\
-40	0.392537641507156\\
-30	0.349686335640004\\
};
\addlegendentry{MLE}

\addplot [color=black, line width=0.7pt]
  table[row sep=crcr]{%
-80	1.65939054241786\\
-70	0.842494658141308\\
-60	0.576300287457201\\
-50	0.448282044311836\\
-40	0.376153298045459\\
-30	0.332727126097474\\
};
%\addlegendentry{CRLB}

\nextgroupplot[
width= 0.79\columnwidth, 
height= 4.8cm, 
scale only axis,
tick label style={font=\scriptsize} , 
xmin=-80,
xmax=-30,
xlabel style={font=\small\color{white!15!black}},
xlabel={$\theta\text{ (deg)}$},
ymin=0.0351759348147907,
ymax=0.150455690272155,
ylabel style={font=\small\color{white!15!black}},
ylabel={Root NMSE of  $\hat \gamma$},
yminorticks=true,
axis background/.style={fill=white},
%axis x line*=bottom,
%axis y line*=left,
xmajorgrids,
ymajorgrids,
yminorgrids,
ylabel style={font=\color{white!15!black}},
             yticklabel style={%
                 /pgf/number format/.cd,
                     fixed,
                     fixed zerofill,
                     precision=2,
                     },
legend style={anchor=north west, legend cell align=left, align=left,draw=none,font=\scriptsize,at={(0.44,0.73)}, legend columns=2} %draw=white!15!black
]

\addplot [color=blue, line width=1.0pt, mark size=3.5pt, mark=triangle, mark options={solid, rotate=180, blue}]
  table[row sep=crcr]{%
-80	0.0898344100942886\\
-70	0.086836832079358\\
-60	0.0701186772475869\\
-50	0.0665498726482214\\
-40	0.0658285269923627\\
-30	0.0655134573986153\\
};
%\addlegendentry{BCD-NBL}

\addplot [color=red, line width=1.0pt, mark size=3.0pt, mark=triangle, mark options={solid, rotate=90, red}]
  table[row sep=crcr]{%
-80	0.0418081832449748\\
-70	0.0425008906394486\\
-60	0.0419566718826229\\
-50	0.0414257787664828\\
-40	0.0411759348147907\\
-30	0.0413670623128282\\
};
%\addlegendentry{OMP-NBL}

\addplot [color=mycolor1, dashed, line width=1.0pt, mark size=2.5pt, mark=o, mark options={solid, mycolor1}]
  table[row sep=crcr]{%
-80	0.150455690272155\\
-70	0.143288930656015\\
-60	0.136876260310345\\
-50	0.130643707302419\\
-40	0.12595725844725\\
-30	0.123306920139187\\
};
%\addlegendentry{SBL}

\addplot [color=mycolor2, dashed, line width=1.0pt, mark size=2.8pt, mark=star, mark options={solid, mycolor2}]
  table[row sep=crcr]{%
-80	0.150391816293909\\
-70	0.143820986175084\\
-60	0.137700304576946\\
-50	0.131718181995784\\
-40	0.127266175332536\\
-30	0.124797992065236\\
};
%\addlegendentry{SAMV2}

\addplot [color=orange, dashed, line width=1.0pt, mark size=2.5pt, mark=square, mark options={solid, orange}]
  table[row sep=crcr]{%
-80	0.0901246671592491\\
-70	0.0870518702679986\\
-60	0.0701853925834908\\
-50	0.066578594026026\\
-40	0.0658471925193633\\
-30	0.0655271886677369\\
};
%\addlegendentry{IAA}

\addplot [color=cyan, dashed, line width=1.4pt, mark size=3.8pt, mark=|, mark options={solid,cyan}]
  table[row sep=crcr]{%
-80	0.0418083033906211\\
-70	0.0425013755407697\\
-60	0.0419577421404959\\
-50	0.0414274226919146\\
-40	0.041178418019675\\
-30	0.0413694567558496\\
};
%\addlegendentry{MLE}

\addplot [color=black, line width=0.7pt]
  table[row sep=crcr]{%
-80	0.0417609268034489\\
-70	0.0417609268034489\\
-60	0.0417609268034489\\
-50	0.0417609268034489\\
-40	0.0417609268034489\\
-30	0.0417609268034489\\
};
%\addlegendentry{CRLB}

\end{groupplot}
\end{tikzpicture}%} 
\caption{RMSE of $\hat \theta$ (top panel) and root NMSE of $\hat \gamma$  (bottom panel) vs true DOA in single source case; SNR is $-8$ dB,    $N=20$, $L=25$, and $M=1801$. The black line displays the CRLB.}  \label{fig:1source_CRB_vs_DOAsrc}
\end{figure}
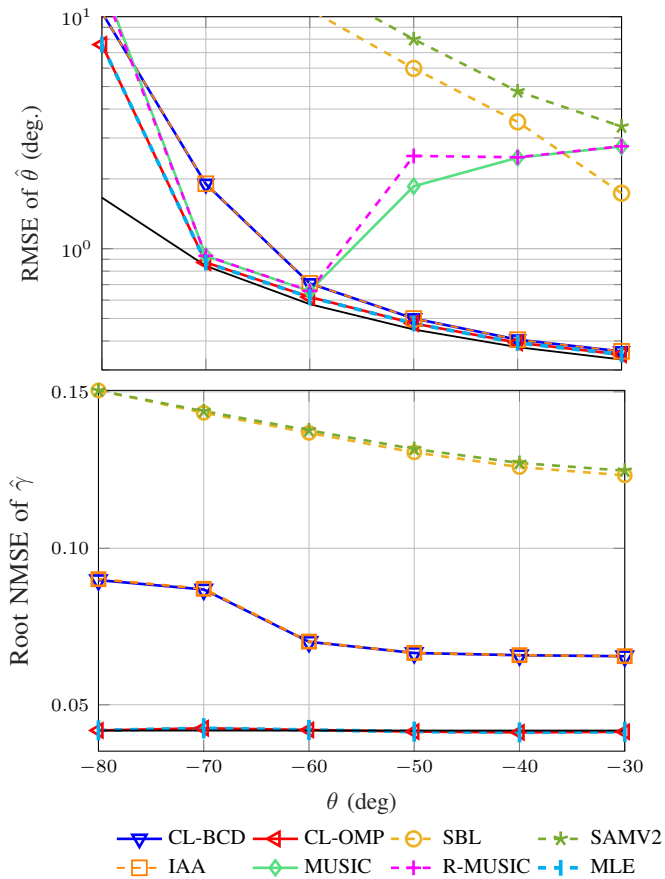

Next we consider the case of two source signals. The sources arrive from angles  $\theta_1 = -20.02^o$ and $\theta_2=3.02^o$, and are thus off the predefined grid.  The 2nd  source (closer to the array broadside) has 3dB higher power than the 1st source.  We then calculate the DOA estimate $\hat{\boldsymbol{\theta}} = (\hat \theta_1, \hat \theta_2)^\top$ and the respective  source power estimate $\hat{\boldsymbol{\sigma}}^2_s = \hat{\gam}_{\hat{\mathcal M}}$ and their empirical RMSE $\| \hat{\boldsymbol{\theta}}  - \boldsymbol{\theta}  \|$ and root NMSE $\| \hat{\boldsymbol{\sigma}}^2_s  - \boldsymbol{\sigma}^2_s  \|/\| \boldsymbol{\sigma}^2_s \|$, respectively, averaged over 1000 MC trials. The results for DOA estimation are displayed in  \autoref{fig:2source_MSEdoa} when $L=25$ (bottom panel) and $L=125$ (top panel).  
As observed, the overall performance of the methods remains highly consistent with the single-source scenario. CL-OMP continues to exhibit the best performance across all SNR levels and various sample lengths. The second-best performing methods are CL-BCD and IAA, which demonstrate very similar performance. On the other hand, SBL and SAMV2 exhibit comparable but slightly less satisfactory performance compared to other competing methods.
Top panel of \autoref{fig:2source_MSEgamma} displays the normalized RMSE of signal power estimates for snapshot size $L=25$. The conclusions are  equivalent with 1 source case: CL-OMP has impeccable performance for all SNR levels while IAA and CL-BCD provide consistent estimation only at high SNR regime, while SAMV2 and SBL estimates are heavily biased for all SNR ranges. 

Next we consider the effect of correlation of the source signals.  Earlier studies  have shown that the correlations of the source signals  have little or no effect on support recovery performance of CL-based methods (see e.g., \cite{stoica2010new,pote2020robustness}).  Let the signal covariance $\M_s$ be of the form
\[
\M_s = \bmat \sigma_1^2  & \rho \sigma_1 \sigma_2 \\   \rho \sigma_1 \sigma_2 & \sigma_2^2 \emat
\]
and  we set correlation $\rho$ as $\rho = 0.95$. Hence the signals have high correlation while their  correlation phase is equal to $0$. Otherwise the simulation set-up is as earlier. \autoref{fig:2sourceCC_MSEdoa2} displays the DOA estimation performance vs the SNR when  $L=25$ and $L=125$, respectively. We can notice that the correlation did not bring essential differences between methods compared to uncorrelated case displayed in \autoref{fig:2source_MSEdoa}. Thus all  CL methods seem  to be robust to correlation between the source signals. This is not the case for  MUSIC and R-MUSIC which encounter complete failure. This is because  their design is heavily dependent on  the assumption of uncorrelated sources. The bottom panel  of \autoref{fig:2source_MSEgamma} depicts the  root NMSE of signal power estimate $\hat{\boldsymbol{\sigma}}_s^2$ in $L=25$ case. A comparison with the top panel plot ($\rho=0$) reveals that the curves are essentially equivalent. This further confirms the robustness of CL methods to source correlation, consistent with the findings of  \cite{stoica2010new}. 

\begin{figure}[!t]
\centerline{\definecolor{mycolor1}{rgb}{0.92900,0.69400,0.12500}%
\definecolor{mycolor2}{rgb}{0.46600,0.67400,0.18800}%
\definecolor{mycolor3}{rgb}{1.00000,0.64706,0.00000}%
\definecolor{mycolor4}{rgb}{0.34200,0.87200,0.50300}%
\definecolor{mycolor5}{rgb}{1.00000,0.00000,1.00000}%
\begin{tikzpicture}

\begin{axis}[%
width= 0.8\columnwidth, 
height= 4.8cm, 
 tick label style={font=\scriptsize} ,
scale only axis,
xmin=-19.5,
xmax=-5.5,
xlabel style={font=\small\color{white!15!black}},
xlabel={SNR (dB)},
ymode=log,
ymin=0.142579681048756,
ymax=58.7364946860127,
yminorticks=true,
ylabel style={font=\small\color{white!15!black}},
ylabel={RMSE of $\hat{\boldsymbol{\theta}}$ (deg.)},
axis background/.style={fill=white},
xmajorgrids,
ymajorgrids,
yminorgrids,
legend style={legend cell align=left, align=left, draw=white!15!black,font=\footnotesize}
]
\addplot [color=blue, line width=1.0pt, mark size=3.1pt, mark=triangle, mark options={solid, rotate=180, blue}]
  table[row sep=crcr]{%
-5.5	0.147349923651152\\
-6.5	0.169050288375975\\
-7.5	0.193447667341842\\
-8.5	0.22247247020699\\
-9.5	0.258561404699154\\
-10.5	0.301403384188034\\
-11.5	0.357015405830056\\
-12.5	4.11868425592446\\
-13.5	9.65619687040399\\
-14.5	19.9643554366275\\
-15.5	31.436917469752\\
-16.5	39.4092850480696\\
-17.5	47.1601637613781\\
-18.5	53.0713343529254\\
-19.5	58.3523888285647\\
};
\addlegendentry{CL-BCD}

\addplot [color=red, line width=1.0pt, mark size=3.1pt, mark=triangle, mark options={solid, rotate=90, red}]
  table[row sep=crcr]{%
-5.5	0.147234506824998\\
-6.5	0.167469400190006\\
-7.5	0.189662858778413\\
-8.5	0.219918166598396\\
-9.5	0.253740024434459\\
-10.5	0.295972971738974\\
-11.5	0.350202798389733\\
-12.5	0.415405825669309\\
-13.5	3.14372358835824\\
-14.5	5.87707886623959\\
-15.5	10.7407535117421\\
-16.5	15.9847424126884\\
-17.5	21.6321938323417\\
-18.5	27.0513844747362\\
-19.5	32.0790706847938\\
};
\addlegendentry{CL-OMP}

\addplot [color=mycolor1, dashed, line width=1.0pt, mark size=2.5pt, mark=o, mark options={solid, mycolor1}]
  table[row sep=crcr]{%
-5.5	0.162861904692289\\
-6.5	0.187387299462904\\
-7.5	0.221900878772482\\
-8.5	0.259923065540556\\
-9.5	0.308295312971183\\
-10.5	0.800356170714014\\
-11.5	0.829837333457588\\
-12.5	1.90328295321531\\
-13.5	4.99850677702852\\
-14.5	8.07557576894675\\
-15.5	12.3464692928788\\
-16.5	15.4929045049661\\
-17.5	19.2423687731007\\
-18.5	22.3851865303821\\
-19.5	24.8016883699477\\
};
\addlegendentry{SBL}

\addplot [color=mycolor2, dashed, line width=1.0pt, mark size=2.8pt, mark=x, mark options={solid, mycolor2}]
  table[row sep=crcr]{%
-5.5	0.162511538051919\\
-6.5	0.185332134288687\\
-7.5	0.218842409052724\\
-8.5	0.255968748092418\\
-9.5	0.301761495224291\\
-10.5	0.793790904457843\\
-11.5	0.982842815510192\\
-12.5	2.10692287471563\\
-13.5	4.74799157539269\\
-14.5	7.90729713619009\\
-15.5	11.6474223757877\\
-16.5	15.2174696319723\\
-17.5	18.9142866109193\\
-18.5	22.2391908575829\\
-19.5	24.6021841306824\\
};
\addlegendentry{SAMV2}

\addplot [color=orange, dashed, line width=1.0pt, mark size=2.5pt, mark=square, mark options={solid, orange}]
  table[row sep=crcr]{%
-5.5	0.147465250143891\\
-6.5	0.169150820275871\\
-7.5	0.19349935400409\\
-8.5	0.222535390443857\\
-9.5	0.258329247279513\\
-10.5	0.301768122902337\\
-11.5	0.356864119799119\\
-12.5	4.12030799819625\\
-13.5	9.81395424892535\\
-14.5	20.2346991082151\\
-15.5	31.7828220899277\\
-16.5	39.7307101119525\\
-17.5	47.513220349709\\
-18.5	53.4599028618646\\
-19.5	58.7364946860127\\
};
\addlegendentry{IAA}

\addplot [color=mycolor4, line width=1.0pt, mark size=3.1pt, mark=diamond, mark options={solid, mycolor4}]
  table[row sep=crcr]{%
-5.5	0.147275252503604\\
-6.5	0.16803571049036\\
-7.5	0.194535343832426\\
-8.5	0.224307824205933\\
-9.5	0.263749881516561\\
-10.5	0.315860728803059\\
-11.5	0.386437575812705\\
-12.5	3.21629973727574\\
-13.5	8.05221385707062\\
-14.5	14.7618408743625\\
-15.5	20.6184690023289\\
-16.5	24.949322034877\\
-17.5	28.5563541790614\\
-18.5	32.4051604840957\\
-19.5	35.6270365312637\\
};
\addlegendentry{MUSIC}

\addplot [color=mycolor5, dashed, line width=1.0pt, mark size=2.8pt, mark=+, mark options={solid, mycolor5}]
  table[row sep=crcr]{%
-5.5	0.143067489380246\\
-6.5	0.164415155332826\\
-7.5	0.190183925165675\\
-8.5	0.221888528066421\\
-9.5	0.26193728912096\\
-10.5	0.314543733641064\\
-11.5	0.388392204317641\\
-12.5	3.32151052941911\\
-13.5	9.85067734789521\\
-14.5	15.1949889166089\\
-15.5	20.6188370427766\\
-16.5	25.223351873188\\
-17.5	29.806507368881\\
-18.5	34.5710848998759\\
-19.5	37.9900627615974\\
};
\addlegendentry{R-MUSIC}

\addplot [color=black, line width=0.8pt]
  table[row sep=crcr]{%
-5.5	0.142579681048756\\
-6.5	0.163562472548216\\
-7.5	0.188462571612736\\
-8.5	0.218239551395516\\
-9.5	0.254123516066876\\
-10.5	0.297689174025468\\
-11.5	0.350947591457344\\
-12.5	0.416459625651587\\
-13.5	0.497476362561353\\
-14.5	0.598113671779777\\
-15.5	0.723570248048631\\
-16.5	0.880401221995841\\
-17.5	1.07686262873614\\
-18.5	1.32334584841393\\
-19.5	1.63292581136152\\
};
\addlegendentry{CRLB}

\end{axis}
\end{tikzpicture}%}
\centerline{\definecolor{mycolor1}{rgb}{0.34200,0.87200,0.50300}%
\definecolor{mycolor2}{rgb}{1.00000,0.00000,1.00000}%
\definecolor{mycolor3}{rgb}{0.92900,0.69400,0.12500}%
\definecolor{mycolor4}{rgb}{0.46600,0.67400,0.18800}%
\definecolor{mycolor5}{rgb}{1.00000,0.64706,0.00000}%
\begin{tikzpicture}

\begin{axis}[%
width= 0.8\columnwidth, 
height= 4.8cm, 
 tick label style={font=\scriptsize} , 
scale only axis,
xmin=-14.5,
xmax=-0.499999999999999,
xlabel style={font=\color{white!15!black}},
xlabel={SNR (dB)},
ymode=log,
ymin=0.167002559756267,
ymax=36.7506271511113,
yminorticks=true,
ylabel style={font=\color{white!15!black}},
ylabel={RMSE of $\hat{\boldsymbol{\theta}}$ (deg.)},
axis background/.style={fill=white},
xmajorgrids,
ymajorgrids,
yminorgrids,
legend style={legend cell align=left, align=left, draw=white!15!black,font=\footnotesize}
]
\addplot [color=blue, line width=1.0pt, mark size=3.1pt, mark=triangle, mark options={solid, rotate=180, blue}]
  table[row sep=crcr]{%
-0.499999999999999	0.170762993649093\\
-1.5	0.190667249416359\\
-2.5	0.21496046148071\\
-3.5	0.245059176526813\\
-4.5	0.279774909525497\\
-5.5	0.317323179109248\\
-6.5	0.367676488233882\\
-7.5	0.426940276853801\\
-8.5	3.25630250437517\\
-9.5	6.22831871374611\\
-10.5	11.9357804101785\\
-11.5	18.983383418137\\
-12.5	24.3638006887267\\
-13.5	30.4623612019817\\
-14.5	36.3595753825591\\
};
\addlegendentry{CL-BCD}

\addplot [color=red, line width=1.0pt, mark size=3.1pt, mark=triangle, mark options={solid, rotate=90, red}]
  table[row sep=crcr]{%
-0.499999999999999	0.176968923825626\\
-1.5	0.196529895944612\\
-2.5	0.218796709298838\\
-3.5	0.247159867292407\\
-4.5	0.278829697127118\\
-5.5	0.317968551904115\\
-6.5	0.3654859778432\\
-7.5	0.421962083604677\\
-8.5	0.971861101186791\\
-9.5	4.3060067347834\\
-10.5	7.59370094222836\\
-11.5	12.1424496704742\\
-12.5	16.8918823107432\\
-13.5	21.9192744861686\\
-14.5	26.6377942780554\\
};
\addlegendentry{CL-OMP}

\addplot [color=mycolor3, dashed, line width=1.0pt, mark size=2.5pt, mark=o, mark options={solid, mycolor3}]
  table[row sep=crcr]{%
-0.499999999999999	0.186917093921343\\
-1.5	0.213044596270358\\
-2.5	0.243055549206349\\
-3.5	0.281886501982624\\
-4.5	0.325542623937327\\
-5.5	0.374395512793623\\
-6.5	1.99134477175601\\
-7.5	2.83708089415864\\
-8.5	4.11339883794412\\
-9.5	5.57841518712976\\
-10.5	9.26807725474923\\
-11.5	12.3572067232041\\
-12.5	14.970211888948\\
-13.5	17.7917490427445\\
-14.5	20.3331051735833\\
};
\addlegendentry{SBL}

\addplot [color=mycolor4, dashed, line width=1.0pt, mark size=2.8pt, mark=x, mark options={solid, mycolor4}]
  table[row sep=crcr]{%
-0.499999999999999	0.191149156419797\\
-1.5	0.218096309001322\\
-2.5	0.247774090655177\\
-3.5	0.288541158242632\\
-4.5	0.333310665895947\\
-5.5	0.62288201129909\\
-6.5	2.57134361764429\\
-7.5	3.10580392169242\\
-8.5	4.33963109031171\\
-9.5	6.28933827997826\\
-10.5	9.7438972695734\\
-11.5	12.9202503071729\\
-12.5	15.6358788688068\\
-13.5	18.2911985938593\\
-14.5	20.5334777862884\\
};
\addlegendentry{SAMV2}

\addplot [color=orange, dashed, line width=1.0pt, mark size=2.5pt, mark=square, mark options={solid, orange}]
  table[row sep=crcr]{%
-0.499999999999999	0.170821544308674\\
-1.5	0.190525588832577\\
-2.5	0.215011627592556\\
-3.5	0.24486731100741\\
-4.5	0.279599713876821\\
-5.5	0.317228624181362\\
-6.5	0.367790701350646\\
-7.5	0.426926223134631\\
-8.5	3.25435093375008\\
-9.5	6.22917602897847\\
-10.5	11.9353656835474\\
-11.5	19.2155201334754\\
-12.5	24.8255068024804\\
-13.5	30.4577463381649\\
-14.5	36.7506271511113\\
};
\addlegendentry{IAA}

\addplot [color=mycolor1, line width=1.0pt, mark size=3.1pt, mark=diamond, mark options={solid, mycolor1}]
  table[row sep=crcr]{%
-0.499999999999999	0.172933513235579\\
-1.5	0.193736935043373\\
-2.5	0.218641258686461\\
-3.5	0.251081660023188\\
-4.5	0.287749891398763\\
-5.5	0.331212922453216\\
-6.5	0.867198939113745\\
-7.5	2.83254726350682\\
-8.5	5.24681827396375\\
-9.5	8.43614663220122\\
-10.5	12.5382659885648\\
-11.5	17.7146726190466\\
-12.5	22.1323655310498\\
-13.5	26.6527253765914\\
-14.5	28.8894817537456\\
};
\addlegendentry{MUSIC}

\addplot [color=mycolor2, dashed, line width=1.0pt, mark size=2.8pt, mark=+, mark options={solid, mycolor2}]
  table[row sep=crcr]{%
-0.499999999999999	0.167002559756267\\
-1.5	0.189502529391903\\
-2.5	0.215752353346363\\
-3.5	0.246736613853881\\
-4.5	0.283914995118726\\
-5.5	0.581049526122686\\
-6.5	1.38630850670296\\
-7.5	2.70615654983434\\
-8.5	6.03749818302258\\
-9.5	9.92180989946245\\
-10.5	15.0151009931177\\
-11.5	19.0736702820715\\
-12.5	23.8115865506162\\
-13.5	28.7869509633057\\
-14.5	32.6375187271589\\
};
\addlegendentry{R-MUSIC}

\addplot [color=black, line width=1.0pt]
  table[row sep=crcr]{%
-0.499999999999999	0.168212276618162\\
-1.5	0.190267687726193\\
-2.5	0.215626165468828\\
-3.5	0.244929073488398\\
-4.5	0.278984532030145\\
-5.5	0.318817859035257\\
-6.5	0.365736807185754\\
-7.5	0.4214151213405\\
-8.5	0.487998472299433\\
-9.5	0.568237456606795\\
-10.5	0.665653229286712\\
-11.5	0.784742671038446\\
-12.5	0.931232032841063\\
-13.5	1.11239096388652\\
-14.5	1.33742282837158\\
};
\addlegendentry{CRLB}

\end{axis}
\end{tikzpicture}%}
\caption{RMSE of $\hat{\boldsymbol{\theta}}$  vs SNR in two sources case for $L=125$ (top panel) and $L=25$ (bottom panel);  $N=20$,  $M=1801$ ($\Delta \theta=0.1^o)$, $\theta_1 = -20.02^o$ and $\theta_2=3.02^o$, and the  $2$nd  source has $3$ dB higher power than the $1$st source.}  \label{fig:2source_MSEdoa}
\end{figure}
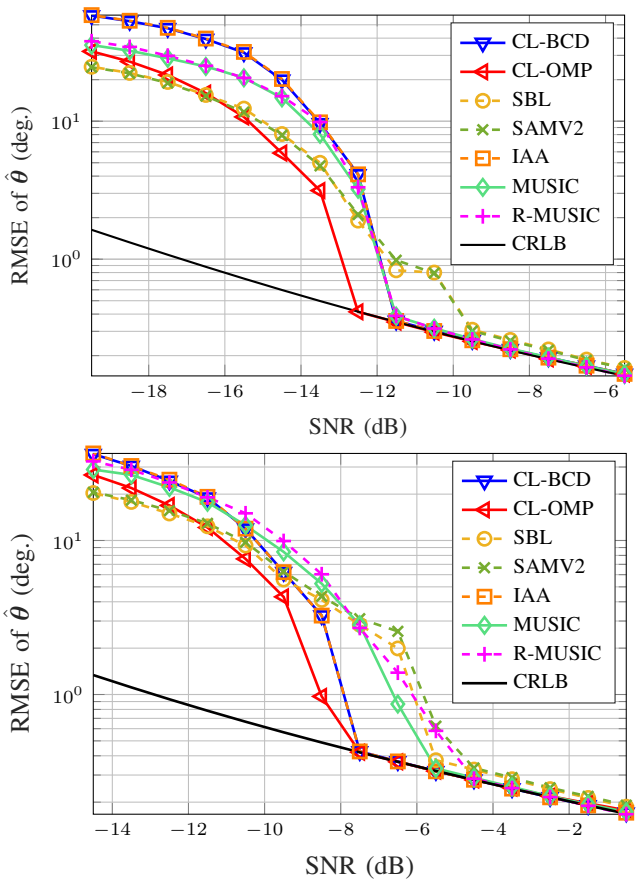

\begin{figure}[!t]
\centerline{\definecolor{mycolor1}{rgb}{0.92900,0.69400,0.12500}%
\definecolor{mycolor2}{rgb}{0.46600,0.67400,0.18800}%
\definecolor{mycolor3}{rgb}{1.00000,0.64706,0.00000}%
\begin{tikzpicture}

\begin{axis}[%
width= 0.78\columnwidth, 
height= 4.7cm, 
 tick label style={font=\scriptsize} , 
scale only axis,
xmin=-14.5,
xmax=-0.499999999999999,
xlabel style={font=\color{white!15!black}},
%xlabel={SNR (dB)},
%ymode=log,
ymin=0.209569362996591,
ymax=1.50397676507501,
yminorticks=true,
ylabel style={font=\color{white!15!black}},
ylabel={Root NMSE of $\hat{\boldsymbol{\sigma}}^2_s$},
axis background/.style={fill=white},
xmajorgrids,
ymajorgrids,
yminorgrids,
legend style={legend cell align=left, align=left, draw=white!15!black,font=\footnotesize}
]
\addplot [color=blue, line width=1.0pt, mark size=3.1pt, mark=triangle, mark options={solid, rotate=180, blue}]
  table[row sep=crcr]{%
-0.499999999999999	0.219075634635024\\
-1.5	0.225438444935296\\
-2.5	0.234264679834373\\
-3.5	0.246575397488519\\
-4.5	0.263695730876903\\
-5.5	0.287397490890979\\
-6.5	0.319906359125578\\
-7.5	0.363931682196256\\
-8.5	0.423728643999224\\
-9.5	0.503796077979119\\
-10.5	0.613885345459713\\
-11.5	0.761527305376752\\
-12.5	0.947423055232439\\
-13.5	1.18783151428013\\
-14.5	1.50292796526782\\
};
\addlegendentry{CL-BCD}

\addplot [color=red, line width=1.0pt, mark size=3.1pt, mark=triangle, mark options={solid, rotate=90, red}]
  table[row sep=crcr]{%
-0.499999999999999	0.210863916855349\\
-1.5	0.213603980107622\\
-2.5	0.217048336950752\\
-3.5	0.221349321005266\\
-4.5	0.226749446892765\\
-5.5	0.233512866230965\\
-6.5	0.241984038094559\\
-7.5	0.252690021548543\\
-8.5	0.267211261541493\\
-9.5	0.288247697112676\\
-10.5	0.321914649880364\\
-11.5	0.36856010373513\\
-12.5	0.422398195048759\\
-13.5	0.471683325006789\\
-14.5	0.525746723428601\\
};
\addlegendentry{CL-OMP}

\addplot [color=mycolor1, dashed, line width=1.0pt, mark size=2.5pt, mark=o, mark options={solid, mycolor1}]
  table[row sep=crcr]{%
-0.499999999999999	0.538891868471689\\
-1.5	0.565308489746048\\
-2.5	0.591732597481262\\
-3.5	0.618691849990562\\
-4.5	0.645496298154972\\
-5.5	0.671581448468015\\
-6.5	0.696206206633849\\
-7.5	0.718586085453142\\
-8.5	0.739201326498302\\
-9.5	0.757663778761032\\
-10.5	0.777110272234118\\
-11.5	0.79254165195088\\
-12.5	0.802444624084474\\
-13.5	0.808368694383029\\
-14.5	0.808743571076896\\
};
\addlegendentry{SBL}

\addplot [color=mycolor2, dashed, line width=1.0pt, mark size=2.8pt, mark=x, mark options={solid, mycolor2}]
  table[row sep=crcr]{%
-0.499999999999999	0.558749542004811\\
-1.5	0.58460844177736\\
-2.5	0.609622625646077\\
-3.5	0.63495018886602\\
-4.5	0.659980435028589\\
-5.5	0.683840054837618\\
-6.5	0.706829310362303\\
-7.5	0.726927401531616\\
-8.5	0.74587709911103\\
-9.5	0.763177388380863\\
-10.5	0.781232976923681\\
-11.5	0.795026414452841\\
-12.5	0.804304500293435\\
-13.5	0.808583162912327\\
-14.5	0.806451658616813\\
};
\addlegendentry{SAMV2}

\addplot [color=orange, dashed, line width=1.0pt, mark size=2.4pt, mark=square, mark options={solid, orange}]
  table[row sep=crcr]{%
-0.499999999999999	0.219111321064401\\
-1.5	0.225478693075153\\
-2.5	0.234309780789725\\
-3.5	0.246625480667405\\
-4.5	0.263750333984394\\
-5.5	0.287456426512931\\
-6.5	0.319967629842649\\
-7.5	0.363992903967246\\
-8.5	0.423786839471285\\
-9.5	0.503850946233926\\
-10.5	0.613942288961214\\
-11.5	0.762183971578173\\
-12.5	0.948525341625369\\
-13.5	1.1880847696013\\
-14.5	1.50397676507501\\
};
\addlegendentry{IAA}

\addplot [color=black, line width=1.0pt]
  table[row sep=crcr]{%
-0.499999999999999	0.209569362996591\\
-1.5	0.212058552151707\\
-2.5	0.215198472992464\\
-3.5	0.219160822214229\\
-4.5	0.224163253323801\\
-5.5	0.230481850501755\\
-6.5	0.238466951974838\\
-7.5	0.248563138448791\\
-8.5	0.261334348654993\\
-9.5	0.277495254493666\\
-10.5	0.297950255383875\\
-11.5	0.323841796065847\\
-12.5	0.356610254858569\\
-13.5	0.398068459756376\\
-14.5	0.450494984205369\\
};
\addlegendentry{CRLB}

\end{axis}
\end{tikzpicture}%}
\centerline{\definecolor{mycolor1}{rgb}{0.92900,0.69400,0.12500}%
\definecolor{mycolor2}{rgb}{0.46600,0.67400,0.18800}%
\definecolor{mycolor3}{rgb}{1.00000,0.64706,0.00000}%
\begin{tikzpicture}

\begin{axis}[%
width= 0.78\columnwidth, 
height= 4.7cm, 
 tick label style={font=\scriptsize} , 
 scale only axis,
xmin=-14.5,
xmax=-0.499999999999999,
xlabel style={font=\small\color{white!15!black}},
xlabel={SNR (dB)},
ymin=0.209569362996591,
ymax=1.47458722603997,
ylabel style={font=\small\color{white!15!black}},
ylabel={Root NMSE of $\hat{\boldsymbol{\sigma}}^2_s$},
axis background/.style={fill=white},
xmajorgrids,
ymajorgrids,
legend style={legend cell align=left, align=left, draw=white!15!black,font=\footnotesize}
]
\addplot [color=blue, line width=1.0pt, mark size=3.1pt, mark=triangle, mark options={solid, rotate=180, blue}]
  table[row sep=crcr]{%
-0.499999999999999	0.218496627651914\\
-1.5	0.223639825193967\\
-2.5	0.230997741563245\\
-3.5	0.241502502979147\\
-4.5	0.256474989437246\\
-5.5	0.277746897045324\\
-6.5	0.307618137479862\\
-7.5	0.349429643065747\\
-8.5	0.405523447243669\\
-9.5	0.484246409780907\\
-10.5	0.592176512365628\\
-11.5	0.730912197348299\\
-12.5	0.915814288533881\\
-13.5	1.157485930739\\
-14.5	1.47418538041037\\
};
\addlegendentry{CL-BCD}

\addplot [color=red, line width=1.0pt, mark size=3.1pt, mark=triangle, mark options={solid, rotate=90, red}]
  table[row sep=crcr]{%
-0.499999999999999	0.211525565439853\\
-1.5	0.214026394139425\\
-2.5	0.216947160100509\\
-3.5	0.220771365381292\\
-4.5	0.225597683508018\\
-5.5	0.231774519418286\\
-6.5	0.23949356876984\\
-7.5	0.249250246716623\\
-8.5	0.261478845775611\\
-9.5	0.283737590467341\\
-10.5	0.311589889345929\\
-11.5	0.350352161594472\\
-12.5	0.391360343472384\\
-13.5	0.446061330561143\\
-14.5	0.498978062715913\\
};
\addlegendentry{CL-OMP}

\addplot [color=mycolor1, dashed, line width=1.0pt, mark size=2.5pt, mark=o, mark options={solid, mycolor1}]
  table[row sep=crcr]{%
-0.499999999999999	0.54000117810613\\
-1.5	0.565678449314794\\
-2.5	0.592303254697768\\
-3.5	0.621256588941069\\
-4.5	0.649631790922571\\
-5.5	0.674394050721744\\
-6.5	0.698604747460859\\
-7.5	0.720859948125031\\
-8.5	0.74104757584226\\
-9.5	0.761025599260156\\
-10.5	0.777670543591742\\
-11.5	0.790652169021518\\
-12.5	0.801919025960602\\
-13.5	0.808361580835318\\
-14.5	0.809050690205219\\
};
\addlegendentry{SBL}

\addplot [color=mycolor2, dashed, line width=1.0pt, mark size=2.8pt, mark=x, mark options={solid, mycolor2}]
  table[row sep=crcr]{%
-0.499999999999999	0.559212760468967\\
-1.5	0.584785043241458\\
-2.5	0.610117862335192\\
-3.5	0.637166797190712\\
-4.5	0.663540542111602\\
-5.5	0.686448337401553\\
-6.5	0.708570042834985\\
-7.5	0.729129293369182\\
-8.5	0.748450689836374\\
-9.5	0.766411697227064\\
-10.5	0.781546507662262\\
-11.5	0.793527217155043\\
-12.5	0.802925264378727\\
-13.5	0.808670542997998\\
-14.5	0.807808825706335\\
};
\addlegendentry{SAMV2}

\addplot [color=orange, dashed, line width=1.0pt, mark size=2.4pt, mark=square, mark options={solid, orange}]
  table[row sep=crcr]{%
-0.499999999999999	0.218533299214543\\
-1.5	0.223680175166385\\
-2.5	0.231041064970267\\
-3.5	0.241551532433412\\
-4.5	0.256525958283945\\
-5.5	0.277796341673809\\
-6.5	0.307664365719193\\
-7.5	0.349466475639981\\
-8.5	0.405548773983028\\
-9.5	0.48425705348468\\
-10.5	0.592169743143692\\
-11.5	0.730911364020034\\
-12.5	0.916264424414823\\
-13.5	1.15780069996333\\
-14.5	1.47458722603997\\
};
\addlegendentry{IAA}

\addplot [color=black, line width=0.8pt]
  table[row sep=crcr]{%
-0.499999999999999	0.209569362996591\\
-1.5	0.212058552151707\\
-2.5	0.215198472992464\\
-3.5	0.219160822214229\\
-4.5	0.224163253323801\\
-5.5	0.230481850501755\\
-6.5	0.238466951974838\\
-7.5	0.248563138448791\\
-8.5	0.261334348654993\\
-9.5	0.277495254493666\\
-10.5	0.297950255383875\\
-11.5	0.323841796065847\\
-12.5	0.356610254858569\\
-13.5	0.398068459756376\\
-14.5	0.450494984205369\\
};
\addlegendentry{CRLB}

\end{axis}
\end{tikzpicture}%}
\caption{Root NMSE of  source signal power $\hat{\boldsymbol{\sigma}}^2_s$  vs SNR in two sources case for 
uncorrelated sources (top panel) and correlated  sources (bottom panel) for $\rho=0.95$;  $L=25$, $N=20$,  $M=1801$ ($\Delta \theta=0.1^o)$, $\theta_1 = -20.02^o$, $\theta_2=3.02^o$, and the $2$nd  source has a  $3$ dB higher power. }    \label{fig:2source_MSEgamma}
\end{figure}
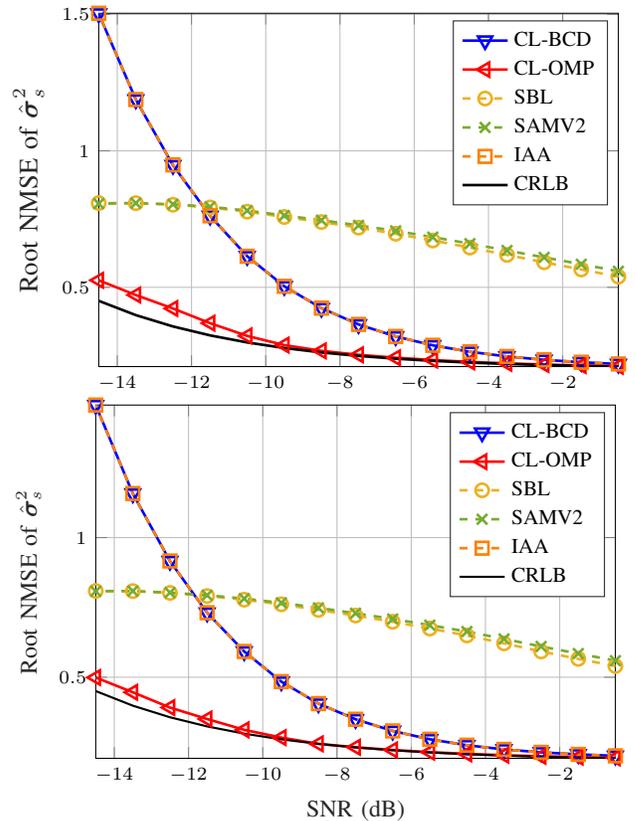

\begin{figure}[!t]
\centerline{\definecolor{mycolor1}{rgb}{0.92900,0.69400,0.12500}%
\definecolor{mycolor2}{rgb}{0.46600,0.67400,0.18800}%
\definecolor{mycolor3}{rgb}{1.00000,0.64706,0.00000}%
\definecolor{mycolor4}{rgb}{0.34200,0.87200,0.50300}%
\definecolor{mycolor5}{rgb}{1.00000,0.00000,1.00000}%

\begin{tikzpicture}

\begin{axis}[%
width= 0.78\columnwidth, 
height= 4.7cm, 
 tick label style={font=\scriptsize} , 
scale only axis,
xmin=-19.5,
xmax=-6.5,
xlabel style={font=\small\color{white!15!black}},
xlabel={},
ymode=log,
ymin=0.129066440410971,
ymax=62.3412563075208,
yminorticks=true,
ylabel style={font=\small\color{white!15!black}},
ylabel={RMSE of $\hat{\boldsymbol{\theta}}$ (deg.)},
axis background/.style={fill=white},
xmajorgrids,
ymajorgrids,
yminorgrids,
%xticklabel=\empty,% <- after symbolic x coords
legend style={legend cell align=left, align=left, draw=white!15!black,font=\scriptsize, at={(0.3,0.58)}}
]
\addplot [color=blue, line width=1.0pt, mark size=2.5pt, mark=triangle, mark options={solid, rotate=180, blue}]
  table[row sep=crcr]{%
-6.5	0.266881996395406\\
-7.5	0.295181300220723\\
-8.5	0.327475189899932\\
-9.5	0.364790350749573\\
-10.5	0.408820253901389\\
-11.5	0.45401541824039\\
-12.5	0.509886261827082\\
-13.5	5.18411496786095\\
-14.5	13.2922502233444\\
-15.5	25.8990280512609\\
-16.5	37.8086029099199\\
-17.5	47.1684380491871\\
-18.5	55.1917228758081\\
-19.5	61.9296481178442\\
};
\addlegendentry{CL-BCD}

\addplot [color=red, line width=1.0pt, mark size=2.5pt, mark=triangle, mark options={solid, rotate=90, red}]
  table[row sep=crcr]{%
-6.5	0.279306283495371\\
-7.5	0.295553717621682\\
-8.5	0.312160215274139\\
-9.5	0.337031155829838\\
-10.5	0.367937494691689\\
-11.5	0.4075487700877\\
-12.5	0.457117052843138\\
-13.5	0.518721505241489\\
-14.5	2.51300218861823\\
-15.5	4.89694537441454\\
-16.5	9.76657616567852\\
-17.5	16.7007741137948\\
-18.5	22.3985891966436\\
-19.5	27.2971223391771\\
};
\addlegendentry{CL-OMP}

\addplot [color=mycolor1, dashed, line width=1.0pt, mark size=2.5pt, mark=o, mark options={solid, mycolor1}]
  table[row sep=crcr]{%
-6.5	0.184081503687901\\
-7.5	0.219517653048675\\
-8.5	0.256651514704274\\
-9.5	0.303940783706296\\
-10.5	0.359354977703102\\
-11.5	1.02292717238325\\
-12.5	1.05755850901971\\
-13.5	2.54796742522348\\
-14.5	4.78708094771751\\
-15.5	7.81779380643925\\
-16.5	11.5369772470955\\
-17.5	14.9912296360239\\
-18.5	17.9196545725636\\
-19.5	20.8958398730465\\
};
\addlegendentry{SBL}

\addplot [color=mycolor2, dashed, line width=1.0pt, mark size=2.8pt, mark=x, mark options={solid, mycolor2}]
  table[row sep=crcr]{%
-6.5	0.18257053431482\\
-7.5	0.216037033862249\\
-8.5	0.25150745515789\\
-9.5	0.297106041675355\\
-10.5	0.347882163957851\\
-11.5	0.407450610503896\\
-12.5	1.04956848275851\\
-13.5	1.63580255532261\\
-14.5	3.70289562369776\\
-15.5	7.41051874027723\\
-16.5	11.2219201565507\\
-17.5	14.8694231226366\\
-18.5	17.7139534830596\\
-19.5	20.752934924969\\
};
\addlegendentry{SAMV2}

\addplot [color=orange, dashed, line width=1.0pt, mark size=2.5pt, mark=square, mark options={solid, orange}]
  table[row sep=crcr]{%
-6.5	0.265514594702433\\
-7.5	0.293969386161207\\
-8.5	0.326609859006118\\
-9.5	0.363089520641944\\
-10.5	0.40755613110343\\
-11.5	0.453014348558626\\
-12.5	0.509185624306103\\
-13.5	5.4708193170676\\
-14.5	13.9227193464495\\
-15.5	26.5560175478177\\
-16.5	38.3190414285118\\
-17.5	47.6756506195773\\
-18.5	55.5184555620922\\
-19.5	62.3412563075208\\
};
\addlegendentry{IAA}

\addplot [color=mycolor4, line width=1.0pt, mark size=2.9pt, mark=diamond, mark options={solid, mycolor4}]
  table[row sep=crcr]{%
-6.5	3.3070149682153\\
-7.5	3.54592357503656\\
-8.5	5.1774874215202\\
-9.5	6.02678388529073\\
-10.5	6.84861533450376\\
-11.5	7.3584191236977\\
-12.5	7.70242013915107\\
-13.5	8.54492972469638\\
-14.5	10.3264911756123\\
-15.5	12.6765078787496\\
-16.5	15.0855823222042\\
-17.5	18.4511119990097\\
-18.5	22.3886218423556\\
-19.5	26.5042582616454\\
};
\addlegendentry{MUSIC}

\addplot [color=mycolor5, dashed, line width=1.0pt, mark size=2.8pt, mark=+, mark options={solid, mycolor5}]
  table[row sep=crcr]{%
-6.5	8.70286074572803\\
-7.5	9.14025369079303\\
-8.5	11.1953195761331\\
-9.5	12.5366884900691\\
-10.5	13.8316557040697\\
-11.5	14.8500837424432\\
-12.5	15.2500180310455\\
-13.5	16.5300450826907\\
-14.5	17.493146511383\\
-15.5	19.0034434126272\\
-16.5	20.9915981876401\\
-17.5	23.761094217712\\
-18.5	27.0610908859055\\
-19.5	30.5999483963772\\
};
\addlegendentry{R-MUSIC}

\addplot [color=black, line width=1.2pt]
  table[row sep=crcr]{%
-6.5	0.129066440410971\\
-7.5	0.14865999600756\\
-8.5	0.172075895362683\\
-9.5	0.200275472436002\\
-10.5	0.23448911867264\\
-11.5	0.276287954171987\\
-12.5	0.327672482876883\\
-13.5	0.391182427744657\\
-14.5	0.470033334911152\\
-15.5	0.568287277340699\\
-16.5	0.691067077180434\\
-17.5	0.844825944908574\\
-18.5	1.03768741727168\\
-19.5	1.27987415187425\\
};
%\addlegendentry{CRLB}

\end{axis}
\end{tikzpicture}%} \hspace{-2pt}
\centerline{\definecolor{mycolor1}{rgb}{0.92900,0.69400,0.12500}%
\definecolor{mycolor2}{rgb}{0.46600,0.67400,0.18800}%
\definecolor{mycolor3}{rgb}{1.00000,0.64706,0.00000}%
\definecolor{mycolor4}{rgb}{0.34200,0.87200,0.50300}%
\definecolor{mycolor5}{rgb}{1.00000,0.00000,1.00000}%
\begin{tikzpicture}

\begin{axis}[%
width= 0.78\columnwidth, 
height= 4.7cm, 
 tick label style={font=\scriptsize} , 
 scale only axis,
xmin=-14.5,
xmax=-0.499999999999999,
xlabel style={font=\small \color{white!15!black}},
ymode=log,
ymin=0.168119443425159,
ymax=37.3773331579448,
yminorticks=true,
ylabel style={font=\small\color{white!15!black}},
ylabel={RMSE of $\hat{\boldsymbol{\theta}}$ (deg.)},
axis background/.style={fill=white},
xmajorgrids,
ymajorgrids,
yminorgrids,
legend style={legend cell align=left, align=left, draw=white!15!black, font=\scriptsize,at={(0.31,0.58)}}
]
\addplot [color=blue, line width=1.0pt, mark size=2.2pt, mark=triangle, mark options={solid, rotate=180, blue}]
  table[row sep=crcr]{%
-0.499999999999999	0.210076176659798\\
-1.5	0.23628795991332\\
-2.5	0.265827011419077\\
-3.5	0.300596074491997\\
-4.5	0.340373324454191\\
-5.5	0.38587821913137\\
-6.5	0.440520147098856\\
-7.5	2.38916805603959\\
-8.5	3.76258661029883\\
-9.5	7.04918903137091\\
-10.5	12.3023104334105\\
-11.5	18.198901560259\\
-12.5	23.9675668769276\\
-13.5	30.9389142020207\\
-14.5	37.0803136718124\\
};
\addlegendentry{CL-BCD}

\addplot [color=red, line width=1.0pt, mark size=2.5pt, mark=triangle, mark options={solid, rotate=90, red}]
  table[row sep=crcr]{%
-0.499999999999999	0.28903978964841\\
-1.5	0.302608658170907\\
-2.5	0.321238229356341\\
-3.5	0.344801392108554\\
-4.5	0.373432724864864\\
-5.5	0.407899497425527\\
-6.5	0.451216134463294\\
-7.5	0.509780344854524\\
-8.5	0.726618194101962\\
-9.5	4.58435666151751\\
-10.5	7.60828508929574\\
-11.5	12.1631391507291\\
-12.5	16.3763585085329\\
-13.5	21.5471969870793\\
-14.5	26.4750131255869\\
};
\addlegendentry{CL-OMP}

\addplot [color=mycolor1, dashed, line width=1.0pt, mark size=2.5pt, mark=o, mark options={solid, mycolor1}]
  table[row sep=crcr]{%
-0.499999999999999	0.191096834092037\\
-1.5	0.217651096941871\\
-2.5	0.24788707106261\\
-3.5	0.779231672867577\\
-4.5	0.793492280995853\\
-5.5	0.8164324834302\\
-6.5	1.4550292093288\\
-7.5	1.97880418434973\\
-8.5	4.06005492573685\\
-9.5	6.90508204151117\\
-10.5	9.64276754879011\\
-11.5	12.2401063720868\\
-12.5	15.1245717955914\\
-13.5	18.4341496684821\\
-14.5	21.1362743642298\\
};
\addlegendentry{SBL}

\addplot [color=mycolor2, dashed, line width=1.0pt, mark size=2.5pt, mark=x, mark options={solid, mycolor2}]
  table[row sep=crcr]{%
-0.499999999999999	0.195417501775045\\
-1.5	0.222027025382046\\
-2.5	0.766061355245128\\
-3.5	0.776462491045126\\
-4.5	0.795653190781011\\
-5.5	1.55792811130681\\
-6.5	1.76502407915586\\
-7.5	3.02644610062694\\
-8.5	5.42502958517278\\
-9.5	7.79309271085619\\
-10.5	10.2248334949768\\
-11.5	12.6432675365192\\
-12.5	15.6751302386934\\
-13.5	18.9236392905804\\
-14.5	21.3262664805634\\
};
\addlegendentry{SAMV2}

\addplot [color=orange, dashed, line width=1.0pt, mark size=2.5pt, mark=square, mark options={solid,orange}]
  table[row sep=crcr]{%
-0.499999999999999	0.209513722700922\\
-1.5	0.235673502965434\\
-2.5	0.265702841535423\\
-3.5	0.300346466601487\\
-4.5	0.340035292285959\\
-5.5	0.385141532426714\\
-6.5	0.440145430511323\\
-7.5	2.38899811636594\\
-8.5	3.76384457702494\\
-9.5	7.04995205657457\\
-10.5	12.3017953161317\\
-11.5	18.3118191887098\\
-12.5	24.080002284053\\
-13.5	31.2316955671638\\
-14.5	37.3773331579448\\
};
\addlegendentry{IAA}

\addplot [color=mycolor4, line width=1.0pt, mark size=2.9pt, mark=diamond, mark options={solid, mycolor4}]
  table[row sep=crcr]{%
-0.499999999999999	3.49998828569468\\
-1.5	3.83797863464611\\
-2.5	5.72409136894232\\
-3.5	6.55983765043007\\
-4.5	7.54534015137819\\
-5.5	7.91382537083046\\
-6.5	8.2653431870673\\
-7.5	10.0281106894569\\
-8.5	11.578511389639\\
-9.5	13.4488077538494\\
-10.5	14.92246045396\\
-11.5	17.4348544014569\\
-12.5	20.7108276512553\\
-13.5	23.9391085882495\\
-14.5	27.3683472646779\\
};
\addlegendentry{MUSIC}

\addplot [color=mycolor5, dashed, line width=1.0pt, mark size=2.8pt, mark=+, mark options={solid, mycolor5}]
  table[row sep=crcr]{%
-0.499999999999999	6.80463303040512\\
-1.5	8.84073427334335\\
-2.5	10.4719637159616\\
-3.5	11.9861875101782\\
-4.5	13.0844864060019\\
-5.5	13.8347682378829\\
-6.5	15.0583080873639\\
-7.5	16.291524627314\\
-8.5	18.0240666415937\\
-9.5	19.9197015474798\\
-10.5	21.2965378473326\\
-11.5	23.2478305000951\\
-12.5	25.7012013437781\\
-13.5	28.3923948317339\\
-14.5	32.1761754912029\\
};
\addlegendentry{R-MUSIC}

\addplot [color=black, line width=0.8pt]
  table[row sep=crcr]{%
-0.499999999999999	0.168119443425159\\
-1.5	0.190137562263664\\
-2.5	0.215444079457874\\
-3.5	0.244674794046316\\
-4.5	0.278630273499131\\
-5.5	0.318325648562051\\
-6.5	0.365055020952828\\
-7.5	0.420473965072443\\
-8.5	0.486704129958801\\
-9.5	0.566464578659347\\
-10.5	0.663235383711524\\
-11.5	0.78146034382068\\
-12.5	0.926797738601908\\
-13.5	1.10643098935706\\
-14.5	1.32945503399761\\
};
%\addlegendentry{CRLB}

\end{axis}
\end{tikzpicture}%} 
\caption{RMSE of $\hat{\boldsymbol{\theta}}$ in the correlated 2 sources ($\rho=0.95$) case when $L=200$ (top panel) and  $L=25$ (bottom panel); $N=20$,  $M=1801$ ($\Delta \theta=0.1^o)$,  $\theta_1 = -20.02^o$ and $\theta_2=3.02^o$, and the $2$nd  source has a $3$dB higher power.}    \label{fig:2sourceCC_MSEdoa2}
\end{figure}
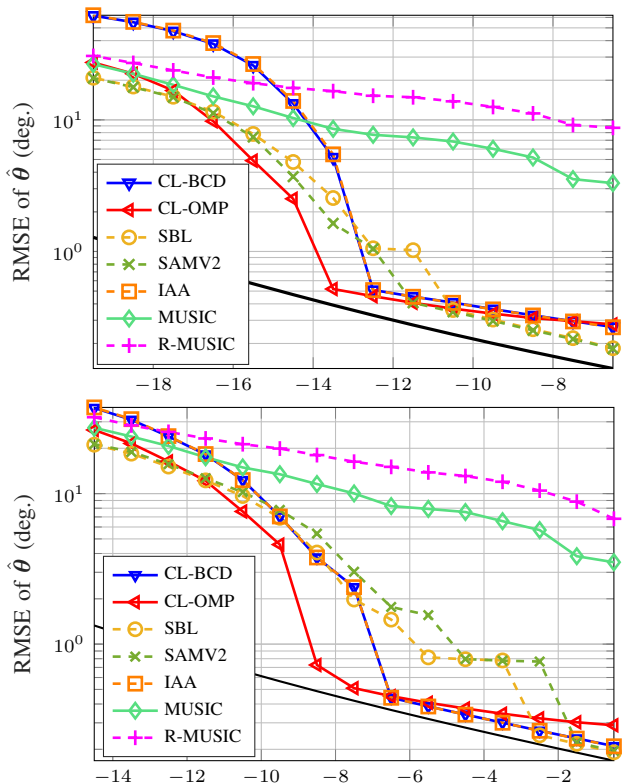

In our last  set-up we have $K=4$ sources at DOAs  $\theta_1 = -30.1^o$, $\theta_2 = -20.02^o$, $\theta_3=-10.02^o$ and $\theta_4=3.02^o$.  Note that all, except the  $1^{\text{st}}$  source, is off the predefined grid.  The SNR of the last 3 sources are $-1$, $-2$ and $-5$ dB  relative to the $1^{\text{st}}$ source.  \autoref{fig:4sources_CRB_vs_SNR} displays the RMSE $\| \hat{\boldsymbol{\theta}}  - \boldsymbol{\theta}  \|$  vs SNR when $N=20$ and $M=1801$ (as earlier) and snapshot size is $L=125$. Here we excluded SAMV2 and SBL due to their heavy computation time in this setting. As can be noted, CL-OMP outperforms other methods by a  large margin. Also CL-BCD and IAA can with stand low SNR better than MUSIC and R-MUSIC and at high SNR  all methods perform similarly.

\begin{figure}[!t]
\centerline{\definecolor{mycolor1}{rgb}{0.34200,0.87200,0.50300}%
\definecolor{mycolor2}{rgb}{1.00000,0.00000,1.00000}%
\definecolor{mycolor3}{rgb}{1.00000,0.64706,0.00000}%
\begin{tikzpicture}

\begin{axis}[%
width= 0.78\columnwidth, 
height= 4.7cm, 
 tick label style={font=\scriptsize} , 
scale only axis,
xmin=-14,
xmax=-8,
xlabel style={font=\color{white!15!black}},
xlabel={SNR (dB)},
ymode=log,
ymin=0.302734348545646,
ymax=27.9262727373346,
yminorticks=true,
ylabel style={font=\color{white!15!black}},
ylabel={RMSE of DOA (deg.)},
axis background/.style={fill=white},
xmajorgrids,
ymajorgrids,
yminorgrids,
legend style={legend cell align=left, align=left, draw=white!15!black,font=\scriptsize}
]
\addplot [color=blue, line width=1.0pt, mark size=2.5pt, mark=triangle, mark options={solid, rotate=180, blue}]
  table[row sep=crcr]{%
-8	0.326788922700876\\
-8.5	0.350402625560938\\
-9	0.375017332932758\\
-9.5	0.404622045865026\\
-10	0.436713865133681\\
-10.5	0.472422480413452\\
-11	3.63331185008939\\
-11.5	4.92270332236263\\
-12	8.88121326171148\\
-12.5	12.7667772362488\\
-13	17.3865941748233\\
-13.5	22.4883968303657\\
-14	27.6742542988967\\
};
\addlegendentry{CL-BCD}

\addplot [color=red, line width=1.0pt, mark size=2.8pt, mark=triangle, mark options={solid, rotate=90, red}]
  table[row sep=crcr]{%
-8	0.337170579973995\\
-8.5	0.358134053114193\\
-9	0.38245391879284\\
-9.5	0.409962193378853\\
-10	0.439881802306029\\
-10.5	0.471908889511525\\
-11	0.511818327143529\\
-11.5	0.554478133022394\\
-12	0.755129128030431\\
-12.5	2.30799718370712\\
-13	4.92532780634955\\
-13.5	6.66630294841151\\
-14	8.57881979062388\\
};
\addlegendentry{CLOMP}

\addplot [color=orange, dashed, line width=1.0pt, mark size=2.5pt, mark=square, mark options={solid,orange}]
  table[row sep=crcr]{%
-8	0.32669710742521\\
-8.5	0.35032841734578\\
-9	0.374975999231949\\
-9.5	0.404605981171805\\
-10	0.436601649103618\\
-10.5	0.472591790026021\\
-11	3.63328749206555\\
-11.5	4.92270545533653\\
-12	9.22845306646786\\
-12.5	13.0264041085788\\
-13	17.7105258250567\\
-13.5	22.7060259182447\\
-14	27.9262727373346\\
};
\addlegendentry{IAA}

\addplot [color=mycolor1, line width=1.0pt, mark size=2.5pt, mark=diamond, mark options={solid, mycolor1}]
  table[row sep=crcr]{%
-8	0.324166623821762\\
-8.5	0.350968659569483\\
-9	0.381727651605172\\
-9.5	0.415977162834692\\
-10	0.790018987113599\\
-10.5	1.04248357301206\\
-11	1.86502144759785\\
-11.5	4.03180505481106\\
-12	5.08478141123097\\
-12.5	6.89771302679373\\
-13	9.88460166116975\\
-13.5	12.3399005263414\\
-14	14.3361887543377\\
};
\addlegendentry{MUSIC}

\addplot [color=mycolor2, dashed, line width=1.0pt, mark size=2.8pt, mark=+, mark options={solid, mycolor2}]
  table[row sep=crcr]{%
-8	0.31840567669074\\
-8.5	0.345746915866651\\
-9	0.376726032715818\\
-9.5	0.412253580944465\\
-10	0.660143684280998\\
-10.5	1.83441307250994\\
-11	2.7302437236324\\
-11.5	4.34924959731924\\
-12	6.84862509244044\\
-12.5	9.33112785283665\\
-13	11.5782480957215\\
-13.5	13.8948475283365\\
-14	16.5941174180238\\
};
\addlegendentry{R-MUSIC}

\addplot [color=black, line width=1.0pt]
  table[row sep=crcr]{%
-8	0.302734348545646\\
-8.5	0.326413432578675\\
-9	0.352447368950927\\
-9.5	0.381121974064003\\
-10	0.412760088446054\\
-10.5	0.447726141825284\\
-11	0.486431236697659\\
-11.5	0.529338812425413\\
-12	0.576970961121026\\
-12.5	0.629915476887484\\
-13	0.68883373140509\\
-13.5	0.754469481405001\\
-14	0.827658727316119\\
};
\addlegendentry{CRLB}

\end{axis}
\end{tikzpicture}%}
\caption{RMSE of $\hat{\boldsymbol{\theta}}$  vs SNR in four sources case; $L=125$,  $N=20$,  $M=1801$ ($\Delta \theta=0.1^o)$,$\boldsymbol{\theta} =(-30.1^o, -20.02^o, -10.02^o, 3.02^o)$.}  \label{fig:4sources_CRB_vs_SNR}
\end{figure}
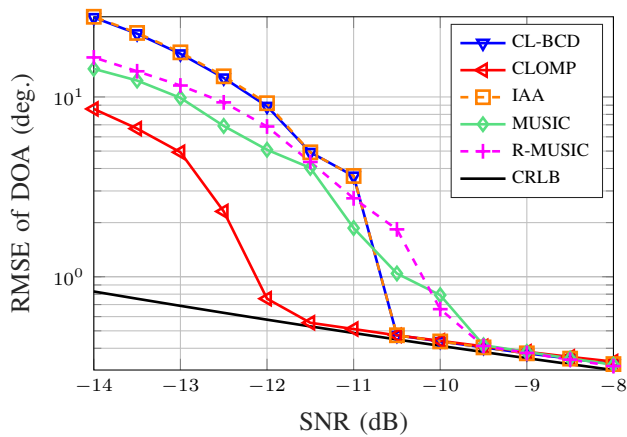

\section{Conclusions}  \label{sec:concl}

In this paper we revisited the covariance learning (CL) approach for SSR. First, relying on the Gaussian likelihood model and a known sensing matrix $\A$, we  leveraged on  FP equation outlined in Lemma~\ref{lem} when deriving the BCD algorithm which updates $\gam$  using FP-algorithm and $\sigma^2$ using MLE (cf. Lemma~2). 
 Next we proposed a  greedy pursuit CL algorithm, which we referred to CL-OMP due to its similarity with the (S)OMP algorithm.  In  the simulation study it was observed that CL-OMP outperforms  its greedy-pursuit counterparts  (SOMP or SNIHT) as well as other benchmark CL-based methods. 

Subsequently, we evaluated the performance of the methods in SSR-based source localization, where the $K$-sparsity constraint is replaced with a $K$-peaksparsity constraint. We compared against state-of-the-art methods  under a broad variaty of settings and the Cram\'er-Rao lower bound was used as a benchmark. The proposed CL-OMP was again the best performing method. Overall, the two proposed CL-algorithms outperformed competing methods especially in the low SNR regime. CL-OMP had the best breakdown behavior than other methods in low SNR.  
 Additionally, the proposed CL algorithms demonstrated solid  performance in accurately estimating the signal powers. 

There are many open questions to be addressed in future works. For example, (a) estimation of number of sources $K$; (b) deriving performance guarantees (e.g., bounds on the probability of support recovery of CL-based methods);  (c) addressing the non-Gaussian scenarios and grid-free methods; (d) developing other greedy pursuit CL methods analogous to CoSaMP \cite{needell_tropp:2009},  gOMP \cite{wang2012generalized}, etc.  Regarding items (b) and (c), works in \cite{tang2010performance} or \cite{mecklenbrauker2023robust} provide good starting points. Furthermore, there are several applications where the proposed CL-based sparse reconstructions could be useful. For example, they can be used as atom selection method in coupled dictionary learning algorithm  \cite{veshki2022multimodal} or  activity detection method for  massive machine type communications problems  \cite{liu2018sparse,haghighatshoar2018improved,fengler2021non}. These explorations are left as future works.

% Generated by IEEEtran.bst, version: 1.14 (2015/08/26)

\end{document}